\documentclass[reqno]{amsart}

\usepackage{amssymb,amsmath,amscd,amsthm}

\usepackage{mathrsfs}
\usepackage{latexsym}

\usepackage{hyperref}

\oddsidemargin-0.2cm
\evensidemargin-0.2cm
\textwidth15cm
\textheight24cm
\topmargin-1cm

\newtheoremstyle{myplain}{}{}{\it}
{0pt}{\scshape}{}{ }{\thmname{#1}\thmnumber{ #2}\thmnote{ (#3)}}
\newtheoremstyle{mydefinition}{}{}{}
{0pt}{\scshape}{}{ }{\thmname{#1}\thmnumber{ #2}\thmnote{ (#3)}}

\theoremstyle{myplain}

    \newtheorem{Def}{Definition}[section]
        \newtheorem{Lem}[Def]{Lemma}
            \newtheorem{theo}[Def]{Theorem}
        \newtheorem{prop}[Def]{Proposition}
        \newtheorem{rem}[Def]{Remark}
        \newtheorem{hyp}[Def]{Hypothesis}

\renewcommand{\qed}{\nopagebreak\hfill$\Box$}

\newcommand{\skp}[2]{\mbox{$\left\langle #1\, , \, #2\right\rangle$}}
\newcommand{\skpd}[2]{\mbox{$\left\langle #1\, ,\,#2\right\rangle_{\ell^2}$}}

\newcommand{\skpR}[2]{\mbox{$\left\langle #1\, ,\,#2\right\rangle_{L^2}$}}
\newcommand{\skpk}[2]{\mbox{$\left\langle #1\, ,\,#2\right\rangle_{\mathcal V}$}}
\newcommand{\skpH}[2]{\mbox{$\left\langle #1\, ,\,#2\right\rangle_{{\mathscr H}_{\tilde{\varphi}}}$}}

\newcommand{\natop}[2]{\genfrac{}{}{0pt}{}{#1}{#2}}

\newcommand{\dnd}[2]{\mbox{$\frac{\partial #1}{\partial #2}$}}

\DeclareMathOperator{\Span}{span}

\DeclareMathOperator{\rank}{rank}
\DeclareMathOperator{\ran}{Ran}
\DeclareMathOperator{\supp}{supp}

\DeclareMathOperator{\Op}{Op}

\DeclareMathOperator{\diag}{diag}

\newcommand{\id}{\mathbf{1}}

\numberwithin{equation}{section}

\newcommand{\beqa}{\begin{eqnarray*}}
\newcommand{\eeqa}{\end{eqnarray*}}
\renewcommand{\hat}{\widehat}
\newcommand{\bauf}{\begin{itemize}}
\newcommand{\eauf}{\end{itemize}}
\newcommand{\be}{\begin{equation}}
\newcommand{\ee}{\end{equation}}
\newcommand{\ben}{\begin{enumerate}}
\newcommand{\een}{\end{enumerate}}
\newcommand{\ra}{\rightarrow}

\renewcommand{\o}{\omega}
\renewcommand{\O}{\Omega}
\newcommand{\ep}{\varepsilon}

\newcommand{\R}{{\mathbb R} }
\newcommand{\Z}{{\mathbb Z}}
\newcommand{\C}{{\mathbb C}}
\newcommand{\N}{{\mathbb N}}
\newcommand{\T}{{\mathbb T}}

\newcommand{\hN}{\tfrac{\mathbb N^*}{2}}
\newcommand{\hNnull}{\tfrac{\mathbb{N}}{2}}
\newcommand{\hZ}{\frac{\mathbb Z}{2}}
\newcommand{\disk}{(\varepsilon {\mathbb Z})^d}
\newcommand{\Hi}{\mathscr H}
\newcommand{\Ce}{\mathscr C}

\newcommand{\De}{\mathscr D}

\title{Asymptotic eigenfunctions for a class of difference operators}

\author{Markus Klein \and Elke Rosenberger}

\address{
Universit\"at Potsdam\\ Institut f\"ur
Mathematik \\ Am Neuen Palais 10\\ 14469 Potsdam}
\email{mklein@math.uni-potsdam.de, erosen@uni-potsdam.de}

\date{\today}

\keywords{Difference operator, tunneling, WKB-expansion, quasimodes}

\begin{document}

\begin{abstract}
We analyze a general class of difference operators $H_\ep = T_\ep
+ V_\ep$ on $\ell^2(\disk)$, where $V_\ep$ is a one-well
potential and $\ep$ is a small parameter. We construct formal asymptotic expansions of
WKB-type
for eigenfunctions associated with the low lying eigenvalues of $H_\ep$. These are obtained
from eigenfunctions or quasimodes for the
operator $H_\ep$, acting on $L^2(\R^d)$, via restriction to the lattice $\disk$.
\end{abstract}

\maketitle

\section{Introduction}

The central topic of this paper is the construction of formal WKB-type expansions of
eigenfunctions for a
rather general class of families of
difference operators $\left(H_\ep\right)_{\ep\in (0,\ep_0]}$ on the Hilbert space
$\ell^2(\disk)$, as the small parameter $\ep>0$
tends to zero. The operator $H_\ep$ is given by
\begin{align} \label{Hepein}
H_\ep &= (T_\ep + V_\ep ),  \quad\text{where}\quad
T_\ep  = \sum_{\gamma\in\disk} a_\gamma \tau_\gamma ,\\
(a_\gamma\tau_\gamma u)(x) &=a_\gamma(x; \ep) u(x+\gamma)  \quad \mbox{for}
\quad x,\gamma\in\disk
\end{align}
and $V_\ep$ is a multiplication operator, which in leading order is given by
$V_0 \in \Ce^\infty (\R^d)$.

This paper is based on the thesis Rosenberger \cite{thesis}. It is the third in a series of
papers
(see Klein-Rosenberger \cite{kleinro}, \cite{kleinro2}); the aim is to develop an analytic
approach
to the semiclassical eigenvalue problem and tunneling for $H_\ep$ which is comparable
in detail and
precision
to the well known analysis for the Schr\"odinger operator (see Simon \cite{Si1},
\cite{Si2} and
Helffer-Sj\"ostrand \cite{hesjo}). Our motivation comes from
stochastic problems (see Klein-Rosenberger \cite{kleinro}, Bovier-Eckhoff-Gayrard-Klein
\cite{begk1}, \cite{begk2},
Baake-Baake-Bovier-Klein \cite{baake}). A large class of
discrete Markov chains analyzed in \cite{begk2}
with probabilistic
techniques falls into the framework of difference operators treated in this article.

In this paper we consider the case of a one-well potential $V_\ep$ and derive formal
asymptotic expansions
for the eigenfunctions $v_j$ and the associated low lying eigenvalues of $H_\ep$.
These lead to good quasimodes for $H_\ep$ (in the precise sense of Theorem \ref{theoEjaj}
below),
which will be crucial for our analysis of the tunneling problem in a subsequent paper (see
 \cite{thesis}).
In general, these expansions contain half-integer powers of $\ep$. As in \cite{hesjo}
for the case of Schr\"odinger operators, we obtain sufficient conditions for the absence
of these half-integer terms.
We approach the construction of asymptotic expansions of type
\[ v_j (x; \ep) \sim e^{-\frac{\varphi (x)}{\ep}} u_j (x; \ep) \]
by conjugation of $H_\ep$ with the exponential weight $e^{-\frac{\varphi (x)}{\ep}}$ and
subsequent
rescaling, using the variable $y=\frac{x}{\sqrt{\ep}}$. Here $\varphi$ is the Finsler distance of $x$ to the
potential well placed at $x=0$, as constructed in Klein-Rosenberger \cite{kleinro}. This leads to an operator $\hat{G}_\ep$
treated at length in Section \ref{kap22}. It is analog to the
approach in Klein-Schwarz \cite{erika} in the case of the Schr\"odinger operator. This more
elementary
approach avoids the use of an additional FBI-transform which was used in the original
WKB-analysis of
Helffer-Sj\"ostrand \cite{hesjo}. We remark that the discrete setting of the present paper
introduces numerous
technical difficulties. The main technical result in this respect is Proposition \ref{TayG}
on the
expansion of $e^{\frac{\varphi (x)}{\ep}} H_\ep e^{-\frac{\varphi (x)}{\ep}}$ (in the variable $y$).

We assume

\begin{hyp}\label{hyp1}
\begin{enumerate}
\item The coefficients $a_\gamma(x; \ep)$ in \eqref{Hepein} are
functions
\begin{equation}\label{agammafunk}
a: \disk \times \R^d \times (0,\ep_0] \ra \R\, , \qquad (\gamma, x,
\ep) \mapsto a_\gamma(x; \ep)\, ,
\end{equation}
for some $\ep_0>0$, satisfying the following conditions:
\ben
\item[(i)] They have an
expansion
\begin{equation}\label{agammaexp}
a_\gamma(x; \ep) = \sum_{k=0}^{N-1} \ep^k a_\gamma^{(k)}(x)  + R^{(N)}_\gamma (x; \ep)\, ,\qquad N\in\N\, ,
\end{equation}
where $a_\gamma \in\Ce^\infty (\R^d\times (0,\ep_0])$ and $a_\gamma^{(k)}\in\Ce^\infty(\R^d)$ for $0\leq k \leq N-1$.
\item[(ii)] $\sum_\gamma a_\gamma^{(0)}  = 0$ and $a_\gamma^{(0)}
\leq 0$ for $\gamma \neq 0$
\item[(iii)] $a_\gamma(x; \ep) =
a_{-\gamma}(x+\gamma; \ep)$ for $x \in \R^d, \gamma \in \disk$
\item[(iv)] For any $c>0$ and $\alpha\in\N^d$ there exists $C>0$ such that for
$0\leq k\leq N-1$ uniformly
with respect to $x\in\R^d$ and $\ep\in (0,\ep_0]$
\begin{equation}\label{abfallagamma}
\| \, e^{\frac{c|.|}{\ep}} \partial_x^\alpha a^{(k)}_.(x)\|_{\ell_\gamma^2(\disk)}
\leq C \qquad\text{and} \qquad
\|\,
 e^{\frac{c|.|}{\ep}} \partial_x^\alpha R^{(N)}_.(x)\|_{\ell^2_\gamma(\disk)}
 \leq C\ep^N
\end{equation}
\item[(v)]
$\Span \{\gamma\in\disk\,|\, a^{(0)}_\gamma(x) <0\}= \R^d$ for $x=0$.
\een
\item
\ben
\item[(i)] The potential energy $V_\ep$ is the restriction to $\disk$ of a
function $\hat{V}_\ep\in\Ce^\infty (\R^d, \R)$, which has an expansion
\begin{equation}\label{hatVell}
\hat{V}_{\ep}(x) = \sum_{\ell=0}^{N-1}\ep^l V_\ell(x) + R_{N}(x;\ep)  \, ,\qquad N\in\N\, ,
\end{equation}
where $V_\ell\in\Ce^\infty(\R^d)$, $R_{N}\in \Ce^\infty (\R^d\times (0,\ep_0])$ for some $\ep_0>0$ and
for any compact set $K\subset \R^d$
there exists a constant $C_K$ such that $\sup_{x\in K} |R_{N}(x;\ep)|\leq C_K \ep^{N}$.
\item[(ii)]
$V_\ep$ is polynomially bounded and there exist constants $R, C > 0$ such that
$V_\ep(x) > C$ for all $|x| \geq R$ and $\ep\in(0,\ep_0]$.
\item[(iii)]
$V_0(x)\geq 0$ and it takes the value $0$ only at the non-degenerate minimum $x_0=0$,
which we call the potential well.
\een
\een
\end{hyp}

If $\T^d := \R^d/(2\pi)\Z^d$ denotes the $d$-dimensional torus and
$b\in \Ce^\infty\left(\R^d\times \T^d\times (0,1]\right)$,
a pseudo-differential operator $\Op_\ep^{\T}(b): {\mathcal K}\left(\disk\right)
\longrightarrow
{\mathcal K}'\left(\disk\right)$ is defined by
\begin{equation}\label{psdo2}
\Op_\ep^{\T}(b)\, v(x) := (2\pi)^{-d} \sum_{y\in\disk}\int_{[-\pi,\pi]^d} e^{\frac{i}{\ep}(y-x)\xi}
b(x,\xi;\ep)v(y) \, d\xi \, ,
\end{equation}
where
\begin{equation}\label{kompaktge}
{\mathcal K}\left(\disk\right):=\{ u: \disk\rightarrow \C\; |\; u~\mbox{has compact support}\}
\end{equation}
and ${\mathcal K}'\left(\disk\right):= \{f: \disk\rightarrow \C\ \} $ is dual to
${\mathcal K}\left(\disk\right)$
by use of the scalar product $\skpd{u}{v}:= \sum_x \bar{u}(x)v(x)$ (see the appendix of
\cite{kleinro2} for the basic theory
of such operators).

We remark that under the assumptions given in Hypothesis \ref{hyp1}, one has for $T_\ep$
defined in
\eqref{Hepein}, $T_\ep = \Op_\ep^{\T}(t(.,.;\ep))$, where
$t\in\Ce^\infty\left(\R^d\times\T^d\times (0,\ep_0]\right)$ is given by
\begin{equation}\label{talsexp}
t(x,\xi; \ep) = \sum_{\gamma\in\disk} a_\gamma (x; \ep) \exp \left(-\frac{i}{\ep}\gamma
\cdot\xi\right)\; .
\end{equation}
Here $t$ is considered as a function on $\R^{2d}\times (0,\ep_0]$, which is
$2\pi$-periodic with respect to $\xi$.

Furthermore, we set
\begin{align}\label{texpand}
t(x,\xi;\ep) &= \sum_{k=0}^{N-1} \ep^k t_k (x,\xi) + \tilde{t}_N(x,\xi;\ep)\; ,\qquad
\text{with}\\
t_k(x,\xi) &:= \sum_{\gamma\in\disk} a_\gamma^{(k)}(x) e^{-\frac{i}{\ep}\gamma\xi}\, ,
\qquad 0\leq k \leq N-1\nonumber\\
\tilde{t}_N(x,\xi;\ep) &:= \sum_{\gamma\in\disk} R_\gamma^{(N)}(x; \ep) e^{-\frac{i}{\ep}
\gamma\xi}\nonumber\; .
\end{align}

Thus, in leading order, the symbol of $H_\ep$ is $h_0:=t_0+V_0$.

\setcounter{Def}{0}
\begin{hyp}
\ben
\item[(c)] We assume that $t_0$ defined in \eqref{texpand} fulfills
\[ t_0(0, \xi) >0\, , \quad\text{if} \quad |\xi| >0 \, .\]
\een
\end{hyp}

A simple example for an operator satisfying Hypothesis \ref{hyp1} is the discrete Laplacian. Here we have
$a_\gamma = -1$ if $|\gamma|=\ep$, $a_0 = 2d$ and $a_\gamma = 0$ else independend of $x$ and $\ep$, leading to
\[ t(x,\xi; \ep) = t_0(x,\xi) = 2 \sum_ {\nu = 1}^d (1-\cos \xi_\nu)\, . \]

\begin{rem}\label{rem1}
It follows from (the proof of) Klein-Rosenberger \cite{kleinro}, Lemma 1.2,
that under the assumptions given in Hypothesis \ref{hyp1}:
\ben
\item $\sup_{x,\xi}|\partial_x^\alpha\partial_\xi^\beta
t(x,\xi; \ep)|\leq C_{\alpha,\beta}$ for all $\alpha,\beta\in\N^d$
uniformly with respect to $\ep$. Moreover $t_k,\, 0\leq k\leq N-1,$ is bounded and
$\sup_{x,\xi}|\tilde{t}_N(x,\xi; \ep)| = O(\ep^N)$.
\item By Hypothesis \ref{hyp1}, (a)(i), the condition (a)(v) holds for all $x$ in a
neighborhood of $0$. This is sufficient to prove all
statements of this paper. For the more global results of \cite{kleinro}, it is necessary to
assume (v) for all $x\in\R^d$.
\item At $\xi=0$, for fixed $x\in\R^d$, the function $t_0$ defined in \eqref{texpand} has an
expansion
\begin{equation}\label{kinen}
t_0(x, \xi) = \skp{\xi}{B(x)\xi} + \sum_{\natop{|\alpha|=2n}{n\geq2}} B_\alpha (x) \xi^\alpha \qquad\text{as}\;\; |\xi|\to 0\, ,
\end{equation}
where $\alpha\in\N^d$, $B\in \Ce^\infty (\R^d, \mathcal{M}(d\times d,\R))$ is positive
definite in a neighborhood of zero,
$B(x)$ is symmetric and $B_\alpha$ are real functions.
By straightforward calculations one gets
for $\mu,\nu\in\N$
\begin{equation}\label{Bnumu}
B_{\nu\mu}(x) = -\frac{1}{2\ep^2} \sum_{\gamma\in\disk} a_\gamma^{(0)}(x)
\gamma_\nu\gamma_\mu\; .
\end{equation}
\item By Hypothesis \ref{hyp1}(a)(iii) and since the $a_\gamma$ are real,  the operator
$T_\ep$ defined in \eqref{Hepein} is symmetric.
In the probabilistic context, which is our main motivation, the former is a standard
reversibility condition while the
latter is automatic for a Markov chain.
Moreover, $T_\ep$ is bounded (uniformly in $\ep$) by condition (a)(iv) and
bounded from below by $-C\ep$ for some $C>0$ by condition (a)(iv),(iii) and (ii).
\item A combination of the expansion \eqref{agammaexp} and the reversibility condition
(a)(iii) establishes
that the $2\pi$-periodic function $\R^d\ni\xi\mapsto t_0(x,\xi)$ is even with respect to
$\xi\mapsto -\xi$, i.e.,
$a_\gamma^{(0)}(x) = a_{-\gamma}^{(0)}(x)$ for all $x\in\R^d, \gamma\in\disk$.
\item By condition (a),(iv) in Hypothesis \ref{hyp1}, the exponential decay of the
coefficients $a_\gamma$ with
respect to $\gamma$, the $2\pi$-periodic function $\R^d\ni\xi \mapsto t(x, \xi; \ep)$ has an
analytic continuation to $\C^d$. Moreover for all $B>0$
\[ \sum_\gamma \left| a_\gamma(x; \ep)\right| e^{\frac{B|\gamma|}{\ep}} \leq C \; .\]
uniformly with respect to $x$ and $\ep$.
This yields in particular
\begin{equation}\label{agammasupnorm}
\sup_{x\in\R^d} |a_\gamma(x; \ep)| \leq C e^{-\frac{B|\gamma|}{\ep}}
\end{equation}
We further remark that condition (a),(iv) implies the estimate $|a_\gamma^{(k)}(x) - a_\gamma^{(k)}(x + h)| \leq C h$ for
$0\leq k \leq N-1$ uniformly with respect to $\gamma\in \disk$ and $h,x\in \R^d$.
\item Since $T_\ep$ is bounded, $H_\ep=T_\ep + V_\ep$ defined in
\eqref{Hepein} possesses a self adjoint realization on the maximal domain of $V_\ep$.
Abusing notation, we
shall denote this realization also by $H_\ep$ and its domain by $\De(H_\ep)\subset
\ell^2\left(\disk\right)$. The associated symbol is denoted by $h(x,\xi;\ep)$.
Clearly, $H_\ep$ commutes with complex conjugation.
\een
\end{rem}

We will use the notation
\begin{equation}\label{agammaunep}
\tilde{a}:\Z^d\times\R^d\ni (\eta,x)\mapsto \tilde{a}_\eta(x) := a_{\ep\eta}^{(0)}(x)\in\R
\end{equation}
and we have by Remark \ref{rem1} (d) and \eqref{agammaunep}
\begin{equation}\label{tildehnull}
-\tilde{h}_0(x, \xi) := h_0(x,i\xi) =  \sum_{\eta\in\Z^d} \tilde{a}_\eta(x) \cosh
\left(\eta\cdot \xi\right) + V_0(x)\,:\, \R^{2d} \ra \R\; .
\end{equation}

As shown in \cite{kleinro}, the eigenfunctions associated to the first eigenvalues are
localized in a small neighborhood of the potential well.

We shall construct asymptotic expansions of WKB-type for the eigenfunctions and associated
low lying eigenvalues of
$H_\ep$. It is crucial for our approach that our actual constructions of quasimodes are done
for the operator on $L^2(\R^d)$
\begin{equation}\label{hutHep}
\hat{H}_\ep = \hat{T}_\ep +  \hat{V}_\ep\, , \qquad
\hat{T}_\ep = \sum_{\gamma\in\disk} a_\gamma(x;\ep) \tau_\gamma \; ,
\end{equation}
where $a_\gamma (x;\ep)$ and $\hat{V}_\ep$ satisfy Hypothesis \ref{hyp1}.
Alternatively, $\hat{H}_\ep = \Op_\ep (t + \hat{V}_\ep)$, where $\Op_\ep$ denotes the usual
$\ep$-dependent quantization for symbols in $S_\delta^r(m)(\R^{2d})$ (see \cite{kleinro2}) and
the symbol $t$ is defined in \eqref{talsexp}. For $\hat{H}_\ep$, it is easy to change
coordinates by
translations and rotations. One then observes that restriction of an eigenfunction (or a
quasi-mode) of
$\hat{H}_\ep$ to the lattice $\disk$ gives an eigenfunction (or a quasi-mode) of $H_\ep$ with
the same (approximate)
eigenvalue.
More precisely, for $x_0\in\R^d$, let us denote by $\mathscr{G}_{x_0}= \disk + x_0$ the
corresponding
affine lattice. Then $H_\ep$ acts in a natural way on $\ell^2(\mathscr{G}_{x_0})$, since
restriction to $\mathscr{G}_{x_0}$ and translation by $\gamma\in\disk$ commutes.
If $r_{\mathscr{G}_{x_0}}$ denotes
restriction to $\mathscr{G}_{x_0}$, we have
\begin{equation}\label{Gxnull}
 H_\ep \, r_{\mathscr{G}_{x_0}} = r_{\mathscr{G}_{x_0}}\, \hat{H}_\ep \, , \qquad
x_0\in \R^d\, .
\end{equation}
We remark that in this context the choice of $x_0 =0$ in Hypothesis \ref{hyp1},(b)(iii) is
arbitrary.
In this paper we shall systematically analyze spectrum and eigenfunctions of $H_\ep$ by
constructing quasimodes
for $\hat{H}_\ep$. We shall, however, not discuss the spectrum and the eigenfunctions of
$\hat{H}_\ep$
as an operator on $L^2(\R^d)$. This would involve a discussion of the infinite degeneracy of
each eigenvalue of
$\hat{H}_\ep$, which is not relevant in the context of this paper.

We denote the operator associated to the symbol $h_0$ by
\begin{equation}\label{Hnull}
 \hat{H}_0 u(x) := \sum_{\gamma\in\disk} a_\gamma^{(0)}(x) u(x+\gamma) + V_0(x)  u(x)  \; .
\end{equation}
Furthermore for $B$ defined in \eqref{Bnumu} we set
$B_0:= B(0)$.
The harmonic approximation of $\hat{H}_\ep$ (associated with a small
neighborhood of $(x,\xi) = (0,0)$, see \cite{kleinro2}) is given
by
\begin{equation}\label{Hnullhut}
 \hat{H}_q^0(x,\ep D) = -\ep^2 \skp{D}{B_0 D} + \skp{x}{Ax} + \ep (t_1(0,0) + V_1(0))\; ,
\end{equation}
where $A=D^2 V_0(0)$.
It follows from the assumptions given in Hypotheses \ref{hyp1} that the matrix,
$\tilde{A}:= B_0^{\frac{1}{2}}AB_0^{\frac{1}{2}}$ is symmetric and there exists an orthogonal
matrix
$R\in\text{SO}(d,\R)$ such that $R\tilde{A}R^t =\Lambda$, where
$\Lambda =\diag (\lambda_1^2, \ldots, \lambda_d^2)$ and $\lambda_\nu >0$ for $1\leq\nu\leq d$.
Therefore, by means of the unitary transformation
\begin{equation}\label{unitrans}
  Uf(x) = \sqrt{|B_0^{-\frac{1}{2}}|} f(R B_0^{-\frac{1}{2}} x) \; ,
\end{equation}
$\hat{H}^0_q$ is unitarily equivalent to the associated harmonic oscillator
\begin{equation}\label{Hharmonisch}
\hat{H}_q^{'0}(x,\ep D) := -\ep^2\Delta + \sum_{\nu =1}^d\lambda_\nu x_\nu^2 +
\ep (t_1(0,0) + V_1(0))
= U^{-1}\hat{H}^0_q U \, ,\qquad x\in\R^d \, .
\end{equation}
and we set
\begin{equation}\label{hatHstrich}
\hat{H}'_\ep:= U^{-1} \hat{H}_\ep U = \hat{T}'_\ep + \hat{V}'_\ep\; ,
\end{equation}
where, with the notation $C:= R B_0^{-\frac{1}{2}}$,
\begin{align}
\hat{V}'_\ep(x)  &= \hat{V}_\ep \left(C^{-1}x\right)  \label{hatVstrich} \\
\hat{T}'_\ep f(x)&= \sum_{\gamma\in\disk} a_\gamma(C^{-1}x; \ep) f(x + C\gamma)=
\sum_{\mu\in\Gamma_C} \left(a'_\mu\tau_\mu\right)  f(x)\; ,
\label{hatTstrich}
\end{align}
where in the last equation we set $\Gamma_C:= C\disk$ and for $\mu\in\Gamma_C$
\begin{equation}\label{astrich}
\left(a'_\mu\tau_\mu\right) f(x) = a'_\mu(x; \ep) f(x + \mu)\quad \text{with}\quad
a'_\mu(x; \ep):= a_{C^{-1}\mu}(C^{-1}x; \ep)\; .
\end{equation}
Since $C$ has maximal rank, it follows at once by direct calculation, that Hypothesis
\ref{hyp1}(a) holds for $a'$, if $\disk$ is replaced by
$\Gamma_C$ and $\gamma$ by $\mu$.
By $h_0' = t_0' + V_0'$ we denote the symbol associated to the operator $U^{-1}\hat{H}_0 U$
with respect
to the $\ep$-quantization given in \eqref{psdo2}. Then
\begin{equation}\label{tnullstrich}
 t'_0(x,\xi) = \skpd{\xi}{B'(x)\xi} + O(|\xi|^3)\, , \qquad B'(x) = \id + o(\id)
\end{equation}
and
\begin{equation}\label{Vnullstrich}
V'_{0}(x) = \sum_{\nu =1}^d\lambda_\nu^2 x_\nu^2 + O(|x|^3) =: V'_{0,q}(x) + O(|x|^3) \, ,
\quad \lambda_\nu>0\; .
\end{equation}
Moreover the symbol associated to $\hat{T}'_\ep$ is given by
\begin{equation}\label{tepstrich}
t'_\ep (x, \xi) = \sum_{\gamma\in\disk} a_\gamma(C^{-1} x; \ep) \exp
\left(-\frac{i}{\ep}C\gamma\cdot \xi\right) =\sum_{\mu\in\Gamma_C} a'_\mu( x; \ep) \exp
\left(-\frac{i}{\ep}\mu\cdot \xi\right) \; .
\end{equation}
The eigenfunctions of $\hat{H}_q^{'0}$ are given by
\begin{equation}\label{gnj}
g_{\alpha}(x) = \ep^{-\frac{d}{4}}h_{\alpha}\left(\tfrac{x}{\sqrt{\ep}}\right)
e^{-\frac{\varphi_0(x)}{\ep}} \: .
\end{equation}
where $h_{\alpha}(x)=h_{\alpha_1}(x_1)\ldots h_{\alpha_d}(x_d)$.
Each $h_{\alpha_\nu}$ is a one-dimensional Hermite polynomial,
which is assumed to be normalized in the sense that $\|g_{\alpha}\|_{L^2}=1$.
The phase function $\varphi_0$ is given by
\begin{equation}\label{phi0}
\varphi_0(x) := \sum_{\nu =1}^d \frac{\lambda_\nu }{2} x_\nu^2
\, ,\qquad x\in\R^d\, ,
\end{equation}
solving the harmonic eikonal equation $|\nabla\varphi_0(x)|^2 = V'_{0,q}(x)$.
The eigenfunctions of $\hat{H}_q^0$ are thus given by $\tilde{g}_\alpha := U g_\alpha$.

The following lemma concerns the existence of a local solution of a generalized eikonal
equation.

\begin{Lem}\label{eikonalphi}
Under the assumptions given in Hypothesis \ref{hyp1}, there exists a unique
$\Ce^\infty$-function $\varphi$ defined in a neighborhood $\Omega$ of $0$, with $\varphi (0) = 0$, solving
\begin{equation}\label{eikonal}
\tilde{h}'_0 (x, \nabla \varphi (x)):= -h'_0(x, i\nabla\varphi (x)) = 0\; ,\qquad x\in\Omega\, .
\end{equation}
Furthermore
\begin{equation}\label{varphi}
 \left|\varphi(x) - \varphi_0(x)\right| = O(|x|^3)\quad\text{as}\quad |x|\to 0\; ,
\end{equation}
and the homogeneous Taylor polynomials $\varphi_k$ of degree $k+2,\; k\geq 1$, of $\varphi$ are  constructively determined by solving
transport equations depending on the Taylor expansion of $h'_0$ at $(x,\xi) = (0,0)$.
\end{Lem}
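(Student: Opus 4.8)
\emph{Proof plan.} This is the discrete counterpart of the classical construction of the WKB phase near a non-degenerate well for Schr\"odinger operators (compare Helffer--Sj\"ostrand \cite{hesjo}, Klein--Schwarz \cite{erika}), and the plan is to realize the graph of $\nabla\varphi$ as the local unstable manifold of a hyperbolic equilibrium of a Hamiltonian flow. First I would put $p(x,\xi):=\tilde{h}'_0(x,\xi)=-h'_0(x,i\xi)$ on a neighborhood of $(0,0)$ in $\R^{2d}$. Since the rotated coefficients $a'$ again satisfy Hypothesis \ref{hyp1}(a), Remark \ref{rem1}(e) applies to $t'_0$, which is hence even in $\xi$; together with \eqref{tnullstrich} and \eqref{Vnullstrich} this gives $p(x,\xi)=\skp{\xi}{B'(x)\xi}-V'_0(x)+O(|\xi|^4)$ with $B'(0)=\id$ and $D^2V'_0(0)=2\diag(\lambda_1^2,\ldots,\lambda_d^2)$. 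Thus $p(0,0)=0$, $dp(0,0)=0$, and the Hessian of $p$ at $(0,0)$ is block-diagonal, negative definite in $x$ and positive definite in $\xi$; consequently the Hamilton field $X_p=(\partial_\xi p,-\partial_x p)$ has a hyperbolic equilibrium at $(0,0)$, its linearization having eigenvalues $\pm 2\lambda_\nu$ and $d$-dimensional unstable subspace $E_+=\{(x,\diag(\lambda_\nu)x):x\in\R^d\}$, which is Lagrangian since it is the graph of the symmetric matrix $\diag(\lambda_\nu)=D^2\varphi_0(0)$, $\varphi_0$ as in \eqref{phi0}.

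By the $\Ce^\infty$ stable manifold theorem there is a $d$-dimensional $\Ce^\infty$ local unstable manifold $\Lambda_+\ni(0,0)$, with $T_{(0,0)}\Lambda_+=E_+$, invariant under and contracted to $(0,0)$ by the backward flow $\Phi^{-t}$, $t\geq0$. Since $p$ is conserved along $X_p$ and $\Phi^{-t}(m)\to(0,0)$ for $m\in\Lambda_+$, one has $p\equiv0$ on $\Lambda_+$; and $\Lambda_+$ is Lagrangian, because for tangent vectors $v,w$ at $m$ symplecticity gives $\omega_m(v,w)=\omega_{\Phi^{-t}(m)}(d\Phi^{-t}v,d\Phi^{-t}w)\to0$ as $t\to\infty$, the vectors $d\Phi^{-t}v$, $d\Phi^{-t}w$ tending to $0$. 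As $E_+$ projects isomorphically onto $\R^d_x$, so does $\Lambda_+$ near $(0,0)$; being Lagrangian and defined over a ball it is the graph of $\nabla\varphi$ for a unique $\varphi\in\Ce^\infty(\Omega)$ with $\varphi(0)=0$, and $\nabla\varphi(0)=0$, $D^2\varphi(0)=\diag(\lambda_\nu)$. Since $p\equiv0$ on $\Lambda_+$, this $\varphi$ solves \eqref{eikonal}, and $D^2\varphi(0)=D^2\varphi_0(0)$ together with $\varphi(0)=\varphi_0(0)=0$, $\nabla\varphi(0)=\nabla\varphi_0(0)=0$ yields \eqref{varphi}.

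For the Taylor polynomials I would substitute $\varphi=\varphi_0+\sum_{k\geq1}\varphi_k$ with $\varphi_k$ homogeneous of degree $k+2$ into \eqref{eikonal} and equate terms of equal homogeneity in $x$: the part of degree $k+2$ reads $L\varphi_k=F_k$, where $L:=2\sum_{\nu=1}^d\lambda_\nu x_\nu\partial_{x_\nu}$ and $F_k$ is a polynomial determined by $\varphi_0,\ldots,\varphi_{k-1}$ and by the Taylor coefficients of $h'_0$ at $(0,0)$ --- these are the asserted transport equations. On the finite-dimensional space of homogeneous polynomials of degree $m$ the operator $L$ is diagonal in the monomial basis with eigenvalues $2\sum_\nu\lambda_\nu\alpha_\nu\geq 2(\min_\nu\lambda_\nu)\,m>0$ for $m=k+2\geq3$, hence invertible there, so each $\varphi_k$ is uniquely and constructively determined. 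This also fixes the whole Taylor series of any $\Ce^\infty$ solution with $D^2\varphi(0)=\diag(\lambda_\nu)$, which, combined with the uniqueness of the local unstable manifold, gives uniqueness of $\varphi$ among $\Ce^\infty$ solutions vanishing at $0$ and satisfying \eqref{varphi} (equivalently, with positive definite Hessian at $0$; this restriction is genuinely needed, since $p$ is even in $\xi$ so $-\varphi$ also solves \eqref{eikonal} --- in \cite{kleinro}, $\varphi$ is moreover identified with the Finsler distance to the well, which is $\geq0$). I expect the main obstacle to be the two geometric facts that $p\equiv0$ on $\Lambda_+$ and that $\Lambda_+$ is Lagrangian --- both resting on conservation of $p$ and symplecticity under the contracting backward flow --- together with pinning down the correct uniqueness class; the bookkeeping leading to $L\varphi_k=F_k$ is then routine.
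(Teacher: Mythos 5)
Your proposal follows the paper's proof essentially verbatim: both realize the graph of $\nabla\varphi$ as the local unstable manifold of the Hamiltonian flow of $\tilde h'_0$ at the hyperbolic equilibrium $(0,0)$, verify that it lies in $(\tilde h'_0)^{-1}(0)$ and is Lagrangian by the same flow-invariance arguments, and determine the Taylor coefficients via the same transport operator $\sum_\nu\lambda_\nu x_\nu\partial_{x_\nu}$. Your added observation that the evenness of $\tilde h'_0$ in $\xi$ forces a choice of uniqueness class (e.g., $D^2\varphi(0)>0$, since $-\varphi$ also solves the eikonal equation) is a useful precision that the paper leaves implicit.
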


\begin{rem} It follows from the proof of Theorem 1.5 in Klein-Rosenberger \cite{kleinro},
that $\varphi$ coincides in $\Omega$
with the Finsler distance $d^0(x)$. This proof uses Lemma \ref{eikonalphi}, which is taken
from the dissertation \cite{thesis}.
For the sake of the reader, we shall recall the proof of Lemma \ref{eikonalphi}  here.
\end{rem}

\begin{hyp}\label{tildevarphihyp}
For $\Omega, \varphi$ as in Lemma \ref{eikonalphi}, we choose a neighborhood $\O_1\subset
\Omega$ of
$0$ such that for any $\delta >0$ and for some $C>0$
the estimate $|\nabla \varphi(x)|\geq C$ holds for
$x\in\O_1\setminus \{|x|\leq \delta\}$.
We consider some set $\O_2$ such that $\overline{\O_2}\subset
\O_1$ and define a smooth cut-off function $\chi$ supported
in $\O_1$ such that $\chi(x)=1$ for any $x\in \O_2$.
Then we set for any $b>0$
\begin{equation}\label{tildevarphi}
\tilde{\varphi}(x) := \chi (x) \varphi(x) +  (1 - \chi (x)) b |x|\, ,\qquad x\in\R^d\, .
\end{equation}
\end{hyp}

The central result of this paper is the construction of the following system of quasimodes
of WKB-type, both for
the operators $\hat{H}'_\ep$ on $L^2(\R^d)$ and $H_\ep$ on $\ell^2(\disk)$.

\begin{theo}\label{theoEjaj}
Let, for $\ep>0$, $\hat{H}_{\ep}$ and $H_\ep$ respectively be an
Hamilton operator satisfying Hypotheses \ref{hyp1}.
Let $\tilde{\varphi}, \O_1$ and $\O_2$ satisfy
Hypothesis \ref{tildevarphihyp}.
Furthermore we assume that $\ep E$ denotes an eigenvalue of $\hat{H}_q^{'0}$ defined in
\eqref{Hharmonisch} with multiplicity $m$.
\ben
\item
Then there are functions $u_j\in\Ce_0^\infty\left(\R^d\times [0,\ep_0)\right), u_{j\ell}
\in \Ce_0^\infty(\R^d)\, , j=1,\ldots,m\, , \;
\ell\in\frac{\mathbb{Z}}{2}\,,\; \ell \geq -N$ for some $N$, such that for all $M\in
\frac{\mathbb{Z}}{2}$ there are
$C_M <\infty$ satisfying
\begin{equation}\label{arep}
\left| u_j(x; \ep) - \sum_{\natop{\ell\in\hZ}{\ell\geq -N}}^M  \ep^\ell u_{j\ell}(x)
\right| \leq C_M \ep^M\, , \quad (x\in\R^d)\; ,
\end{equation}
and
real functions $E_j(\ep)$ with asymptotic expansion
\begin{equation}\label{Erep}
E_j(\ep) \sim E + \sum_{k\in\frac{\N^*}{2}} \ep^k E_{jk}\; ,
\end{equation}
solving the equation
\begin{equation}\label{eigenwg}
(\hat{H}'_{\ep} - \ep E_j(\ep))\left(u_j(x,\ep
)e^{-\frac{\tilde{\varphi} (x)}{\ep}}\right) =
O\left(\ep^{\infty}\right)e^{-\frac{\tilde{\varphi} (x)}{\ep}}\,,
\quad (x\in\O_3,\;\ep \to 0)   \, ,
\end{equation}
for some neighborhood $\Omega_3 \subset \Omega_2$ of zero, where the rhs of \eqref{eigenwg} is $O(|x|^\infty)$ as $|x|\to 0$.
\item
The approximate eigenfunctions
\[ v_j:= U\left( u_j e^{-\frac{\tilde{\varphi}}{\ep}}\right)\]
of $\hat{H}_\ep$ are almost orthonormal in the sense that
\begin{equation}\label{ortho}
 \skpR{v_j}{v_k} = \delta_{jk}  + O(\ep^\infty) \; .
\end{equation}
\item
We set $I_E:=\{ \alpha\in\N^d\, |\, \hat{H}_q^{'0} g_\alpha = \ep E g_\alpha\,\}$, where
$g_\alpha$ is given  by \eqref{gnj}.
If $|\alpha|$ is even (or odd resp.) for all $\alpha \in I_E$, then all half integer terms
(or integer terms resp.) in the expansion
\eqref{arep} vanish. Moreover if $|\alpha|$ is even or odd for all $\alpha\in I_E$, the half
integer terms in \eqref{Erep} vanish.
\item
For any $x_0\in \R^d$, the restriction $v_j^\ep := r_{{\mathscr G}_{x_0}}v_j$ of the
approximate eigenfunctions to the lattice ${\mathscr G}_{x_0} =
\disk + x_0$ are approximate eigenfunctions for the operator
$H_\ep$ with respect to the approximate eigenvalues given in \eqref{Erep}, i.e.,
\begin{equation}\label{eigenwgdisk}
(H_{\ep} - \ep E_j(\ep)) v_j^\ep(x) =
O\left(\ep^{\infty}\right) Ue^{-\frac{\tilde{\varphi}}{\ep}}(x)\,,
\quad (x\in\O_3\cap {\mathscr G}_{x_0},\;\ep \to 0)   \, ,
\end{equation}
where the rhs of \eqref{eigenwgdisk} is $O(|x|^\infty)$ as $|x|\to 0$.
\item For the restricted approximate eigenfunctions we have
\begin{equation}\label{ortho2}
\skpd{v_j^\ep}{v_k^\ep} = \ep^{-d} \left( \delta_{jk} + O(\sqrt{\ep})\right)\; .
\end{equation}
\een
\end{theo}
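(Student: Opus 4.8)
The plan is to prove the five assertions of Theorem~\ref{theoEjaj} in the natural order, building the WKB construction first for $\hat H'_\ep$ on $L^2(\R^d)$ and then transferring to the lattice.

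\textbf{Step 1 (construction of $u_j$ and $E_j$, assertion (a)).}
First I would conjugate: set $\hat G_\ep := e^{\tilde\varphi/\ep}\hat H'_\ep e^{-\tilde\varphi/\ep}$ and then rescale by $y = x/\sqrt\ep$, obtaining the operator $\hat G_\ep$ studied in Section~\ref{kap22}. By Proposition~\ref{TayG} this operator has an asymptotic expansion in powers of $\sqrt\ep$ whose leading term is (after the unitary $U$) the harmonic oscillator $-\Delta + \sum_\nu \lambda_\nu y_\nu^2$ minus its ground-state-type shift, acting on the Hermite basis; the eikonal equation \eqref{eikonal} from Lemma~\ref{eikonalphi} is exactly what makes the would-be $\ep^{-1}$ and $\ep^{-1/2}$ terms cancel on $\O_3$ up to $O(|x|^\infty)$, i.e.\ up to terms absorbed by the cut-off region where $\tilde\varphi\ne\varphi$. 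I would then plug the ansatz $u_j \sim \sum_{\ell\ge -N}\ep^\ell u_{j\ell}$, $E_j\sim E+\sum_{k\in\N^*/2}\ep^k E_{jk}$ into $(\hat G_\ep - \ep E_j)u_j = O(\ep^\infty)$ and solve order by order. At each half-integer order one gets an equation of the form $(\hat H_q^{'0}/\ep - E)\,u_{j\ell} = (\text{known from lower orders}) + E_{j,\cdot}\,u_{j\,\text{lower}}$; since $\ep E$ is an eigenvalue of $\hat H_q^{'0}$ of multiplicity $m$, the operator on the left has an $m$-dimensional kernel, so one uses the Fredholm alternative: choose $E_{jk}$ and the kernel component of $u_{j\ell}$ to make the right-hand side orthogonal to $\ker$, exactly as in \cite{hesjo,erika}. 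Running $j=1,\dots,m$ with a Gram--Schmidt-type choice at the leading Hermite level produces $m$ formal solutions. Borel summation gives genuine $u_j\in\Ce_0^\infty(\R^d\times[0,\ep_0))$ with the stated remainder estimate, and multiplying back by $e^{-\tilde\varphi/\ep}$ gives \eqref{eigenwg}; the remark that the right-hand side is $O(|x|^\infty)$ near $0$ comes from the fact that on $\O_2\supset\O_3$ we have $\tilde\varphi=\varphi$ exactly, so the only error is the genuine WKB remainder which is flat at the well.

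\textbf{Step 2 (almost-orthonormality, assertions (b) and (e)).}
For (b): the $v_j = U(u_j e^{-\tilde\varphi/\ep})$ are, after undoing $U$ and rescaling $y=x/\sqrt\ep$, of the form $\ep^{d/4}$ times a function which to leading order is a (distinct, by construction orthonormalised) linear combination of Hermite functions $h_\alpha(y)$, times a Gaussian $e^{-\varphi_0(y\sqrt\ep)/\ep}=e^{-\varphi_0(y)}$ up to the $O(|y|^3\sqrt\ep)$ correction of \eqref{varphi}. The $L^2$-inner product localises (the tails beyond $\O_2$ contribute $O(e^{-c/\ep})$ because there $\tilde\varphi = b|x|$ while $u_j$ is supported in $\O_1$; more precisely the construction can be arranged so $u_j$ is supported in $\overline{\O_2}$, or the difference $\tilde\varphi-\varphi$ on $\O_1\setminus\O_2$ is bounded below). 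So $\skpR{v_j}{v_k}$ equals the Gaussian integral $\int h_\alpha h_\beta e^{-2\varphi_0}$ plus corrections; by the normalisation of the $h_\alpha$ and the orthonormalisation built into Step~1 this is $\delta_{jk}+O(\ep^\infty)$ — the $O(\ep^\infty)$ rather than $O(\sqrt\ep)$ because at each order of the expansion one can re-orthonormalise, absorbing the correction into the still-free kernel components of the $u_{j\ell}$. Assertion (e) is then a Riemann-sum statement: $\skpd{v_j^\ep}{v_k^\ep} = \sum_{x\in\disk}\overline{v_j(x)}v_k(x)$, and since $v_j(x) = \ep^{-d/4}(\text{smooth in }x/\sqrt\ep)$, the substitution $x=\sqrt\ep\,y$ turns this into $\ep^{-d/2}$ times an $\ep$-spacing... no: the lattice spacing in $y$ is $\ep/\sqrt\ep = \sqrt\ep$, so the sum is $\ep^{-d}\cdot(\sqrt\ep)^d\sum_{y\in\sqrt\ep\Z^d}(\cdots) = \ep^{-d}\big(\int (\cdots)\,dy + O(\sqrt\ep)\big) = \ep^{-d}(\delta_{jk}+O(\sqrt\ep))$, using standard Euler--Maclaurin/Riemann-sum error bounds for the rapidly decaying smooth integrand. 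This gives \eqref{ortho2}.

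\textbf{Step 3 (transfer to the lattice, assertion (d); and parity, assertion (c)).}
Assertion (d) is essentially a corollary of the intertwining identity \eqref{Gxnull}: applying $r_{\mathscr G_{x_0}}$ to \eqref{eigenwg} after pulling $U$ through (note $U$ is a fixed invertible affine change of variables composed with a dilation, so $v_j = U(u_je^{-\tilde\varphi/\ep})$ is again a nice function and $r_{\mathscr G_{x_0}}\hat H_\ep = H_\ep r_{\mathscr G_{x_0}}$ holds for $\hat H_\ep$ by \eqref{Gxnull}; one checks the analogous identity survives conjugation by $U$, or equivalently works with $\hat H_\ep$ directly and composes with $U$ at the end) yields $(H_\ep - \ep E_j(\ep))v_j^\ep = r_{\mathscr G_{x_0}}\big(O(\ep^\infty)Ue^{-\tilde\varphi/\ep}\big)$, which is \eqref{eigenwgdisk}; the $O(|x|^\infty)$-near-zero statement is inherited pointwise from \eqref{eigenwg}. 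For assertion (c), the key is that $\hat H_\ep$ commutes with complex conjugation (Remark~\ref{rem1}(h)) and, more to the point, that $\hat H'_\ep$ is invariant under $x\mapsto -x$ up to the odd-in-$x$ parts of the symbol expansion: the leading operator $\hat H_q^{'0}$ and $\varphi_0$ are even, and one tracks the parity through the transport equations. If every $\alpha\in I_E$ has $|\alpha|$ even, the leading Hermite combination is an even function of $y$; the recursion then forces $u_{j\ell}$ to be even (resp.\ the parity to alternate) at integer orders and odd at half-integer orders, so the half-integer $u_{j\ell}$ pair with an even leading term to give an odd integrand whose contribution to the solvability conditions vanishes — hence those $E_{jk}$ and $u_{j\ell}$ can be taken $0$. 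This is the discrete analogue of the Helffer--Sjöstrand parity argument in \cite{hesjo}.

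\textbf{Main obstacle.}
The genuinely hard part is Step~1 at the level of controlling the expansion of the conjugated-and-rescaled operator $\hat G_\ep$ — that is, Proposition~\ref{TayG} — because the difference operator, unlike a Schrödinger operator, mixes all lattice translations $\tau_{\ep\eta}$, and after conjugation by $e^{\tilde\varphi/\ep}$ each produces a factor $e^{(\tilde\varphi(x)-\tilde\varphi(x+\ep\eta))/\ep}$ whose Taylor expansion in $\sqrt\ep$ must be resummed against the exponentially decaying coefficients $a_{\ep\eta}^{(k)}$ (Hypothesis~\ref{hyp1}(a)(iv) and Remark~\ref{rem1}(f),(g)) to produce an honest symbol with controlled remainders. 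Ensuring that this resummation is legitimate, that the resulting operator on $L^2(\R^d_y)$ is well-defined with the claimed polyhomogeneous expansion, and that the eikonal equation \eqref{eikonal} is precisely the vanishing condition for the two most singular orders, is the technical core; everything else (Fredholm bookkeeping, Gaussian integrals, Riemann sums, parity) is routine by comparison.
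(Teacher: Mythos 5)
Your Step~1 organizes the degenerate perturbation theory as an order-by-order Fredholm-alternative recursion, whereas the paper takes a different route: it works entirely on the space of formal symbols $\mathcal V$, constructs a Riesz projection $\Pi_E$ via the formal resolvent (Lemma~\ref{R(z)lemma}, Proposition~\ref{Piprop}), and then diagonalizes the $m\times m$ matrix $M=BF^GB$ of $G|_{\ran\Pi_E}$ (Theorem~\ref{Meigenw}); the orthonormality and parity statements then come for free from properties of $\Pi_E$ and the matrix $M$. Either bookkeeping is legitimate in principle, but the projection approach avoids having to track solvability conditions at every half-integer order by hand and makes assertions~(b) and~(c) transparent.

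There is, however, a genuine gap in your Step~1 at the transition from formal expansions to actual smooth functions. The formal eigenfunctions $\hat u_{j\ell}$ coming out of the $\mathcal V$-construction (or out of your Fredholm recursion) are elements of $\C[[x]]$, i.e.\ formal Taylor series at $x=0$, not genuine functions on a neighborhood $\Omega_3$. You invoke ``Borel summation'' once and assert that it yields $u_j\in\Ce_0^\infty$ satisfying \eqref{eigenwg} with remainder flat at $x=0$. That is not how it works: a Borel sum $\tilde{\hat u}_{j\ell}$ only reproduces the formal Taylor series at $x=0$, so when you plug it into the conjugated equation you obtain \eqref{WKBmitb}, an error term $b_j(x;\ep)$ whose coefficients merely vanish to infinite order at $x=0$, not an exact solution. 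The paper's essential extra step is to correct $\tilde{\hat u}_{j\ell}$ by functions $c_{j\ell}$ flat at $0$ that solve an infinite hierarchy of transport equations of the form $(\mathcal P(x,\partial_x)+f)c_{j\ell} = v$ with $v$ flat at $0$. Solvability in $\Ce^\infty$ with a solution flat at $0$ is nontrivial: one needs that the vector field $Z$ in \eqref{z} has a nondegenerate linearization $\mathcal P_0 = \sum_\nu 2\lambda_\nu x_\nu\partial_{x_\nu}$ at the well (which in turn uses Remark~\ref{rem1}(e) and \eqref{Bnumu}), after which Dimassi--Sj\"ostrand, Proposition 3.5, applies. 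This argument is the bridge between the formal theory (Sections~\ref{kap22}--\ref{harmapprox}) and the actual estimate \eqref{eigenwg}; it is missing from your proposal. Only after this correction is done does one Borel-sum \emph{in $\ep$} to get $u'_j$ and multiply by a cut-off to get $u_j$. A secondary remark: your justification of the $O(\ep^\infty)$ in \eqref{ortho} (``one can re-orthonormalise at each order'') is hand-waving; the paper gets it directly from the exact orthonormality of the $\hat u_j$ in the formal inner product $\skpk{\cdot}{\cdot}$ proven in Theorem~\ref{theo45}, combined with a Laplace-type estimate. The remaining parts of your sketch (assertions~(d) via \eqref{Gxnull}, (e) via a Riemann-sum argument, (c) via parity tracking) are in line with the paper.
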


We shall use Theorem \ref{theoEjaj} in a forthcoming paper to obtain sharp estimates on
tunneling
(see also \cite{thesis}).
The plan of the paper is as follows. Section \ref{seceic} consists of the proof of Lemma
\ref{eikonalphi}.
In Section \ref{kap22} we prove asymptotic results for the operator $\hat{G}_\ep$, which is a
unitary transform of $\hat{H}_\ep$. Here we change variables and introduce an exponential
weight. Then we use this expansion of $\hat{G}_\ep$ to define an operator $G$
on spaces of formal symbols. In Section \ref{harmapprox}, we construct asymptotic expansions
of eigenfunctions of $G$.
Section \ref{approxeigen} gives the proof of Theorem \ref{theoEjaj}.
We emphasize that the results of Sections \ref{kap22} and \ref{harmapprox} concern expansions
for
operators on spaces of formal symbols. These results are crucial for the proof of Theorem
\ref{theoEjaj}.

\section{Proof of Lemma \ref{eikonalphi}}\label{seceic}

If we formally compute the left hand side of \eqref{eigenwg}
and expand the coefficients of $e^{-\frac{\varphi }{\ep}}$ in
powers of $\ep$, the equation of order zero
determines the function $\varphi$. The order zero term of the
conjugated potential energy is $V'_0$, since $\hat{V}'_\ep$ commutes
with $e^{\frac{\varphi(x)}{\ep}}$. The conjugated kinetic term is
for $u\in L^2\left(\R^d\right)$ given by
\[
e^{\frac{\varphi}{\ep}}\hat{T}'_\ep e^{-\frac{\varphi}{\ep}} u(x) =
 \sum_{\gamma\in\Gamma_C} a'_\gamma(x; \ep)
e^{\frac{1}{\ep}(\varphi(x)-\varphi(x+\gamma))}u(x+\gamma)\; .
\]
If in addition $u\in{\mathscr C}^1\left(\R^d\right)$ and
$\varphi\in{\mathscr C}^2\left(\R^d\right)$, using the Taylor
expansion of $\varphi(x+\gamma)$ and $u(x+\gamma)$ at $x$, the
last sum is equal to
\begin{equation}\label{Tzuteik}
\sum_{\gamma\in\Gamma_C}a'_\gamma(x; \ep)
e^{\frac{1}{\ep}(-\gamma\cdot\nabla\varphi(x) -
\sum_{\nu\mu}\gamma_\nu\gamma_\mu
\int_0^1\partial_\mu\partial_\nu(\varphi(x+t\gamma))(1-t)\,dt}
\left( u(x)+ \int_0^1 \nabla u(x+t\gamma)\cdot\gamma \,dt \right)\; .
\end{equation}
The term of order
zero in $\ep$ is for $\gamma = \ep \eta$ and $\tilde{a}_\eta$ defined in \eqref{agammaunep}
\begin{equation}\label{zerot}
\sum_{\eta\in C\Z^d} \tilde{a}'_\eta(x)
e^{- \eta \cdot\nabla\varphi(x)}u(x) =
t'_0 (x,-i\nabla\varphi(x)) u(x)\; .
\end{equation}
Thus the resulting order zero part of
\eqref{eigenwg} is the generalized eikonal equation \eqref{eikonal}.

Following Helffer (\cite{helf}), the idea of the proof is to
determine $\varphi$ as generating function of a lagrangian
manifold $\Lambda_+ = \{(x,\nabla\varphi(x))\,|\,
(x,\xi)\in{\mathscr N}\}$ lying in the "energy shell"
$\left(\tilde{h}'_0\right)^{-1}(0)$, where ${\mathscr N}$ is a neighborhood of
$(0,0)$. By Hypothesis \ref{hyp1}, $\tilde{h}'_0$ expands in a
neighborhood of $(0,0)$ in $T^*\R^d$ as
\begin{equation}\label{p}
\tilde{h}'_0(x,\xi) = \skp{\xi}{B'(x)\xi} - \sum_{\nu
=1}^d\lambda_\nu^2x_\nu^2 + O\left(|\xi|^3+|x|^3\right)\; ,
\end{equation}
where $B'(0)=\id$.
Thus by the symmetry of the matrix $B'$, the Hamiltonian vector
field of $\tilde{h}'_0$ in a neighborhood of $(0,0)$ expands as
\begin{eqnarray}\label{Xtildeh0}
X_{\tilde{h}'_0} &=& 2\sum_{\nu =1}^d\left(\sum_{\mu=1}^d
B'_{\nu\mu}(x) \xi_\mu\frac{\partial}{\partial x_\nu} +
\left(\lambda^2_\nu x_\nu + \sum_{\mu,\eta=1}^d\frac{\partial
B'_{\mu\eta}}{\partial x_\nu}(x) \xi_\mu\xi_\eta
\right)\frac{\partial}{\partial \xi_\nu }\right) +
 O\left(|\xi|^2+|x|^2\right) = \nonumber\\
&=& 2\sum_{\nu =1}^d\left( \sum_{\mu=1}^d
B'_{\nu\mu}(x)\xi_\mu\frac{\partial}{\partial x_\nu} +
\lambda^2_\nu x_\nu\frac{\partial}{\partial \xi_\nu }\right) +
  O\left(|\xi|^2+|x|^2\right)\, .\label{Xq}
\end{eqnarray}
The
linearization of $X_{\tilde{h}'_0}$ at the critical point $(0,0)$
yields the fundamental matrix
\begin{equation}
L := DX_{\tilde{h}'_0}(0,0) = 2 \begin{pmatrix} 0 & \begin{matrix} 1&  & 0 \\
 & \ddots &  \\ 0 & & 1\end{matrix}  \\\begin{matrix}\lambda_1^2 &  & 0 \\
 & \ddots &  \\ 0 & & \lambda_d^2 \end{matrix} & 0 \end{pmatrix}\; .
\end{equation}
$L$ has the eigenvalues $\pm 2\lambda_\nu\, ,\nu =1,\ldots d$.
An eigenvector $(x,\xi)$ with respect to $\pm \lambda_\nu$
fulfills $\xi_\nu =\pm \lambda_\nu x_\nu$. By $\Lambda^0_{\pm}$ we
denote the positive (resp. negative) eigenspace of $L$.
$\Lambda^0_{\pm}$ can be characterized as the subsets of phase space,
which consist of all points $(x,\xi)$ such that $e^{-tL}(x,\xi)\to
0$ for $t\to \pm\infty$. Moreover, $\Lambda_{\pm}^0$ are
Lagrangian subspaces of $T_{(0,0)}(T^*\R^d)$ of the form $\xi =
\pm\nabla \varphi_0(x)$ with $\varphi_0$ defined in (\ref{phi0}).

Denote by $F_t$ the flow of the hamiltonian vector field
$X_{\tilde{h}'_0}$. By the Local Stable Manifold Theorem
(\cite{abma}), there is an open neighborhood
$\mathscr{N}$ of $(0,0)$ in $T^*\R^d$, such that
\begin{equation}
\Lambda_{\pm} :=
\left\{\left.(x,\xi)\in\mathscr{N} \,\right|\, F_t(x,\xi)\to (0,0)
\quad\text{for}\quad t\to \mp\infty\right\}
\end{equation}
are $d$-dimensional submanifolds tangent to $\Lambda_{\pm}^0$ at $(0,0)$ (the stable
($\Lambda_-$) and unstable ($\Lambda_+$) manifold of
$X_{\tilde{h}'_0}$ at the critical point $(0,0)$).
$\Lambda_+$ and $\Lambda_-$ are contained in
$\left(\tilde{h}'_0\right)^{-1}(0)$, because $\tilde{h}'_0(F_t(x,\xi)) =
\tilde{h}'_0(x,\xi)$.

In order to show that the tangent spaces at each point
$(x,\xi)\in\Lambda_{\pm}$ are Lagrangian linear subspaces of
$T_{(x,\xi)}(T^*\R^d)$, we have to show, that the
canonical symplectic form $\omega=\sum_{j=1}^d d\xi_j\wedge dx_j$
vanishes for all $u,v\in T_{(x,\xi)}(\Lambda_{\pm})$. The
Hamiltonian flow leaves the symplectic form invariant, we
therefore find for $(u,v)\in T_{(x,\xi)}(\Lambda_+)$
\[ \omega_{(x,\xi)}(u,v) = \omega_{F_t(x,\xi)}((DF_t)u,(DF_t)v)\; .\]
In the limit $t\to -\infty$, the elements of
$T_{(x,\xi)}(\Lambda_+)$ lie in the Lagrangian plane
$\Lambda_+^0$, where the symplectic form vanishes, thus
$\omega_{(x,\xi)}(u,v) = 0$ for all $(u,v)\in
T_{(x,\xi)}(\Lambda_+)$.

The projection $(x,\xi)\mapsto x$ defines a diffeomorphism of
$\mathscr{N}\cap \Lambda_+$ onto a sufficiently small neighborhood
$\Omega$ of 0 in $\R^d$. Therefore we can parameterize $\Lambda_+$
as the set of points $(x_1,\ldots x_d,\Psi_1(x),\ldots \Psi_d(x))$
with $\Psi_\nu\in \mathscr{C}^{\infty}(\Omega)$. Since $\Lambda_+$
is Lagrangian, we can deduce $\frac{\partial \Psi_\nu}{\partial
x_\mu} = \frac{\partial \Psi_\mu}{\partial x_\nu}$ and there
exists a function $\varphi\in\mathscr{C}^{\infty}(\Omega)$ with
\[ \nabla \varphi (x) = \Psi (x)\quad\textrm{and}\quad \varphi (0) = 0
\, .\] Since $T_{(0,0)}(\Lambda_{\pm})= \Lambda_{\pm}^0$, the
leading order term of this function $\varphi$ is equal to
$\varphi_0$, thus $\varphi$ can be written as (\ref{varphi}).
Furthermore $\varphi$ solves the eikonal equation (\ref{eikonal}),
because
$\Lambda_+\subset \left(\tilde{h}'_0\right)^{-1}(0)$.\\

With the ansatz (\ref{varphi}), we have a constructive procedure
to iteratively find the terms
$\varphi_k$.
The coefficients of the eikonal equation (\ref{eikonal}) of the
lowest order in $x$ vanish and the coefficients belonging to
higher orders in $x$ iteratively fix the
$\varphi_k$.
To this end, we expand $B'(x)$ and $B'_\alpha(x)$ at $x=0$ as
\begin{eqnarray}\label{tayB1}
B'(x) &=& \id + DB'|_0 x + O\left(|x|^2\right) \\
B'_\alpha(x) &=& B'_\alpha(0) + DB'_\alpha|_0 x + O\left(|x|^2\right) \;
.\label{tayB2}
\end{eqnarray}
Furthermore we write
\begin{equation}\label{Vnullexpands}
V'_0(x) = V'_{0,q}(x) +  \sum_{k\geq 3}^N W_k(x) +
O\left(|x|^{N+1}\right) \; ,\nonumber
\end{equation}
where $W_k$ denotes a homogeneous polynomial of degree $k$.
The third order equation
\[ -\skp{\nabla \varphi_0}{DB'|_0 x) \nabla\varphi_0} -2 \sum_{\nu =0}^d\lambda_\nu x_\nu
\dnd{\varphi_1}{x_\nu}(x) + W_3 (x) = 0\, ,\qquad x\in\Omega \]
fixes $\varphi_1$ for a given $W_3$, the fourth order
\[ -2\skp{\nabla\varphi_0}{(DB'|_0 x)\nabla\varphi_1} +\sum_{|\alpha|=4}B'_{\alpha}
(\nabla\varphi_0)^\alpha -
2\sum_{\nu =0}^d\lambda_\nu x_\nu\dnd{\varphi_2}{x_\nu}(x) -
\sum_{\nu =0}^d\left(\dnd{\varphi_1}{x_\nu }\right)^2 + W_4(x) = 0\]
is an equation for $\varphi_2$ and the higher orders in
$\varphi$ are inductively given by the higher order parts of the
eikonal equation, which all take the form
\[ \left(\sum_{\nu =1}^d\lambda_\nu x_\nu\dnd{ }{x_\nu }\right)
\varphi_k(x) = v_{k+2}(x)\, , \qquad x\in\Omega\, , \]
with
$v_k=O(|x|^k)$ for $|x|\to 0$.

\section{Expansion of the transformed operator}\label{kap22}

\begin{Def}\label{hypphi}
Let $\psi$ denote any real valued function on $\R^d$.
We introduce an $\ep$-dependent unitary map
\[
U_\ep(\psi)  :
L^2\left(\R^d,dx\right)\rightarrow L^2\left(\R^d,
e^{-2\frac{\psi(\sqrt{\ep}y)}{\ep}}dy\right)=:\mathscr{H}_{\psi}
\]
by
\begin{equation}\label{unit}
(U_\ep(\psi)f)(y) =
\ep^{\frac{d}{4}}e^{\frac{\psi(\sqrt{\ep}y)}{\ep}}f(\sqrt{\ep}y)
\end{equation}
and set for $\hat{H}'_\ep$ as defined in \eqref{hatHstrich}
\begin{equation}\label{defGeps}
\hat{G}_{\ep, \psi} := \tfrac{1}{\ep}\, U_\ep(\psi)
\hat{H}'_{\ep}U^{-1}_\ep(\psi)\; .
\end{equation}
\end{Def}

Then $\hat{G}_{\ep, \psi}$ defines a self adjoint operator on $\mathscr{H}_{\psi}$, whose
domain contains the set of all polynomials $\C[y]$, if $\psi \geq C|x| $ for some $C>0$
and for all large $x$. Choosing in particular
$\psi = \tilde{\varphi}\in\Ce^\infty (\R^d)$  satisfying Hypothesis \ref{tildevarphihyp},
we remark that
for any $M\in\N$, uniform with respect to $\ep\in (0,\ep_0)$,
\begin{equation}\label{tay21}
\| \langle\, . \,\rangle^M\|_{\mathscr{H}_{\tilde{\varphi}}} \leq C_M\, , \qquad
\text{where}\quad
\langle y\rangle:= \sqrt{1+|y|^2}\; .
\end{equation}
In fact by the definition of $\tilde{\varphi}$, for some $A,C_1,C_2>0$,
\begin{equation}\label{tay22}
 \tilde{\varphi}(\sqrt{\ep}y) \geq \begin{cases} C_1 \ep |y|^2\, , \quad \text{for}\;
 |y|\leq \frac{A}{\sqrt{\ep}} \\
                                    C_2 \sqrt{\ep} |y|\, , \quad\text{otherwise}
                                   \end{cases}
\end{equation}
and therefore
\[
 \| \langle\, . \,\rangle^M\|_{\mathscr{H}_{\tilde{\varphi}}} \leq \int_{\R^d} e^{-2C_1|y|^2}
 \langle y\rangle^M \, dy
+ \int_{|y|>\frac{A}{\sqrt{\ep_0}}}
e^{-\frac{C_2 |y|}{\sqrt{\ep_0}}} \langle y\rangle^M\, dy \leq C_M\; .
\]

\begin{prop}\label{TayG}
For $\O_2, \tilde{\varphi}$ as in Hypothesis \ref{tildevarphihyp},
let $\zeta\in\Ce_0^\infty(\R^d)$ be a cut-off-function, such that $\supp \zeta\subset \O_2$
and set $\zeta_\ep(y):= \zeta(\sqrt{\ep}y)$.
Then the operator $\hat{G}_{\ep}:= \hat{G}_{\ep, \tilde{\varphi}}$ defined in \eqref{defGeps}
has an expansion
\begin{equation}\label{Gdach}
 \hat{G}_\ep = \sum_{\tfrac{\N}{2}\ni k\leq N-\frac{1}{2}} \ep^k G_k + R_N\, ,
 \qquad N\in\hNnull\; .
\end{equation}
Here
\begin{equation}\label{G_kbeiGeps}
G_k  =  \left( b_{k} + \sum_{|\alpha|=1}^{2k+2} b_{k, \alpha}
\partial^\alpha\right)   \, ,
\end{equation}
where $b_k$ is a polynomial of degree $2k$, which is even (odd) with respect to $y\mapsto -y$
if $2k$ is even (odd), and
$b_{k, \alpha}$ is a polynomial of degree $2k + 2 - |\alpha|$ , which is even (odd) if $2k -
|\alpha|$ is even (odd).
Moreover there exist constants $C_{N}$ and $\ep_0>0$ such that
for any $\ep\in (0,\ep_0]$ and for any $u,v\in \C[y] \subset \mathscr{H}_{\tilde{\varphi}}$
\begin{equation}\label{restTayG}
 \left| \skpH{u}{\zeta_\ep R_N v} \right| \leq C_{N} \ep^N \sum_{\natop{\alpha\in\N^d}
 {|\alpha|\leq 4N+4}}
 \|y^\alpha u\|_{\mathscr{H}_{\tilde{\varphi}}}
\sum_{\natop{\beta\in\N^d}{|\beta|\leq N}} \||\, . \, |^{2N+2}\partial^\beta
v\|_{\mathscr{H}_{\tilde{\varphi}}}\; .
\end{equation}
\end{prop}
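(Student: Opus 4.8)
The strategy is to make explicit the three conjugations defining $\hat G_\ep = \tfrac1\ep U_\ep(\tilde\varphi)\hat H'_\ep U_\ep(\tilde\varphi)^{-1}$ and track the powers of $\sqrt\ep$. Write $\hat H'_\ep = \hat T'_\ep + \hat V'_\ep$ with $\hat T'_\ep = \sum_{\mu\in\Gamma_C}a'_\mu\tau_\mu$. For the potential part, $U_\ep(\tilde\varphi)$ commutes with multiplication, so $\tfrac1\ep U_\ep(\tilde\varphi)\hat V'_\ep U_\ep(\tilde\varphi)^{-1}$ is multiplication by $\tfrac1\ep \hat V'_\ep(\sqrt\ep y)$; plug in the expansion $\hat V'_\ep = \sum_\ell \ep^\ell V'_\ell + R_N$ of Hypothesis \ref{hyp1}(b)(i), and Taylor-expand each $V'_\ell(\sqrt\ep y)$ around $0$ in powers of $\sqrt\ep y$. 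Using $V'_0(x) = V'_{0,q}(x) + O(|x|^3)$ with $V'_{0,q}$ homogeneous quadratic, one sees the leading $\tfrac1\ep V'_0(\sqrt\ep y)$ produces an $\ep^0$ term (the harmonic potential $V'_{0,q}(y)$) plus a series in positive half-integer powers of $\ep$; every coefficient is a polynomial in $y$ of the asserted degree and parity (degree $k+2$ of a homogeneous Taylor term at order $\ep^{k/2+1}\cdot\ep^{-1}=\ep^{k/2}$ contributes to $G_{k/2}$, etc.). This is bookkeeping.

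The kinetic part is the heart of the matter and is exactly Proposition \ref{TayG}'s raison d'être. Conjugating $\tau_\mu$ first by $e^{\tilde\varphi/\ep}$ and then rescaling $x=\sqrt\ep y$ gives, for suitable $u$,
\[
\bigl(U_\ep(\tilde\varphi)(a'_\mu\tau_\mu)U_\ep(\tilde\varphi)^{-1}u\bigr)(y)
= a'_\mu(\sqrt\ep y;\ep)\, e^{\frac1\ep(\tilde\varphi(\sqrt\ep y)-\tilde\varphi(\sqrt\ep y+\mu))}\, u\!\bigl(y+\tfrac{\mu}{\sqrt\ep}\bigr).
\]
On the support of $\zeta_\ep$ we may replace $\tilde\varphi$ by $\varphi$. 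Writing $\mu=C\gamma$ with $\gamma=\ep\eta$, $\eta\in\Z^d$, the shift becomes $y\mapsto y+\sqrt\ep\, C\eta$, so one Taylor-expands both the phase difference $\varphi(\sqrt\ep y)-\varphi(\sqrt\ep y+\sqrt\ep C\eta)$ and $u(y+\sqrt\ep C\eta)$ in powers of $\sqrt\ep$. The phase difference expands as $-\sqrt\ep\, (C\eta)\cdot\nabla\varphi(\sqrt\ep y) - \tfrac\ep2\langle C\eta, D^2\varphi(\sqrt\ep y)C\eta\rangle - \cdots$; after a further Taylor expansion of $\nabla\varphi, D^2\varphi,\dots$ at $0$ (using $\varphi-\varphi_0=O(|x|^3)$) the leading contribution to $\tfrac1\ep$ times the sum over $\eta$ is, by \eqref{zerot} and the eikonal equation \eqref{eikonal}, the term $\tilde h'_0(y,\nabla\varphi_0(y))/\ep$-type cancellation combining with the potential: precisely, $G_0$ is the harmonic oscillator operator $\hat H_q^{'0}$ of \eqref{Hharmonisch} in the rescaled variable, i.e. $-\skp{D}{B_0 D}$-type second-order part plus $V'_{0,q}(y)$ (up to the constant $t_1(0,0)+V_1(0)$ appearing at order $\ep$ — here the half-integer grading means $\ep^1 = \ep^{2/2}$ sits in $G_1$). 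Each higher coefficient $G_k$ collects finitely many terms: a factor $\ep^{k}$ arises from choosing total Taylor-order $2k+2$ among the $\eta$-shift expansions (the $\ep^{-1}$ prefactor lowering by one, the quadratic phase term $e^{\frac1\ep\cdot\ep(\cdots)}$ contributing order $0$, and derivatives of $\varphi$ at $0$ starting at cubic order pushing things up), combined with the $\ep$-expansion of $a'_\mu(\sqrt\ep y;\ep)$ from \eqref{agammaexp}. Counting degrees: a differential monomial $y^\beta\partial^\alpha$ appearing in $G_k$ must satisfy $|\alpha|\le 2k+2$ and $\deg_y = 2k+2-|\alpha|$, with parity $\equiv 2k-|\alpha|\pmod 2$ — this follows because each unit of $\sqrt\ep$ is tied either to one power of $y$ (from Taylor-expanding a coefficient at $\sqrt\ep y$) or to "half" of a $\partial$ (from the $y+\sqrt\ep C\eta$ shift, where $\eta$ is then summed against $a^{(k)}_{\ep\eta}$, and evenness of $t_0$ in $\xi$, Remark \ref{rem1}(e), forces shifts to come in $\pm\eta$ pairs — giving the stated parity). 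The exponential decay of $a_\gamma$ in $\gamma$ (Remark \ref{rem1}(f)) guarantees all the $\eta$-sums converge and, crucially, are finite modulo the remainder.

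For the remainder estimate \eqref{restTayG}, the plan is to stop the Taylor expansions at total order $2N+2$ in $\sqrt\ep$, i.e. write every expansion with an integral remainder of the form $\sqrt\ep^{\,2N+2}\int_0^1(\cdots)(\text{intermediate point})\,dt$. After pairing with $u$ in $\mathscr H_{\tilde\varphi}$ and summing over $\eta$, one bounds $|\skpH{u}{\zeta_\ep R_N v}|$ using Cauchy–Schwarz: the $\eta$-sum is controlled by $\sum_\eta |a_{\ep\eta}(x;\ep)| e^{B|\eta|}\le C$ (Remark \ref{rem1}(f)) after absorbing the polynomial factors $|\eta|^{|\alpha|}$ and the shift-translation of the weight $e^{-2\tilde\varphi(\sqrt\ep y)/\ep}$, which costs only a bounded factor because $|\tilde\varphi(\sqrt\ep y+\mu)-\tilde\varphi(\sqrt\ep y)|\le \|\nabla\tilde\varphi\|_\infty|\mu| = O(\ep|\eta|)$, i.e. $e^{2(\tilde\varphi(\sqrt\ep y)-\tilde\varphi(\sqrt\ep y+\mu))/\ep}\le e^{C|\eta|}$; this is where one re-uses \eqref{tay22} and the fact that $\nabla\varphi$ is bounded on $\O_1$ (Hypothesis \ref{tildevarphihyp}). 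The surviving factors are exactly powers $y^\alpha$ hitting $u$ (hence $\|y^\alpha u\|_{\mathscr H_{\tilde\varphi}}$ with $|\alpha|\le 4N+4$, the doubling coming from the product of two expansions, phase $\times$ amplitude) and factors $|\,\cdot\,|^{2N+2}\partial^\beta v$ hitting $v$ (the $2N+2$ from the stopped Taylor remainder of the shift, $|\beta|\le N$); the scalar prefactor is $\ep^N$ after $(\sqrt\ep)^{2N+2}$ is reduced by the $\ep^{-1}$ from the definition of $\hat G_\ep$. Throughout one uses \eqref{tay21} to know polynomials lie in $\mathscr H_{\tilde\varphi}$, so all the displayed norms are finite.

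\medskip

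\emph{Main obstacle.} The genuinely delicate point is not the existence of the expansion but the bookkeeping that simultaneously (i) yields finite operators $G_k$ — i.e. proves that for each fixed $k$ only finitely many $(\eta,\alpha,\beta,\text{Taylor order})$-combinations contribute, which rests on the fact that $\nabla\varphi$ and all $\partial^\gamma\varphi$ with $|\gamma|\ge 3$ vanish to positive order at $0$ so that the "dangerous" factor $e^{\frac1\ep(\varphi(x)-\varphi(x+\mu))}$ has, after rescaling, an expansion with no negative powers of $\ep$ beyond the single $\ep^{-1}$ already extracted and cancelled by the eikonal equation; and (ii) tracks degree and parity of each polynomial coefficient, where the parity statement forces careful use of the evenness of $t_0(x,\cdot)$ in $\xi$ (Remark \ref{rem1}(e)) and of $\varphi_0$ together with the homogeneous structure of the transport equations from Lemma \ref{eikonalphi}. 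Getting the remainder bound uniform in $\ep$ down to $\ep=0$ — in particular controlling the translated Gaussian-type weight under the lattice shift by $\mu/\sqrt\ep$ — is the second place where care is needed, and it is exactly there that the cut-off $\zeta_\ep$ (confining $x=\sqrt\ep y$ to $\O_2$, where $\tilde\varphi=\varphi$ and $\nabla\varphi$ is bounded) is indispensable.
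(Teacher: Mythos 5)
Your plan follows essentially the same route as the paper's proof: split off the potential part, conjugate and rescale the kinetic part, Taylor-expand both the $\ep$-dependent coefficients and the lattice shifts in powers of $\sqrt\ep$, invoke the eikonal equation to cancel the $\ep^{-1}$ order, read off degree and parity of the $b_{k,\alpha}$ from the $\sqrt\ep$-counting (the paper obtains the same information by tracking the homogeneity of the polynomials $q^{(k)}_{m+\ell}$ appearing in its Faa di Bruno expansion \eqref{faaut}--\eqref{tay4}), and control the remainder via the exponential decay of the $a_\gamma$ together with the cut-off $\zeta_\ep$. The one parenthetical worth correcting is your attribution of the parity to the $\pm\eta$-symmetry of Remark~\ref{rem1}(e): that symmetry holds only for $a^{(0)}$ (not for $a^{(k)}$, $k\ge1$), the parity really follows from homogeneity alone (as your closing paragraph in fact says), and the paper uses the $\pm\eta$-symmetry instead to ensure $B^{(0)}_\alpha=0$ for $|\alpha|\le 1$, so that the conjugated kinetic part starts at order $\ep^0$.
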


\begin{rem}\label{remG_k}
\ben
\item As a map on $\C [y]$, $G_k\, , \; k\in\frac{\N}{2}$,
raises the degree of a polynomial by $2k$ and preserves (or
changes) the parity with respect to $y\mapsto -y$ according to the
sign $(-1)^{2k}$. This follows at once from the degree and parity
of the polynomials $b_\ell$ in the representation of $G_k$.\\
\item The term of order zero is given more precisely by
\begin{equation}\label{G_0}
G_0 = \Delta_y \varphi_0(y) + \sum_{\nu =1}^d
(2(\partial_{y_\nu}\varphi_0(y))\partial_{y_\nu}) -
\Delta_y + V_1(0) + t_1(0,0)
\end{equation}
This is shown below the proof of Proposition \ref{TayG}.
\een
\end{rem}

\begin{proof}[Proof of Proposition \ref{TayG}]

{\sl Step 1:}\\
We start by analyzing the terms
arising from the potential energy $\hat{V}'_{\ep}$. We have
\[ \tfrac{1}{\ep}U_\ep(\tilde{\varphi}) \hat{V}'_{\ep}U^{-1}_\ep(\tilde{\varphi})
   = \tfrac{1}{\ep}\hat{V}'_{\ep}(\sqrt{\ep}y)
\]
and, using Hypothesis \ref{hyp1},(b), for any $N\in\frac{\N}{2}$ by Taylor expansion of
$V'_\ell(\sqrt{\ep}y)$ for $\ell\in\N, \ell< N$, at $\sqrt{\ep} y =0$, we get for
$N_\ell=2(N-\ell + 1)\in\N$
\begin{align}\label{potepord2}
V'_0(\sqrt{\ep}y) &= \ep \sum_{j=1}^d\lambda_j^2y_j^2 +
\sum_{k = 3}^{N_0-1} \ep^{\frac{k}{2}} D_x^kV'_0|_{x=0}[y]^k +  R_{N,0} (y,\ep)  \\
\ep^\ell  V'_\ell(\sqrt{\ep}y) &=  \sum_{k=0}^{N_\ell-1} \ep^{\ell + \frac{k}{2}} D_x^k
V'_\ell|_{x=0} [y]^k +
R_{N,\ell}(y,\ep) \; , \qquad 1\leq \ell< N \, ,\label{potepord3}
\end{align}
where $D_x^k f|_{x} [y]^k:= D^k_x f|_x (y,\ldots, y)$ and for $0\leq \ell< N$
\begin{equation}\label{Rest1}
 R_{N,\ell}(y,\ep) = \ep^{N+1} \frac{1}{(2N-2\ell )!} \int_0^1 (1-t)^{2(N-\ell)}
 D_x^{2(N-\ell+1)}
V'_\ell |_{x=t\sqrt{\ep}y} [y]^{2(N-\ell+1)} \, dt\; .
\end{equation}
Thus for $u, v\in\C[y]$ and for $0\leq \ell< N$
\begin{equation}\label{Rest2}
 \left|\skpH{u}{\zeta_\ep\frac{1}{\ep} R_{N,\ell}(.,\ep)v}\right| \leq C_{N,\ell} \ep^N
\|u\|_{\mathscr{H}_{\tilde{\varphi}}}
\sum_{\natop{\alpha\in\N^d}{|\alpha|=2(N-\ell+1)}} \| y^\alpha
v\|_{\mathscr{H}_{\tilde{\varphi}}}\; .
\end{equation}
We will need the following notations for $N\in\frac{\N}{2}$:
\begin{equation}\label{N1}
 [N]:= \max \{n\in\N\,|\, n< N+1\}\quad\text{and}\quad [[N]]:= \max \{n\in\N\,|\, n\leq N\}\; .
\end{equation}
Combining the terms in \eqref{potepord2} and \eqref{potepord3} for $0\leq \ell < N$,
leads together with the expansion \eqref{hatVell}
in Hypothesis \ref{hyp1},(b),(i) and the estimates on $R_{[N]}$ given there, to
\begin{align}\label{pot1}
 \tfrac{1}{\ep}U_\ep(\tilde{\varphi}) \hat{V}'_{\ep}U^{-1}_\ep(\tilde{\varphi})(y) &=
 \sum_{j=2}^{2N+1} \ep^{\frac{j}{2}-1} D_x^j V'_0|_{x=0} [y]^j +
 R_{N,0}(y,\ep)\\
&\quad +  \sum_{\ell = 1}^{[N-1]} \left( \sum_{j=0}^{2(N-\ell)+1} \ep^{\frac{j}{2}-1+\ell}
D_x^j V'_\ell |_{x=0} [y]^j
+) R_{N,\ell}(y,\ep)\right)
+ R_{[N]}(y,\ep) \nonumber\\
&=  \sum_{\natop{k\in\hNnull}{k\leq N-\frac{1}{2}}} \ep^k p_k(y) + R_N' (y,\ep)\; ,\nonumber
\end{align}
where
\begin{align}\label{tay45}
p_k(y) &= \sum_{\ell = 0}^{[[k+1]]} D_x^{2(k+1-\ell)} V'_\ell|_{x=0} [y]^{2(k+1-\ell)}
\quad\text{and}\\
R_N'(y,\ep) &= \sum_{\ell=0}^{[N-1]}  R_{N,\ell}(y,\ep) +
\zeta_\ep (y) R_{[N]}(y,\ep)\; .\nonumber
\end{align}
Thus $p_k$ is a polynomial of degree $2k + 2$, which is even (odd) if $2k+2$ is
even (odd) (or if $k$ is integer (half-integer)).\\
It follows from the assumptions given in Hypothesis \ref{hyp1},(b) together with
\eqref{Rest2} that for $u,v\in\C[y]$
\begin{equation}\label{tay41}
 \left|\skpH{u}{\zeta_\ep\frac{1}{\ep} R'_N(.,\ep)v}\right| \leq C_N \ep^N
\|u\|_{\mathscr{H}_{\tilde{\varphi}}}
\sum_{\natop{\alpha\in\N^d}{|\alpha|=2N+2}} \| y^\alpha v\|_{\mathscr{H}_{\tilde{\varphi}}}\; .
\end{equation}

{\sl Step 2:}\\
Now we investigate the coefficients in the expansion of the
kinetic energy $\hat{T}_\ep$ after conjugation with
$U_\ep(\tilde{\varphi})$ on the support of $\zeta_\ep$ and give estimates for the remainder.
By the expansion of $a'_\gamma(x, \ep)$ with respect to $\ep$ following from the assumptions
in Hypothesis \ref{hyp1},(a), we can write for $N\in\frac{\N}{2}$
\begin{align}\label{Tent}
\hat{T}'_\ep &= \sum_{k=0}^{[N-1]} \ep^k \hat{T}_k + \tilde{R}_{[N]} (\ep)\; , \quad
\text{where}\\
\hat{T}_k &:= \sum_{\gamma\in\Gamma_C} a_\gamma^{'(k)} \tau_\gamma \quad\text{and}\quad
\tilde{R}_{[N]}(\ep) =  \sum_{\gamma\in\Gamma_C} R_\gamma^{([N])}(\, . \, ; \ep) \tau_\gamma \; .\label{Tent2}
\end{align}
Using \eqref{astrich}, we get
by Taylor-expansion for $N_k\in\N^*$
\begin{equation}\label{kin2}
 \hat{T}_k g(x) =  \sum_{\natop{\alpha\in\N^d}{|\alpha|< N_k}} \ep^{|\alpha|} B_\alpha^{(k)}(x)
\partial_x^\alpha g|_x  + \tilde{R}'_{N_k} (\ep) g(x)\, , \qquad g\in L^2(\R^d, dx)
\cap \Ce^\infty (\R^d)\,
\end{equation}
where we set
\begin{align}\label{kin1}
B_\alpha^{(k)}(x) &=  \sum_{\eta\in C\Z^d} a_{\ep\eta}^{'(k)} (x) \frac{\eta^\alpha}{\alpha!}=
\frac{1}{\alpha !} \partial_\xi^\alpha t_k (x, i\nabla \tilde{\varphi}(x)) \quad \text{and}\\
\tilde{R}'_{N_k}(\ep) g(x) &= \sum_{\natop{\alpha\in\N^d}{|\alpha|=N_k}}
\sum_{\eta\in C\Z^d} \ep^{|\alpha|} a_{\ep\eta}^{'(k)}(x) \frac{N_k}{\alpha!}\eta^\alpha
\int_0^1 (1-t)^{N_k-1} \partial_x^\alpha g|_{x+t\ep\eta}\, dt \label{kin5}
\end{align}
From Remark \ref{rem1},(b) it follows that $B_\alpha^{(0)} =B'_\alpha = 0$ if $|\alpha|$ is
odd or $|\alpha|=0$.  Inserting
\eqref{kin2} into \eqref{Tent} gives for $\alpha\in \N^d$
\begin{multline}\label{kin3}
 \hat{T}'_\ep g(x) =  \sum_{\natop{|\alpha|=2n, n\in\N}{2\leq |\alpha|< N_0}}
\ep^{|\alpha|}B'_\alpha(x) \partial_x^\alpha g|_x  + \tilde{R}'_{N_0} (\ep)g(x) \\
+   \sum_{k=1}^{[N-1]}  \ep^{k} \left[
\sum_{|\alpha|< N_k} \ep^{|\alpha|}B_\alpha^{(k)}(x)  \partial_x^\alpha g|_x  +
\tilde{R}'_{N_k} (\ep) g(x)\right] + \tilde{R}_{[N]} (\ep)g(x) .
\end{multline}
To analyze the unitary transform of the explicit terms on the right hand side of \eqref{kin3}
we use the following generalized Faa di Bruno formula (see e.g. Hardy \cite{hardy}) for
$g\in\Ce^\infty (\R^d, \R), f\in\Ce^\infty (\R)$ and
$\beta\in\N^d$
\begin{equation}\label{faa}
 \partial^\beta f\circ g = \sum_{n=\min\{1,|\beta|\}}^{|\beta|}
 \sum_{\natop{p\in (\N^d)^n}{\sum_{j=1}^n p_j =\beta, |p_j|\geq 1}}
C_p f^n|_{g} \prod_{j=1}^n \partial^{p_j} g\; .
\end{equation}
\eqref{faa} together with the Leibnitz formula yields for
$f\in\mathscr{H}_{\tilde{\varphi}}$ and  $\alpha, \beta, \beta' \in\N^d$
\begin{multline}\label{faaut}
 U_\ep(\tilde{\varphi})\left[ \ep^k B_\alpha^{(k)}(\ep \partial_x)^\alpha \right]
 U_\ep(\tilde{\varphi})^{-1} f(y) \\
= \ep^{k + |\alpha|}
B_\alpha^{(k)}(\sqrt{\ep}y) \sum_{\beta + \beta' = \alpha}
\sum_{n=\min\{1,|\beta|\}}^{|\beta|}  \sum_{\natop{p\in (\N^d)^n}{\sum_{j=1}^n
p_j =\beta, |p_j|\geq 1}}
C_p \prod_{j=1}^n \left(\partial_x^{p_j} \frac{\tilde{\varphi}}{\ep}\right)|_{\sqrt{\ep}y}
\ep^{-\frac{|\beta'|}{2}} \partial_y^{\beta'} f(y)\; .
\end{multline}
To analyze the right hand side of \eqref{faaut} in detail, we fix $\beta$ and $n$.
Since on the support of $\zeta_\ep$ the phase function $\tilde{\varphi}$ is given by the
asymptotic sum \eqref{varphi},
we have on $\supp \zeta_\ep$
\begin{equation}\label{tay8}
\left|\nabla_x \tilde{\varphi}|_{\sqrt{\ep}y}\right| = O(\sqrt{\ep}) \quad\text{and}\quad
\partial_x^\alpha \tilde{\varphi}|_{\sqrt{\ep}y}=O(1), \,|\alpha|>1,
\qquad (\ep\to 0)\; .
\end{equation}
Thus, for each partition $p$ of $\beta$ of length $n$ (i.e. each
$p=(p_1.\ldots , p_n) \in(\N^d)^n$ with $\sum_{j=1}^n p_j = \beta $), we set
$m_p := m(\beta,n,p) := \# \{p_j\in\N^d\,|\, |p_j| =1\}$.
Then \eqref{tay8} together with \eqref{varphi} yield on the support of $\zeta_\ep$ for
any $N_{\alpha,k}\in\N$ and for $p\in (\N^d)^n$ with $\sum_{j=1}^n p_j =\beta$ and
$|p_j|\geq 1$
\begin{equation}\label{tay2}
\sum_{p, |p_j|\geq 1}  \prod_{j=1}^n\left. \left(\partial_x^{p_j}
\frac{\tilde{\varphi}}{\ep}\right)
\right|_{\sqrt{\ep}y}  =
   \ep^{-n} \sum_{p, |p_j|\geq 1}\left[ \sum_{\ell=0}^{N_{\alpha,k}-1}
\rho_{m_p+\ell}(y) \ep^{\frac{m_p+\ell}{2}} + R''_{N_{\alpha,k}}(y; \ep, m_p) \right]\; ,
\end{equation}
where $\rho_k$ denotes a homogeneous polynomial of degree $k$ and for $u,v\in \C[y]$
\begin{equation}\label{tay11}
 \left|\skpH{u}{\zeta_\ep R''_{N_{\alpha,k}}(.; \ep, m_p)v}\right| \leq C_{N_{\alpha,k}}
 \ep^{\frac{N_{\alpha,k}+ m_p}{2}}
\|u\|_{\mathscr{H}_{\tilde{\varphi}}}
\sum_{\natop{\alpha\in\N^d}{|\alpha|=N_{\alpha,k} + m_p}} \| y^\alpha
v\|_{\mathscr{H}_{\tilde{\varphi}}}\; .
\end{equation}
Note that the polynomials $\rho_k$ depend on $\beta, n, p$ and $\ell$, but are independent of the choice of the truncation
$N, N_k$ and $N_{\alpha,k}$.
For fixed $\beta\in\N^d$ and $n\in\N$, it follows from the definition of $m_p$ that for
any partition $p\in\left(\N^d\right)^n$ of $\beta$ with
length $n$ and $|p_j|\geq 1$
\begin{equation}\label{mp}
\begin{cases} m_p = n\quad\text{if}\quad n=|\beta| \\ (2n-|\beta|)_+ \leq m_p \leq n-1 \quad
\text{if}\quad n<|\beta| \end{cases}\; .
\end{equation}
Thus setting
\begin{equation}\label{M_n}
 M_n := \begin{cases} n-1 \quad \text{for}\quad n<|\beta| \\ n \quad \text{for}\quad
 n=|\beta|\end{cases}\; ,
\end{equation}
the sum over all $p$ on the right hand side of \eqref{tay2} can be substituted by the sum over all $m (=m_p)$ running from $(2n-|\beta|)_+$ to $M_n$.
To expand the right hand side of \eqref{faaut} with respect to $\sqrt{\ep}$, we take Taylor
expansion of
$B_\alpha^{(k)}(\sqrt{\ep}y)$, defined in \eqref{kin1}, at zero up to order $N_{\alpha,k}$,
analog to the expansion of the
potential energy given in \eqref{potepord3}, (we notice that $B_\alpha^{(0)}=0$ if $|\alpha|=0$ or $|\alpha|$ is odd). This yields
for any $k\in\N$ together with \eqref{tay2} and \eqref{faaut} on the support of $\zeta_\ep$
for $\alpha, \beta, \beta'\in\N^d$
\begin{multline}\label{tay4}
U_\ep(\tilde{\varphi})\left[\ep^k B_\alpha^{(k)}
(\ep \partial_x)^\alpha  \right] U_\ep(\tilde{\varphi})^{-1} \\
= \sum_{\beta + \beta' = \alpha}  \sum_{n=\min\{1,|\beta|\}}^{|\beta|}
\sum_{m=(2n-|\beta|)_+}^{M_n}
\sum_{\ell=0}^{N_{\alpha,k}-1}\ep^{k +|\alpha|-n+\frac{m+\ell-|\beta'|}{2}}
 q^{(k)}_{m+\ell}(y) \partial_y^{\beta'}
 + \tilde{\tilde{R}}'_{N_{\alpha,k}}(y; \ep) \; ,
\end{multline}
where
$q_s^{(k)}$ denotes a homogeneous polynomial of degree $s$ (which does not depend on the truncation $N, N_k, N_{\alpha,k}$, but depends on $\alpha, \beta, n, m, \ell$). Moreover  $q^{(0)}_r=0$ for all
$m\leq r <N_{\alpha,k}$, if $|\alpha|=0$ or $|\alpha|$ is odd
and
\begin{multline}\label{tay9}
 \tilde{\tilde{R}}'_{N_{\alpha,k}}(y; \ep)  = \sum_{\beta + \beta' = \alpha}
 \sum_{n=\min\{1,|\beta|\}}^{|\beta|}
\ep^{k+|\alpha|-n-\frac{\beta'}{2}} \sum_{m=(2n-|\beta|)_+}^{M_n} \\
\left[ \sum_{\ell=0}^{N_{\alpha,k}-1}\ep^{\frac{\ell}{2}}
\left(\left.D^\ell B_\alpha^{(k)}\right|_{\sqrt{\ep}y}[y]^{\ell}
R''_{N_{\alpha,k}}(y; \ep, m)\right)\right. \\
\left. + \sum_{\ell=0}^{N_{\alpha,k}-1}\left(\ep^{\frac{m + \ell}{2}}
\rho_{m + \ell}\tilde{R}''_{N_{\alpha,k}}(y,\ep) \right) +  \tilde{R}''_{N_{\alpha,k}}(y,\ep)
R''_{N_{\alpha,k}}(y; \ep, m)\right]
\partial_y^{\beta'}\; .
\end{multline}
where $\tilde{R}''_{N_{\alpha,k}}$ denotes the remaining term in the Taylor expansion of
$B_\alpha^{(k)}$  analog to \eqref{Rest1}
and $R''_{N_{\alpha,k}}$ is the remaining term on the right hand side of \eqref{tay2}.
Analog to \eqref{Rest2} we get for $u,v\in\C[y]$
\begin{equation}\label{tay12}
  \left|\skpH{u}{\zeta_\ep \tilde{R}''_{N_{\alpha,k}}(.,\ep)v}\right| \leq C_{N_{\alpha,k}}
  \ep^{\frac{N_{\alpha,k}}{2}}
\|u\|_{\mathscr{H}_{\tilde{\varphi}}}
\sum_{\natop{\alpha\in\N^d}{|\alpha|=N_{\alpha,k}}} \| y^\alpha
v\|_{\mathscr{H}_{\tilde{\varphi}}}\; .
\end{equation}
Since for fixed $\alpha, \beta, \beta'\in\N^d$ we have
$-n + \frac{m}{2} \geq - \frac{|\beta|}{2}$ for all possible values of $n$ and $m$, it
follows
that $k-1+|\alpha|-n+\frac{m -|\beta'|+\ell}{2} \geq k-1+\frac{|\alpha|}{2}$, thus by
\eqref{tay4} and since $|\alpha|>2$ for $k=0$
the leading order of $\zeta_\ep\frac{1}{\ep}U_\ep(\tilde{\varphi})\hat{T}'_\ep
U_\ep(\tilde{\varphi})^{-1}$ is $\ep^0$.
Moreover these considerations yield by \eqref{tay12}, \eqref{tay9} and \eqref{tay11}
\begin{equation}\label{tay13}
   \left|\skpH{u}{\zeta_\ep \tilde{R}'_{N_{\alpha,k}}(.,\ep)v}\right| \leq C_{N_{\alpha,k}}
\ep^{k + \frac{|\alpha|}{2}-1+\frac{N_{\alpha,k}}{2}}
\sum_{|\alpha|=N_{\alpha,k}}^{2N_{\alpha,k}} \|y^\alpha u\|_{\mathscr{H}_{\tilde{\varphi}}}
\sum_{|\beta'|=0}^{|\alpha|}  \|\partial^{\beta'} v\|_{\mathscr{H}_{\tilde{\varphi}}}\; .
\end{equation}
Combining \eqref{tay4} and \eqref{kin3} leads to
\begin{align}\label{kinepord}
\tfrac{1}{\ep}U_\ep(\tilde{\varphi}) & \hat{T}'_\ep
U_\ep(\tilde{\varphi})^{-1} =: \left( S_1 + S_2 + S_3 + S_4\right)(.,\ep)\, ,
\qquad\text{where for}\;\, v\in{\mathscr H}_{\tilde{\varphi}}\\
S_1(y, \ep) &=  \sum_{k=0}^{[N-1]} \sum_{\natop{\alpha\in\N^d}{|\alpha|< N_k}}
\sum_{\beta + \beta' = \alpha}
\sum_{n=\min\{1,|\beta|\}}^{|\beta|}\ep^{k+|\alpha|-1-n} \nonumber\\
&\hspace{1cm} \times \sum_{m=(2n-|\beta|)_+}^{M_n}
\sum_{\ell=0}^{N_{\alpha,k}}\ep^{\frac{m+\ell-|\beta'|}{2}}
 q^{(k)}_{m+\ell}(y) \partial_y^{\beta'} \label{kine1} \\
S_2(y, \ep) &=\sum_{k=0}^{[N-1]}\sum_{\natop{\alpha\in\N^d}{|\alpha|< N_k}}
\tilde{\tilde{R}}'_{N_{\alpha,k}}(\ep)  \label{kine2} \\
S_3(y, \ep) v(y)&=e^{\frac{\tilde{\varphi}(\sqrt{\ep}y)}{\ep}}\sum_{k=0}^{[N-1]}
\ep^{k-1} \tilde{R}'_{N_k}(\ep)
e^{-\frac{\tilde{\varphi}(x)}{\ep}} \left.v\left(\frac{x}{\sqrt{\ep}}\right)
\right|_{x=\sqrt{\ep}y} \label{kine3}\\
S_4(y, \ep)v(y) &=e^{\frac{\tilde{\varphi}(\sqrt{\ep}y)}{\ep}}  \tilde{R}_{[N]}(\ep)
e^{-\frac{\tilde{\varphi}(x)}{\ep}} \left.v\left(\frac{x}{\sqrt{\ep}}\right)
\right|_{x=\sqrt{\ep}y}\; . \label{kine4}
\end{align}
Recall that $\tilde{R}'_{N_k}$ and $\tilde{R}_{[N]}$ denote operators acting on functions of
$x$.
Choosing $N_{\alpha,k}=2(N+1-k) - |\alpha|$ gives by \eqref{tay13} and \eqref{kine2} for
$u,v\in\C[y]$
\begin{equation}\label{tay33}
   \left|\skpH{u}{\zeta_\ep S_2(\, . \,, \ep) v}\right| \leq C_{N}
\ep^{N}
\sum_{|\alpha|=0}^{4N+4} \|y^\alpha u\|_{\mathscr{H}_{\tilde{\varphi}}}
\sum_{|\beta|=0}^{2N+2}  \|\partial^\beta v\|_{\mathscr{H}_{\tilde{\varphi}}}\; .
\end{equation}
To  estimate the $\mathscr{H}_{\tilde{\varphi}}$-norm of $S_3 v$, we have to analyze the
remainder $R_{N_k}g$ given in \eqref{kin5} in the case
$g(x)=e^{-\frac{\tilde{\varphi}(x)}{\ep}} v(\frac{x}{\sqrt{\ep}})$.
We first remark that by \eqref{tay22}, for $y\in\supp \zeta_\ep$ and for some
$C, C_0>0$, the estimate
\begin{equation}\label{tay23}
 \left| \nabla\tilde{\varphi}({\sqrt{\ep}y + t \ep \eta}) \right| \leq C
\begin{cases} |\sqrt{\ep}y + \ep\eta|\, , &\quad |\eta|\leq \frac{C_0}{\sqrt{\ep}} \\
 1\, , &\quad  |\eta|> \frac{C_0}{\sqrt{\ep}}
\end{cases}
\end{equation}
holds uniformly with respect to $t\in[0,1]$.
Thus for some $C>0$ independent of $t\in[0,1]$, $y\in\supp \zeta_\ep$, by first order Taylor
expansion, we have for any $\eta\in C\Z^d$
\begin{equation}\label{tay24}
 \left|e^{\frac{\tilde{\varphi}(\sqrt{\ep}y)}{\ep}}e^{-\frac{\tilde{\varphi}(\sqrt{\ep}y +
 t\ep\eta)}{\ep}}\right|\leq e^{C|\eta|}\; .
\end{equation}
Moreover by the Leibnitz formula,
\begin{equation}\label{eable}
\partial_x^{\alpha}      e^{-\frac{\tilde{\varphi}(x)}{\ep}} =
e^{-\frac{\tilde{\varphi}(x)}{\ep}}
\left(\nabla_{x} -  \frac{1}{\ep}\nabla\tilde{\varphi}(x)\right)^{\alpha}
\end{equation}
holds. This gives together with \eqref{tay23}
for $C, C_0>0$  and $|\alpha|=N_k$ the estimate
\begin{multline}\label{tay25}
 \left| \int_0^1 (1-t)^{N_k-1} \partial_x^\alpha e^{-\frac{\tilde{\varphi}}{\ep}}
 \left.v\left(\tfrac{.}{\sqrt{\ep}}\right)\right|_{x+t\ep\eta}\, dt \right| \\
\leq  C \sum_{|\tilde{\alpha}|\leq|\alpha|} \int_0^1
\begin{cases} \ep^{-\frac{N_k}{2}}
\left|\tau_{t\sqrt{\ep}\eta}\left( |\, . \,|^{N_k} e^{-\frac{\tilde{\varphi}
(\sqrt{\ep}\, .\,)}{\ep}}\partial^{\tilde{\alpha}}v\right)(y)\right|\, dt  \, ,
\quad\text{if} \;
|\eta|\leq \frac{C_0}{\sqrt{\ep}} \\
\ep^{-N_k} \left|\tau_{t\sqrt{\ep}\eta}\left(e^{-\frac{\tilde{\varphi}
(\sqrt{\ep}\, .\,)}{\ep}}\partial^{\tilde{\alpha}}v\right)(y)\right|\, dt\, ,
\quad\text{if} \;
|\eta|> \frac{C_0}{\sqrt{\ep}}\; .
     \end{cases}
\end{multline}
By \eqref{kin5}, \eqref{kine3}  and \eqref{tay25} it follows that for $v\in\C[y]$
\begin{equation}\label{tay26}
 \|\zeta_\ep S_3(,.\ep) v\|_{\mathscr{H}_{\tilde{\varphi}}}\leq C \sum_{k=0}^{[N-1]}
 \ep^{k-1+N_k}
 \sum_{\natop{\alpha\in\N^d}{|\alpha|=N_k}}\left\|
\zeta_\ep \left\{ A_{1,\alpha}(\, . \, , \ep) + A_{2,\alpha}(\, . \, , \ep) \right\}
\right\|_{\mathscr{H}_{\tilde{\varphi}}}\, ,\end{equation}
where
\begin{align}\label{tay27}
A_{1,\alpha}(y,\ep) := \sum_{|\eta|\leq \frac{C_0}{\sqrt{\ep}}}  \ep^{-\frac{N_k}{2}}
\left|a_{\ep\eta}^{'(k)}(\sqrt{\ep}y)\right| |\eta|^{N_k} \sum_{|\tilde{\alpha}|\leq|\alpha|}
e^{\frac{\tilde{\varphi}(\sqrt{\ep}y)}{\ep}}\int_0^1
\left|\tau_{t\sqrt{\ep}\eta} \left(e^{-\frac{\tilde{\varphi}
(\sqrt{\ep}\, . \,)}{\ep}}|\, . \,|^{N_k}\partial^{\tilde{\alpha}}v\right)\right|\, dt  \\
A_{2,\alpha}(y,\ep) := \sum_{|\eta|> \frac{C_0}{\sqrt{\ep}}}  \ep^{-N_k}
\left|a_{\ep\eta}^{'(k)}(\sqrt{\ep}y)\right| |\eta|^{N_k} \sum_{|\tilde{\alpha}|\leq|\alpha|}
e^{\frac{\tilde{\varphi}(\sqrt{\ep}y)}{\ep}} \int_0^1
\left|\tau_{t\sqrt{\ep}\eta}\left( e^{-\frac{\tilde{\varphi}(\sqrt{\ep}\, . \,)}{\ep}}
\partial^{\tilde{\alpha}}v\right) \right|\, dt\; .
\end{align}
By Cauchy-Schwarz-Inequality, we have for any $c>0$
\begin{multline}
  \|\zeta_\ep A_{2,\alpha}(.,\ep) v\|_{\mathscr{H}_{\tilde{\varphi}}} \leq C  \ep^{-N_k}
  \left\| \left(\sum_{|\eta|> \frac{C_0}{\sqrt{\ep}}}   \left|a_{\ep\eta}^{'(k)}(\sqrt{\ep}\, . \,)\right|^2
  e^{2c|\eta|}\right)^{\frac{1}{2}} \right. \\
 \times \left.
\left( \sum_{|\eta|> \frac{C_0}{\sqrt{\ep}}} e^{-2c|\eta|}  |\eta|^{2N_k}\left[
 \sum_{|\tilde{\alpha}|\leq|\alpha|} e^{\frac{\tilde{\varphi}(\sqrt{\ep}\, . \,)}{\ep}}
 \int_0^1
\left|\tau_{t\sqrt{\ep}\eta}\left( e^{-\frac{\tilde{\varphi}(\sqrt{\ep}\, . \,)}{\ep}}
\partial^{\tilde{\alpha}}v\right) \right|\, dt\right]^2\right)^{\frac{1}{2}}
\right\|_{\mathscr{H}_{\tilde{\varphi}}} \nonumber\\
\leq C  \ep^{-N_k}\sup_{x\in\R^d}\left\|a_{\ep\,.\,}^{'(k)}(x)
e^{c|\, .\,|}\right\|_{\ell_\gamma^2}
 \sum_{|\eta|> \frac{C_0}{\sqrt{\ep}}}   e^{-c|\eta|}  |\eta|^{N_k}e^{C|\eta|} \times \\
 \sum_{|\tilde{\alpha}|\leq|\alpha|}
\int_0^1 \left\|e^{\frac{\tilde{\varphi}(\sqrt{\ep}\, . \,)}{\ep}} \tau_{t\sqrt{\ep}\eta}
\left(e^{-\frac{\tilde{\varphi}(\sqrt{\ep}\, . \,)}{\ep}} \partial^{\tilde{\alpha}}v\right)
\right\|_{\mathscr{H}_{\tilde{\varphi}}} \, dt
\nonumber
\end{multline}
We remark that
\begin{equation}\label{tay40}
\|e^{\frac{\tilde{\varphi}(\sqrt{\ep}\, . \,)}{\ep}} \tau_{t\sqrt{\ep}\eta}
(e^{-\frac{\tilde{\varphi}(\sqrt{\ep}\, . \,)}{\ep}} \partial^{\tilde{\alpha}}v )
\|_{\mathscr{H}_{\tilde{\varphi}}} =  \|\partial^{\tilde{\alpha}}
v \|_{\mathscr{H}_{\tilde{\varphi}}}\; .
\end{equation}
Thus using Hypothesis \ref{hyp1}(a)(iv), we get for each $\alpha\in\N^d, |\alpha|=N_k$
and for any $c'>0$ the estimate
\begin{equation}\label{tay28}
 \|\zeta_\ep A_{2,\alpha}(.,\ep) v\|_{\mathscr{H}_{\tilde{\varphi}}}
\leq  C' e^{-\frac{c'}{\sqrt{\ep}}}\sum_{|\tilde{\alpha}|\leq|\alpha|}
\left\|\partial^{\tilde{\alpha}}v\right\|_{\mathscr{H}_{\tilde{\varphi}}}\; .
\end{equation}
By analog arguments using \eqref{tay25} and \eqref{tay40}, we get
\begin{equation}\label{tay29}
  \|\zeta_\ep A_{1,\alpha}(.,\ep) v\|_{\mathscr{H}_{\tilde{\varphi}}} \leq C
 \ep^{-\frac{N_k}{2}} C''\sum_{|\tilde{\alpha}|\leq|\alpha|}
 \left\||\, . \, |^{N_k} \partial^{\tilde{\alpha}}v\right\|_{\mathscr{H}_{\tilde{\varphi}}}\; .
\end{equation}
Setting $N_k=2(N-k+1)$ and inserting \eqref{tay28} and \eqref{tay29} into \eqref{tay26} gives
\begin{equation}\label{tay30}
 \|\zeta_\ep S_3(,.\ep) v\|_{\mathscr{H}_{\tilde{\varphi}}}\leq C \ep^N
\sum_{|\alpha|\leq 2+2N}\left\||\, . \, |^{N_k}\partial^{\alpha}v
\right\|_{\mathscr{H}_{\tilde{\varphi}}}\; .
\end{equation}
To estimate $S_4$, we use \eqref{Tent2} and the  Cauchy-Schwarz-inequality to get
\begin{align}\label{31}
 \|\zeta_\ep S_4(,.\ep) v\|_{\mathscr{H}_{\tilde{\varphi}}} &=
\left\|\zeta_\ep e^{\frac{\tilde{\varphi}}{\ep}} \sum_{\gamma\in\disk}
R_\gamma^{([N])}(\sqrt{\ep}\, .\,,\ep)e^{\frac{c|\gamma|}{\ep}}
e^{-\frac{c|\gamma|}{\ep}}\tau_\gamma  e^{-\frac{\tilde{\varphi}}{\ep}}v
\right\|_{\mathscr{H}_{\tilde{\varphi}}} \\
&\leq \sup_{x\in\R^d} \left\|R_{(.)}^{([N])}(x,\ep)e^{\frac{c|.|}{\ep}}
\right\|_{\ell^2_\gamma}
\left\| \left(\sum_\gamma \zeta_\ep^2 e^{-\frac{2c|\gamma|}{\ep}}
e^{\frac{2\tilde{\varphi}}{\ep}}\left[\tau_\gamma
e^{-\frac{\tilde{\varphi}}{\ep}}v\right]^2\right)^{\frac{1}{2}}
\right\|_{\mathscr{H}_{\tilde{\varphi}}}\; .\nonumber
\end{align}
\eqref{tay40} and Hypothesis \ref{hyp1}(a)(iv)  yields
\begin{align}\label{tay32}
 \|\zeta_\ep S_4(,.\ep) v\|_{\mathscr{H}_{\tilde{\varphi}}} &\leq \ep^N C \sum_\gamma
 e^{-\frac{c|\gamma|}{\ep}}
\left\| \zeta_\ep e^{\frac{\tilde{\varphi}}{\ep}} \left[\tau_\gamma
e^{-\frac{\tilde{\varphi}}{\ep}}v\right] \right\|_{\mathscr{H}_{\tilde{\varphi}}}\nonumber\\
&\leq C \ep^N \|v \|_{\mathscr{H}_{\tilde{\varphi}}}\; .
\end{align}

{\sl Step 3:}\\
In the last step we are going to combine the terms resulting from
the kinetic and potential energy.
The sum over all $0\leq k \leq [N-1]$ of lhs\eqref{faaut} with $|\beta'|=0$ and
$n=|\beta|(=m_p)$  is given by
\begin{equation}\label{tay5}
\sum_{\natop{\alpha\in\N^d}{|\alpha|=2n, 1\leq n\leq \frac{N_0}{2}}}
\ep^{\frac{|\alpha|}{2}-1} B^{(0)}_\alpha (x) \left(\nabla\tilde{\varphi}|_{x}\right)^\alpha
+ \sum_{k=1}^{[N-1]} \sum_{\natop{\alpha\in\N^d}{|\alpha|< N_k}} \ep^{\frac{|\alpha|}{2}-1 +k}
B_\alpha^{(k)}(x)
\left(\nabla\tilde{\varphi}|_{x}\right)^\alpha  \; ,
\end{equation}
which, by the definition \eqref{kin1} of $B_\alpha^{(k)}$, converges to
$\ep^{-1} t(x,  i\nabla\tilde{\varphi}|_{x})$ as $N\to \infty$.
Since by Hypothesis \ref{tildevarphihyp}, $\tilde{\varphi}$ solves the eikonal
equation \eqref{eikonal} in a neighborhood of $x=0$, it follows that the first sum in
\eqref{tay5} (which for $N_0\to\infty$ converges to
$\ep^{-1}t_0(x, i\nabla\tilde{\varphi}(x))$) cancels
with the potential term $\ep^{-1}V_0(x)$ in each order of $\ep$.
Eliminating the case $\alpha = \beta, n=|\alpha|, k=0$ from the the sum in $S_1$ in
\eqref{kinepord} and eliminating the case
$\ell =0$ in \eqref{tay45}
yields
\begin{align}\label{step3.2}
\hat{G}_\ep
&= \tfrac{1}{\ep}U_\ep(\tilde{\varphi})  \left(\hat{V}'_\ep - V'_0\right)
U_\ep(\tilde{\varphi})^{-1} +
\tfrac{1}{\ep}U_\ep(\tilde{\varphi})  \left(\hat{T}'_\ep - t_0(\, . \, , i\nabla
\tilde{\varphi} (\, . \,))\right)
U_\ep(\tilde{\varphi})^{-1} \\
& =:  A_N + B_N + R_N(\ep) \, . \nonumber
\end{align}

The potential part of order $N$ is
\begin{align}\label{step3.3}
 A_N  = \sum_{\natop{k\in \frac{\N}{2}}{k< N}} \ep^k p'_k \; ,
\end{align}
where
\begin{equation}\label{step3.4}
 p_k(y) = \sum_{\ell = 1}^{[[k+1]]} D_x^{2(k+1-\ell)} V'_\ell|_{x=0} [y]^{2(k+1-\ell)}\; ,
\end{equation}
which is a polynomial of degree $2k - 2$, which is even (odd) if $2k$ is
even (odd) (or if $k$ is integer (half-integer)).
Setting $N_{\alpha,k}= 2(N+1-k) - |\alpha|$ and
$N_k = 2(N + 1 - k)$  in \eqref{kinepord}, the kinetic part of order $N$ in \eqref{step3.2} is
\begin{align}\label{step3.5}
 B_N &= \sum_{\natop{|\alpha|=2n}{1\leq n\leq \frac{2N+2}{2}}} \sum_{\beta + \beta' = \alpha}
\!\sum_{n=\min\{1,|\beta|\}}^{\min \{|\beta|,|\alpha|-1\}}
\!\sum_{m=(2n-|\beta|)_+}^{M_n} \!\sum_{\ell=0}^{2N+2-|\alpha|}
\ep^{|\alpha|-1-n+\frac{m+\ell-|\beta'|}{2}} q^{(0)}_{m+\ell}\partial^{\beta'}  \\
&\! + \sum_{k=1}^{[N-1]}\!\! \sum_{\natop{\alpha\in\N^d}{|\alpha|<2(N+1-k)}}
\!\!\sum_{\beta + \beta' = \alpha}
\sum_{n=\min\{1,|\beta|\}}^{|\beta|}\!\! \!\!\!\!\!\ep^{k-1+|\alpha|-n}\!\!\!\!\!\!\!\!\!
\sum_{m=(2n-|\beta|)_+}^{M_n}\!\!\!
\!\sum_{\ell=0}^{2(N+1-k)-|\alpha|}\!\!\ep^{\frac{m+\ell-|\beta'|}{2}}
 q^{(k)}_{m+\ell} \partial^{\beta'} \, ,
\end{align}
where we recall from \eqref{tay4} that $q^{(k)}_{s}$ is a homogeneous polynomial of degree $s$, which is independent from
the truncation $N$.
First we analyze $B_N$.
In order to get the stated result in \eqref{G_kbeiGeps}, we collect all terms with the
fixed order $r\in\frac{\N}{2}$ in $\ep$. Setting
\begin{equation}\label{ralpha}
r= |\alpha| - 1 - n + \frac{m + \ell - |\beta'|}{2} + k \; ,
\end{equation}
we may rewrite \eqref{step3.5} as
\begin{equation}\label{step3.6}
 B_N = \sum_{\natop{ r\in\frac{\N}{2}}{r<N}}  \sum_{\natop{\beta'\in\N^d}{|\beta'|\leq n_r}}
 \rho_{r, \beta'}
\partial^{\beta'}\; .
\end{equation}
We shall determine $n_r\in\N$ and the properties of $\rho_{r, \beta'}$.
From \eqref{ralpha} we get
\begin{equation}\label{step3.7}
m+\ell = 2r + 2(1+n-|\alpha| - k) + |\beta'| \; .
\end{equation}
By \eqref{step3.5} and \eqref{step3.7}, the polynomial $\rho_{r, \beta'}$ is even (odd) with
respect to
$y\mapsto -y$, if  $2r-|\beta'|$ resp.  is even (odd), since $q^{(k)}_{m+\ell}$ is
homogeneous of order $m+\ell$
and $1+n-|\alpha| - k \in\Z$.
Using \eqref{step3.7} and $\deg q^{(k)}_{m+\ell} = m + \ell$ again, we see that
\begin{equation}\label{step3.8}
 \deg \rho_{r,\beta'} = \max_{(n,\alpha, k)} \left( 2r - 2 ( 1 + n - |\alpha| - k) + |\beta'|
 \right)\; ,
\end{equation}
where $(n, \alpha, k)$ runs through all values occurring in \eqref{step3.5}.
Inspection of
\eqref{step3.5} gives the maximal value
\[ n_{\max} = \begin{cases} \min\{ |\alpha| - |\beta'|, |\alpha| - 1 \} & \quad\text{if}\quad
k=0 \\
               |\alpha| - |\beta'| \quad\text{if}\quad k>0
              \end{cases}
\]
Thus, using \eqref{step3.8},
\begin{equation}\label{step3.9}
\deg \rho_{r,\beta'} = \begin{cases} 2r +2-|\beta'|\, , & |\beta'|>0
\\ 2r\, , &
|\beta'|=0\end{cases} \, .
\end{equation}
It follows from \eqref{step3.8}, that the maximal value $n_r$ of $|\beta'|$ occurring in the
sum on the right hand side of \eqref{step3.6}
is given by  $n_r = 2r+2$.\\
Thus, setting $b_r := \rho_{r, 0} + p_r$ and $b_{r, \beta'} := \rho_{r, \beta'}$, we have by
\eqref{step3.2}, \eqref{step3.3} and \eqref{step3.6} for
any $r\in\tfrac{\N}{2}, r<N$
\[ G_r   =  b_{2r} +
\sum_{\natop{\beta\in\N^d}{1\leq |\beta|\leq 2r+2}} b_{2r+2-|\beta|}\, \partial^\beta \;
, \]
where, by the considerations below \eqref{step3.4} and \eqref{step3.7}, $b_r$ denotes a
polynomial of degree $2r$ which is even
(odd) with respect to $y\mapsto -y$, if $2r$ is even (odd) and $b_{r, \beta}$ denotes a
polynomial of degree $2r + 2 - |\beta|$, which
is even (odd) if $2r - |\beta|$ is even (odd).\\
The estimate \eqref{restTayG} on the remainder $R_N(\ep)$ in \eqref{step3.2} follows at once
from \eqref{tay32}, \eqref{tay30}, \eqref{tay33} and \eqref{tay40}.
\end{proof}

Proof of Remark \ref{remG_k},(b):\\
The kinetic part of the term of order $\ep^0$ results from the terms in \eqref{faaut}, for
which the pair
$(k,|\alpha|)$ takes the values $(0,2), (1,1)$ and $(2,0)$.
These have to be combined with the potential part of this term given by $j=2, \ell=0$ and
$j=0, \ell=1$ respectively
in \eqref{pot1}.
Again, by use of the eikonal equation
\eqref{eikonal}, the terms $V'_{0,q}(x)$ and
$|\nabla\tilde{\varphi}_0(x)|^2$ cancel. Since $B'(x)=\id + o(\id)$, \eqref{G_0} follows by
direct calculation.\\
 \qed\\

We introduce the following formal symbol spaces.
Let for $n\in\N^*$
\begin{align}\label{Seinsdurchn}
\mathcal{K}_{\frac{1}{n}} &:= \left\{ \left.\mu =
\sum_{j\in\frac{\Z}{n}} \mu_j\ep^j \;\right|\; \mu_j\in\C
\quad\textrm{and}\quad
   c_{\mu}:= \inf\{j\,|\,\mu_j\not= 0\}>-\infty \right\}\\
\mathcal{V} &:= \left\{\left. p = \sum_{j\in\frac{\Z}{2}}p_j\ep^j
\;\right|\; p_j\in\C[y]\quad\textrm{and}\quad
   c_p:= \inf\{j\,|\,p_j\not= 0\}>-\infty \right\}\label{K}\, .
\end{align}
Defining addition component-by-component and multiplication by the Cauchy
product,
$\mathcal{K}_{\frac{1}{n}}$ becomes a field of formal Laurent
series with final principal part and $\mathcal{V}$ is a vector
space over $\mathcal{K}_{\frac{1}{2}}$. We can associate to
$\hat{G}_\ep$ a well defined linear operator $G$ on ${\mathcal V}$ by
setting, for ${\mathcal V}\ni p=\sum_{\natop{j\geq
k}{j\in\frac{\Z}{2}}}\ep^j p_j$,
\begin{equation}\label{GinV}
Gp(y) = \sum_{j\geq k}\ep^j \sum_{r\in\frac{\N}{2}}\ep^r G_r p_j(y) =
 \sum_{j+r=\ell\geq k} \ep^\ell G_r p_j(y) \in{\mathcal V} \; .
 \end{equation}
We denote the set of linear operators on $\mathcal{V}$ by $\mathcal{L}(\mathcal{V})$.

We shall define a sesquilinear form on $\mathcal{V}$ with values
in $\mathcal{K}_{\frac{1}{2}}$ (where complex conjugation is
understood component-by-component), which is formally given by
\begin{equation}
\skpk{p}{q} = \int_{\R^d} \overline{p(\ep,y)}q(\ep,y)
e^{-2\frac{\tilde{\varphi}(\sqrt{\ep}y)}{\ep}}\, dy
\, .
\end{equation}
To this end, using \eqref{varphi}, we define real polynomials $\o_k\in\R[y]$ by $\o_0:= 1$ and
\begin{equation}\label{eminus2varphi}
e^{-2\frac{\varphi (\sqrt{\ep}y)}{\ep}} =:
e^{-\sum_{\nu =1}^d\lambda_\nu y_\nu^2}\left(
\sum_{\natop{k\in\hNnull}{k<N}}\ep^k \o_k(y) + \tilde{R}_N(\ep,y) \right)\, .
\end{equation}
Then
\begin{equation}\label{omegaj}
\o_j(y) = \sum_{\ell=1}^{2j}\sum_{\natop{k_1+\ldots + k_\ell = j}{k_i\in\hN}}
\frac{(- 2)^\ell}{\ell!}\varphi_{2k_1}(y) \ldots \varphi_{2k_\ell}(y) \, ,
\end{equation}
where the summands are homogeneous polynomials of degree $2j+2\ell$ with
parity $(-1)^{2j}$ and
\begin{equation}\label{omegajrest}
 |\tilde{R}_N(\ep,y)|  = O\left(\ep^{N} \langle y\rangle^{6N}\right)\; .
\end{equation}

\begin{Def}\label{defskpk}
For $p = \sum_{j\in\frac{\Z}{2}}p_j\ep^j $ and $q =
\sum_{j\in\frac{\Z}{2}}q_j\ep^j$ in $\mathcal{V}$ we define the
sesquilinear form \linebreak
$\skpk{.}{.}:\mathcal{V}\times\mathcal{V}\to\mathcal{K}_{\frac{1}{2}}$
by
\begin{equation}\label{skpk}
\skpk{p}{q} := \sum_{m\in\frac{\Z}{2}}\ep^{m}\sum_{j+k+\ell=m}\;
\int\limits_{\R^d}\overline{p_j(y)}q _k(y)\o_\ell(y) e^{-\sum_{\nu =1}^d
\lambda_\nu y_\nu^2}\, dy\, .
\end{equation}
\end{Def}

Note that $\skpk{p}{q}$ depends only on the Taylor expansion of
$\varphi$ at $0$.

\begin{Lem}
The sesquilinear form defined in (\ref{skpk}) is non-degenerate,
i.e,
\begin{equation}
\skpk{p}{q} = 0\quad\textrm{for all}\quad
p\in\mathcal{V}\quad\textrm{implies}\quad q=0 \, .
\end{equation}
\end{Lem}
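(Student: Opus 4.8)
The plan is to reduce the non-degeneracy of $\skpk{\cdot}{\cdot}$ to the non-degeneracy of the leading-order form, which is essentially the $L^2$-pairing against the Gaussian weight $e^{-\sum_\nu \lambda_\nu y_\nu^2}$ on $\C[y]$ — and the latter is a genuine positive-definite inner product on a dense subspace of a weighted $L^2$-space, hence non-degenerate on polynomials. Suppose $q=\sum_{k\geq c_q}\ep^k q_k\in\mathcal V$ is nonzero, so that the bottom component $q_{c_q}\in\C[y]$ is a nonzero polynomial. I must produce a $p\in\mathcal V$ with $\skpk{p}{q}\neq 0$.

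First I would take the test element $p := \bar q_{c_q}\in\C[y]\subset\mathcal V$, i.e. the single polynomial $\overline{q_{c_q}(y)}$ sitting in degree $\ep^0$ (so $c_p = 0$). Then, reading off \eqref{skpk}, the coefficient of $\ep^m$ in $\skpk{p}{q}$ is $\sum_{j+k+\ell=m}\int \overline{p_j}\,q_k\,\o_\ell\,e^{-\sum_\nu\lambda_\nu y_\nu^2}\,dy$. Since $p_j = 0$ unless $j=0$, and $q_k = 0$ unless $k\geq c_q$, and $\o_\ell = 0$ unless $\ell\geq 0$ (with $\o_0 = 1$), the lowest possible value of $m = j+k+\ell$ is $m = c_q$, attained only by the single triple $(j,k,\ell) = (0,c_q,0)$. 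Hence the coefficient of $\ep^{c_q}$ in $\skpk{p}{q}$ equals
\[
\int_{\R^d} \overline{p_0(y)}\,q_{c_q}(y)\,\o_0(y)\,e^{-\sum_{\nu=1}^d\lambda_\nu y_\nu^2}\,dy
= \int_{\R^d} |q_{c_q}(y)|^2\,e^{-\sum_{\nu=1}^d\lambda_\nu y_\nu^2}\,dy \; .
\]
This integral is strictly positive because $q_{c_q}$ is a nonzero polynomial and the Gaussian weight is strictly positive and integrable against any polynomial (all $\lambda_\nu>0$). Therefore $\skpk{p}{q}\neq 0$, and since $c_p = 0 > -\infty$ and $q_{c_q}\in\C[y]$, the element $p$ indeed lies in $\mathcal V$; this establishes non-degeneracy.

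**Main obstacle.** There is essentially no serious obstacle here — the argument is just the observation that the valuation (lowest-order term) of $\skpk{p}{q}$ in the formal variable $\ep$ is controlled by the product of the bottom components, together with positivity of the Gaussian-weighted $L^2$-pairing on polynomials; the higher-order polynomials $\o_\ell$ with $\ell\geq 1$ and all higher $\ep$-components of $q$ cannot contribute at the critical order $m=c_q$. The only point requiring a word of care is the bookkeeping that $c_p = 0$ and the bottom index of $q$ is $c_q$ force the minimal total degree $j+k+\ell$ to be exactly $c_q$ with a unique contributing triple; this is immediate from $\o_0 = 1$ and the definitions of $c_p, c_q$ in \eqref{K}.
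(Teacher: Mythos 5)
Your overall strategy is exactly the paper's: pick a test element $p$ that is (essentially) the bottom component of $q$ and observe that the lowest order of $\skpk{p}{q}$ is a strictly positive Gaussian integral that no higher order can cancel. However, there is a conjugation slip that makes the crucial equality false as written. The form \eqref{skpk} already conjugates the first argument: its coefficients involve $\overline{p_j}\,q_k$. You set $p_0 := \overline{q_{c_q}}$, so that $\overline{p_0} = q_{c_q}$, and the coefficient of $\ep^{c_q}$ is
\begin{equation*}
\int_{\R^d} \overline{p_0(y)}\,q_{c_q}(y)\,e^{-\sum_{\nu}\lambda_\nu y_\nu^2}\,dy
= \int_{\R^d} q_{c_q}(y)^2\,e^{-\sum_{\nu}\lambda_\nu y_\nu^2}\,dy\, ,
\end{equation*}
which is \emph{not} $\int |q_{c_q}|^2 e^{-\cdots}$ and need not be nonzero for a nonzero complex polynomial. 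For instance, in one dimension with $\lambda=\tfrac12$ and $q_{c_q}(y)=1+iy$, one has $q_{c_q}^2 = 1-y^2+2iy$ and $\int (1-y^2)e^{-y^2/2}\,dy = 0$, so your chosen $p$ fails to witness non-degeneracy. The fix is to drop the bar: take $p := q_{c_q}$ (or, as in the paper, $p := \ep^{c_q} q_{c_q}$), so that $\overline{p_0}\,q_{c_q} = |q_{c_q}|^2$ and the leading coefficient is a genuine squared $L^2$-norm, hence strictly positive. With that one-character correction, the rest of your bookkeeping — that $(j,k,\ell)=(0,c_q,0)$ is the unique triple contributing at the bottom order, using $c_p=0$, $\o_0=1$, $\ell\geq 0$ — is correct and matches the paper's reasoning.
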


\begin{proof}
If $q\not= 0$ we have $q=\sum_{j\geq k}q_j\ep^j,\,k,j\in\hZ$ with $q_k\neq 0$ for
some $k$. For $p:=\ep^k q_k$, the lowest order of the
sesquilinear form is
given by
\[ \ep^{2k} \int_{\R^d}|q_k|^2e^{-\sum_{\nu =1}^d\lambda_\nu y_\nu^2}\, dy >0 \, .\]
Since all other combinations lead to higher orders in $\ep$, this
term can not
be cancelled.
\end{proof}

\begin{prop}\label{Gsymm}
Let $G$ be the operator \eqref{GinV} on $\mathcal{V}$ induced by
$\hat{G}_{\ep}$ defined in \eqref{defGeps} and
let $\skpk{.}{.}$ be the non-degenerate sesquilinear form introduced in Definition
\ref{defskpk}.
Then for all $p,q\in\mathcal{V}$
\[ \skpk{p}{Gq} = \skpk{Gp}{q} \; .\]
\end{prop}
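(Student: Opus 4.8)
The plan is to derive the formal symmetry of $G$ from the genuine self-adjointness of $\hat G_\ep$ on $\mathscr H_{\tilde\varphi}$ (noted just after Definition \ref{hypphi}), by recognizing the sesquilinear form $\skpk{\cdot}{\cdot}$ and the operator $G$ as the asymptotic expansions, respectively, of the genuine weight $e^{-2\tilde\varphi(\sqrt\ep\,\cdot\,)/\ep}$ near the origin and of the genuine operator $\hat G_\ep$.

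First I would reduce to $p,q\in\C[y]$. By Definition \ref{defskpk} and \eqref{GinV}, for each fixed $m$ the coefficient of $\ep^m$ in $\skpk{p}{Gq}$ (and likewise in $\skpk{Gp}{q}$) is a finite $\C$-bilinear combination of quantities $\int_{\R^d}\overline{p_j}\,(G_r q_k)\,\omega_\ell\, e^{-\sum_\nu\lambda_\nu y_\nu^2}\,dy$, i.e. of coefficients of the series $\skpk{u}{Gv}$ with $u,v\in\C[y]$; so it is enough to show $\skpk{u}{Gv}=\skpk{Gu}{v}$ for all polynomials $u,v$.

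The heart of the argument is to show that, for $u,v\in\C[y]\subset\De(\hat G_\ep)\subset\mathscr H_{\tilde\varphi}$, the genuine pairing $\skpH{u}{\hat G_\ep v}$ has $\skpk{u}{Gv}$ as its asymptotic expansion in $\ep$. To this end I would insert the cut-off $\zeta_\ep$ of Proposition \ref{TayG}: on $\supp(1-\zeta_\ep)$ one has $|\sqrt\ep y|\geq\delta_0>0$, so the weight is $O(e^{-c/\ep})$ by \eqref{tay22}, while $|\hat G_\ep v(y)|\leq C\ep^{-1}P(|y|)$ for some polynomial $P$ (from \eqref{defGeps}, \eqref{hatHstrich}, \eqref{unit}, using $|\tilde\varphi(\sqrt\ep y)-\tilde\varphi(\sqrt\ep y+\mu)|\leq b|\mu|$ and $\sum_\mu(\sup|a'_\mu|)e^{b|\mu|/\ep}<\infty$ from Remark \ref{rem1}(g)), so $\skpH{u}{(1-\zeta_\ep)\hat G_\ep v}=O(\ep^\infty)$. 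On the remaining part I would substitute $\hat G_\ep=\sum_{k\leq N-\frac12}\ep^k G_k+R_N$: the term $\skpH{u}{\zeta_\ep R_N v}$ is $O(\ep^N)$ by \eqref{restTayG}, while since $\supp\zeta\subset\O_2$ and $\chi\equiv1$ on $\O_2$ one has $\tilde\varphi=\varphi$ on $\supp\zeta_\ep$, hence
\[ \skpH{u}{\zeta_\ep G_k v}=\int_{\R^d}\overline{u}\,(G_k v)\,\zeta_\ep\, e^{-2\varphi(\sqrt\ep\,\cdot\,)/\ep}\,dy. \]
Expanding this via \eqref{eminus2varphi} (the remainder controlled by \eqref{omegajrest}, and $(1-\zeta_\ep)e^{-\sum_\nu\lambda_\nu y_\nu^2}$ integrating against polynomials to $O(\ep^\infty)$) yields, for every $M\in\hNnull$ and $N$ large enough,
\[ \skpH{u}{\hat G_\ep v}=\sum_{\natop{k,\ell\in\hNnull}{k+\ell<M}}\ep^{k+\ell}\int_{\R^d}\overline{u}\,(G_k v)\,\omega_\ell\, e^{-\sum_\nu\lambda_\nu y_\nu^2}\,dy+O(\ep^M), \]
whose finite sum is, by Definition \ref{defskpk}, precisely the partial sum of $\skpk{u}{Gv}$ up to $\ep^M$. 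Then, since $\hat G_\ep$ is self-adjoint with $\C[y]$ in its domain, $\skpH{u}{\hat G_\ep v}=\skpH{\hat G_\ep u}{v}=\overline{\skpH{v}{\hat G_\ep u}}$; applying the expansion just obtained to both $\skpH{u}{\hat G_\ep v}$ and $\skpH{v}{\hat G_\ep u}$, and using that $\skpk{\cdot}{\cdot}$ is Hermitian (immediate from \eqref{skpk}, the $\omega_\ell$ being real polynomials), the two formal series $\skpk{u}{Gv}$ and $\overline{\skpk{v}{Gu}}=\skpk{Gu}{v}$ are asymptotic expansions of one and the same function of $\ep$, hence equal term by term. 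Together with the first step this gives $\skpk{p}{Gq}=\skpk{Gp}{q}$ on all of $\mathcal V$.

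The main obstacle is the bookkeeping in the heart of the argument: one must verify that neither the cut-off $\zeta_\ep$, which appears in the remainder estimate \eqref{restTayG} but not in the definition of $G$, nor the replacement of $\tilde\varphi$ by its Taylor data $\varphi$, affects the asymptotic expansion. Both are harmless because the discrepancies are supported in the region $|\sqrt\ep y|\gtrsim1$, where the weight $e^{-2\tilde\varphi(\sqrt\ep\,\cdot\,)/\ep}$ (resp. the Gaussian $e^{-\sum_\nu\lambda_\nu y_\nu^2}$) is exponentially small; the only mild point to watch is the a priori bound on $\hat G_\ep v$ in that region, which is supplied by the exponential decay of the coefficients $a'_\mu$.
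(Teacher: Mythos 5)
Your proposal is correct and follows essentially the same route as the paper: reduce to $p,q\in\C[y]$, show that $\skpH{p}{\hat G_\ep q}$ has the formal series $\skpk{p}{Gq}$ as its asymptotic expansion (this is the paper's Lemma \ref{Jlem}, with the same two-step structure of cutting off by $\zeta_\ep$ and then expanding via Proposition \ref{TayG} and \eqref{eminus2varphi}), and then transfer the symmetry from the genuine self-adjointness of $\hat G_\ep$ on $\mathscr H_{\tilde\varphi}$; the paper packages the last step as a commutative diagram while you argue directly with conjugates, but the content is the same.
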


\begin{proof}
We will consider $Y=\C[y]$ as a subset of the form domain of $\hat{G}_{\ep}$ for
$\ep >0$. This can canonically be identified with a subset
$\mathcal{Y}$ of $\mathcal{V}$.
By the linearity of $\skpk{\, . \,}{\, . \,}$, it is sufficient to
prove the proposition for
$p,q\in\mathcal{Y}$.\\
We need the following lemma:

\begin{Lem}\label{Jlem}
Let $p,q\in Y$, then the function
\begin{equation}\label{J1}
(0,\sqrt{\ep_0}) \ni \sqrt{\ep} \mapsto \left\langle p, \hat{G} q\right\rangle (\sqrt{\ep}) :=
\skpH{p}{\hat{G}_\ep q}
\end{equation}
has an asymptotic expansion at $\sqrt{\ep}=0$. In particular, for any $N\in\frac{\N}{2}$,
we have
\begin{equation}\label{J2}
\langle p, \hat{G} q\rangle (\sqrt{\ep}) =
\sum_{\natop{\ell,k\in\frac{\N}{2}}{\ell>0, k+\ell<N}} \ep^{k+\ell} \int_{\R^d}\bar{p}(y)
G_k q(y) e^{-\sum_{\nu=1}^d \lambda_\nu y_\nu^2}\o_\ell (y)\, dy + O\left( \ep^N\right)
\end{equation}
where $G_k$ is given in Proposition \ref{TayG} and the polynomials $\o_\ell$ are
defined in \eqref{eminus2varphi}.
\end{Lem}

\begin{proof}[Proof of Lemma \ref{Jlem}]

{\sl Step 1:}\\
We will show that, for $\zeta_\ep$ as in Proposition \ref{TayG}, there exists some $C>0$ such
that
\begin{equation}\label{J3}
\langle p, \hat{G} q\rangle (\sqrt{\ep}) = \int_{\R^d} e^{-\frac{\varphi (\sqrt{\ep}y)}{\ep}}
\zeta_\ep(y) \bar{p}(y) \hat{G}_\ep q(y) \, dy + O\left( e^{-\frac{C}{\ep}} \right)\; .
\end{equation}
In fact, we can write
\begin{align}
\left\langle p, \hat{G} q\right\rangle (\sqrt{\ep}) &= J_1(\sqrt{\ep}) +
J_2(\sqrt{\ep}) \, , \quad\text{where}\\
J_1(\sqrt{\ep}) := \skpH{\zeta_\ep p}{\hat{G}_\ep q} \quad&\text{and}\quad
J_2(\sqrt{\ep}) := \skpH{(1-\zeta_\ep)p}{\hat{G}_\ep q}\; .
\end{align}
Since by Hypothesis \ref{tildevarphihyp} and the definition of $\zeta$ in Proposition
\ref{TayG} we have $\tilde{\varphi}(x) = \varphi (x)$
for $x\in\supp \zeta$, it remains to show that $|J_2(\sqrt{\ep})|=O( e^{-\frac{C}{\ep}} )$.

By the definition of $\hat{G}_\ep$ and with $x=\sqrt{\ep} y$ we can write
\begin{equation}\label{J4}
 J_2(\sqrt{\ep})  = \ep^{-\frac{d}{2}-1} \int_{\R^d} \bar{p}(\tfrac{x}{\sqrt{\ep}})
 e^{-\frac{\tilde{\varphi}(x)}{\ep}}
\hat{H}'_\ep \left( q(\tfrac{x}{\sqrt{\ep}}) e^{-\frac{\tilde{\varphi}(x)}{\ep}}\right)
(1-\zeta(x))\, dx\; .
\end{equation}
By Hypothesis \ref{hyp1} and \eqref{hatHstrich} we have $\hat{H}'_\ep = \hat{T}'_\ep +
\hat{V}'_\ep$, where
$\hat{T}'_\ep$ is bounded (see Remark \ref{rem1},(c)) and $\hat{V}'_\ep$ is a polynomially
bounded multiplication operator
on $L^2(\R^d)$, thus
\begin{equation}\label{J5}
 (1-\zeta(x)) \hat{H}'_\ep \left( q(\tfrac{x}{\sqrt{\ep}})
 e^{-\frac{\tilde{\varphi}(x)}{\ep}}\right) =: u_\ep(x) \in L^2(\R^d)\; ,
\end{equation}
where $\|u_\ep\|_{L^2} = O(\ep^{-m})$ for some $m>0$ depending on the dimension $d$.
We therefore have by
\eqref{tay22} for some $C>0$, using the Cauchy-Schwarz-inequality in \eqref{J4},
\begin{equation}\label{J6}
 | J_2(\sqrt{\ep})| \leq \ep^{-\frac{d}{2}-1}  \| u_\ep\|_{L^2} \left( \int_{|x|>\eta}
 e^{-\frac{\tilde{C}|x|}{\ep}}
|p(\tfrac{x}{\sqrt{\ep}})|^2 \, dx\right)^{\frac{1}{2}} = O\left(e^{-\frac{C}{\ep}}\right)\; .
\end{equation}

{\sl Step 2:}\\
We will show that for all $N\in\N$
\begin{equation}\label{J7}
\int_{\R^d} e^{-\frac{2\varphi (\sqrt{\ep}y)}{\ep}} \zeta_\ep(y) \bar{p}(y) \hat{G}_\ep
q(y) \, dy =
\!\!\sum_{\natop{\ell,k\in\frac{\N}{2}}{\ell>0, \ell + k <N}} \!\!\!\ep^{\ell + k}
\int_{\R^d}
e^{-\sum_{\nu=1}^d \lambda_\nu y_\nu^2}
\zeta_\ep(y) \o_\ell (y)\bar{p}(y) G_k q(y) \, dy + O(\ep^N)\, ,
\end{equation}
which together with \eqref{J3} shows \eqref{J2}.

By Proposition \ref{TayG}
\begin{equation}\label{J8}
lhs\eqref{J7} = \sum_{\natop{k\in\frac{\N}{2}}{0\leq k <N}} \ep^{k} \int_{\R^d}
e^{-\frac{2\varphi (\sqrt{\ep}y)}{\ep}}
\zeta_\ep(y)\bar{p}(y) G_k q(y) \, dy +O(\ep^N)\; .
\end{equation}

By the definition of $\o_\ell$ in \eqref{eminus2varphi} together with \eqref{omegajrest}, it
follows by the expansion of the exponential function that
\begin{equation}\label{J10}
\zeta_\ep (y) e^{-\frac{2\varphi (\sqrt{\ep}y)}{\ep}} =
\zeta_\ep (y) e^{-\sum_{\nu =1}^d\lambda_\nu y_\nu^2}\left(1+
\sum_{\natop{k\in\hN}{k< N}}\ep^k \o_k(y)+ \hat{R}_N(\ep, y)\right) \, ,
\end{equation}
where for some $\hat{C}_N>0$
\begin{equation}\label{J11}
|\hat{R}_N(\ep, y)| \leq \ep^N \hat{C}_N \langle y \rangle^{6N}\; .
\end{equation}
Inserting \eqref{J10} into \eqref{J8}, using \eqref{J11} and Remark \ref{remG_k}, gives
\eqref{J7}.
\end{proof}

We come back to the proof of Proposition \ref{Gsymm}.\\
In order to use the symmetry of $\hat{G}_{\ep}$ on
$\mathscr{H}_{\tilde{\varphi}}$, we use the function $\langle p, \hat{G} q\rangle$ on
$(0,\sqrt{\ep_0})$
defined in \eqref{J1}.
By Lemma \ref{Jlem} it has an asymptotic expansion at $\sqrt{\ep}=0$, which induces a mapping
\[
\left\langle\, .\,, \hat{G}\, .\,\right\rangle : Y\times Y \to \mathcal{K}_{\frac{1}{2}}\; .\]
By \eqref{GinV},  \eqref{skpk} and \eqref{J2} we see that this function coincides with the
quadratic form $\skpk{\,\cdot\,}{G\,\cdot\,}$
restricted to $\mathcal{Y}\times\mathcal{Y}$. Using the definition \eqref{J1}, we see that
the diagram\\
\begin{center}
$\begin{CD}
Y\times Y   @>\skpH{.}{\hat{G}_\ep.}>>  \mathcal{F}\\
@VV\id V                                @VVTV        \\
\mathcal{Y}\times\mathcal{Y}  @>\skpk{.}{G.}>> \mathcal{K}_{\frac{1}{2}}\\
\end{CD}$\\
\end{center}
is commutative. Here $\mathcal{F}$ denotes the set of functions on $(0,\sqrt{\ep_0})$, which
possess an asymptotic expansion in integer
powers of $\sqrt{\ep}$ and $T$ denotes asymptotic expansion.

Since for $\skpH{\hat{G}_\ep\, .\,}{\,.\,}$ and $\skpk{G\,.}{.}$ we have an analog
commutative diagram,
the proposition is traced back to the symmetry of $\hat{G}_\ep$ on
$\mathscr{H}_{\tilde{\varphi}}$.
\end{proof}

\section{Construction of formal asymptotic expansions}\label{harmapprox}

In this section we construct formal asymptotic expansions for the
eigenfunctions and eigenvalues of $\hat{H}_\ep$.\\
First we recall that the operator $\ep G_0$ on
$\mathscr{H}_{\varphi}$, given in \eqref{G_0}, is unitary
equivalent to the harmonic oscillator
$\hat{H}'^0_q$ defined in \eqref{Hharmonisch},
where the unitary transformation $U_\ep(\varphi_0)$ is defined in
\eqref{unit}. Therefore the spectrum of $G_0$ is given
by
\begin{equation}\label{sigmaKj}
\sigma(G_0) = \left\{\left. e_{\alpha} =
\sum_{\nu =1}^d\left(\lambda_\nu(2\alpha_\nu +1)\right) +
 V_1(0) + t_1(0,0)\; \right|\;
\alpha \in \N^d\right\}\:.
\end{equation}
The eigenfunctions of $\hat{H}'^0_q$ are the functions $g_{\alpha}$
defined in \eqref{gnj} with $\varphi_0$ introduced in
\eqref{phi0}, thus the $\Hi_{\tilde{\varphi}}$-normalized eigenfunctions of $\ep
G_0$ are given by
\begin{equation}\label{Gnullh}
\left(U_\ep(\varphi_0)g_{\alpha}\right)(y) =  h_{\alpha}(y) \, ,\qquad
G_0 h_\alpha = e_\alpha h_\alpha
\end{equation}
where $h_\alpha$ denotes a product of
Hermite polynomials $h_{\alpha_\nu}\in\R[y_\nu]$. Since $h_k(-x)=
(-1)^{k }h_k(x)$ for any $k\in\N$, it follows that $h_{\alpha}$ is even
(respectively odd), if $|\alpha|$ is even (resp. odd).\\
In order to get an expression for the resolvent of the full
operator $G$ on ${\mathcal V}$, we notice that for
$z\notin\sigma(G_0)$ the resolvent $R_0(z)=(G_0 - z)^{-1}$ is well
defined on polynomials and hence on $\mathcal{V}$.

\begin{Lem}\label{R(z)lemma}
Let $z\notin\sigma(G_0)$ and $p,q\in\mathcal{V}$. Then
\ben
\item
the inverse of $(G-z) : \mathcal{V}\to\mathcal{V}$ is given by the
formal von Neumann series
\begin{align} \label{R(z)}
R(z) &:=
\sum_{k=0}^{\infty}\left[-R_0(z)\sum_{j\in\hN}\ep^jG_j\right]^k
R_0(z) = - \sum_{j\in\hNnull}\ep^jr_j(z)
\qquad\text{with}\\
r_j &:= \sum_{k=1}^{2j} (-1)^k \sum_{\natop{j= J_1 +\ldots + j_k}{j_1, \ldots
j_k \in\frac{\N^*}{2}}}
\left(\prod_{m=1}^k -R_0 G_{j_m}\right) R_0\; .\label{rj}
\end{align}
\item
\begin{equation} \label{Rsymm}
\skpk{p}{R(z)q} = \skpk{R(\bar{z})p}{q} \; .
\end{equation}
\item For $r_j$ defined in \eqref{rj}
\begin{equation}\label{r_jsymm}
\skpk{p}{r_j(z)q} = \skpk{r_j(\bar{z})p}{q}, \qquad j\in\hNnull \, .
\end{equation}
\een
\end{Lem}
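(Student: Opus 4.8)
The plan is to establish the three parts essentially in order, with part (a) doing most of the work. For part (a), I would first check that the formal von Neumann series $R(z)$ in \eqref{R(z)} makes sense as an operator on $\mathcal{V}$: since $G_0$ raises the polynomial degree by $0$ (it has a zero-order term) but $\sum_{j\in\hN}\ep^j G_j$ carries a strictly positive power of $\ep$, each application of $R_0(z)\sum_j \ep^j G_j$ shifts the $\ep$-order up by at least $\tfrac12$. Hence, when evaluated on any fixed $p=\sum_{j\geq k}\ep^j p_j\in\mathcal{V}$, only finitely many terms of the outer sum over $k$ contribute to each fixed power of $\ep$, so $R(z)p$ is a well-defined element of $\mathcal{V}$. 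Collecting the coefficient of $\ep^\ell$ and reindexing gives the closed form $R(z) = -\sum_{j\in\hNnull}\ep^j r_j(z)$ with $r_j$ as in \eqref{rj}; note that the inner sum there is finite because each $j_m\geq\tfrac12$ forces $k\leq 2j$. Then one verifies $(G-z)R(z) = R(z)(G-z) = \id$ by a direct telescoping computation: writing $G - z = (G_0 - z) + \sum_{j\in\hN}\ep^j G_j$ and using $R_0(z)(G_0-z)=\id$ on $\mathbb{C}[y]$, the series for $R(z)$ is exactly the Neumann series inverting $\id + R_0(z)\sum_j\ep^j G_j$, and the product collapses. This uses only the structural properties of the $G_r$ from Proposition \ref{TayG} and Remark \ref{remG_k}(a), namely that $G_r$ acts on $\mathbb{C}[y]$ raising degree by $2r$.

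For part (b), the identity $\skpk{p}{R(z)q} = \skpk{R(\bar z)p}{q}$ should follow from Proposition \ref{Gsymm} (symmetry of $G$ with respect to $\skpk{.}{.}$) together with the observation that $R_0(z)$ is symmetric in the same sense, i.e. $\skpk{p}{R_0(z)q} = \skpk{R_0(\bar z)p}{q}$ for $z\notin\sigma(G_0)$. The latter follows because $G_0$ is symmetric for $\skpk{.}{.}$ restricted to the leading ($\ep^0$) order — indeed $\ep G_0$ is, up to the unitary $U_\ep(\varphi_0)$, the self-adjoint harmonic oscillator $\hat H_q'^0$ on $\mathscr{H}_{\tilde\varphi}$ — and $R_0(z) = (G_0-z)^{-1}$, so one conjugates the symmetry of $G_0$ through the resolvent in the standard way, taking complex conjugates of $z$. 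Applying this termwise to the von Neumann series \eqref{R(z)}, together with $\skpk{p}{G_j q} = \skpk{G_j p}{q}$ (a consequence of Proposition \ref{Gsymm} applied to the individual homogeneous components, or more carefully extracted from it), and reversing the order of the factors $-R_0(z)G_{j_m}$, yields \eqref{Rsymm}. Care is needed that $\skpk{.}{.}$ is only sesquilinear (conjugate-linear in the first slot), so each transposition of a factor across the form replaces $z$ by $\bar z$ consistently; since $R_0(z)$ and $G_j$ have real (polynomial-coefficient) structure except for the explicit $z$, the bookkeeping is clean.

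Part (c) is then immediate: since $R(z) = -\sum_{j\in\hNnull}\ep^j r_j(z)$ and likewise $R(\bar z) = -\sum_{j}\ep^j r_j(\bar z)$, and $\skpk{.}{.}$ maps into $\mathcal{K}_{\frac12}$ with the $\ep$-grading, comparing the coefficient of each power of $\ep$ on both sides of \eqref{Rsymm} — noting that $\skpk{\ep^a p}{\ep^b q} = \ep^{a+b}\skpk{p}{q}$ and that the form itself contributes its own $\ep$-expansion via the $\o_\ell$ in \eqref{skpk} — forces $\skpk{p}{r_j(z)q} = \skpk{r_j(\bar z)p}{q}$ for every $j\in\hNnull$. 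I expect the main obstacle to be part (b): one must be careful about what "$G_j$ is symmetric" means when $G_j$ by itself does not preserve the $\ep$-order in the way the full $G$ does, and about correctly propagating $z\mapsto\bar z$ through a product of resolvents and operators under a merely sesquilinear form; the rest is formal bookkeeping with the $\ep$-grading on $\mathcal{V}$ and $\mathcal{K}_{\frac12}$.
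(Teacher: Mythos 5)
Your outline of (a) and (c) matches the paper: (a) is the formal Neumann series argument using that each $G_j$ with $j\geq\frac12$ carries a strictly positive power of $\ep$, and (c) is coefficient comparison of \eqref{Rsymm}. But your route to (b) has a genuine gap. You propose to push the symmetry through the Neumann series \emph{termwise}, which requires the individual identities $\skpk{p}{R_0(z)q}=\skpk{R_0(\bar z)p}{q}$ and, crucially, $\skpk{p}{G_j q}=\skpk{G_j p}{q}$ for each fixed $j\in\hNnull$. The latter does \emph{not} follow from Proposition \ref{Gsymm}. That proposition only gives symmetry of the full sum $G=\sum_r\ep^r G_r$, and because the form $\skpk{\cdot}{\cdot}$ itself carries an $\ep$-expansion through the weight polynomials $\o_\ell$ in \eqref{skpk}, the $\ep^m$-coefficient of $\skpk{p}{Gq}=\skpk{Gp}{q}$ is an identity of the form $\sum_{r+\ell=m}\int\bar p\,(G_r q)\,\o_\ell=\sum_{r+\ell=m}\int\overline{G_r p}\,q\,\o_\ell$, which mixes the different $G_r$ and does not decouple into one equation per $r$. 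There is no obvious way to extract the $G_j$-wise symmetry, and it is not needed.

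The paper's proof of (b) sidesteps this entirely with a short algebraic manipulation that uses only part (a) and the symmetry of the \emph{full} operator $G$:
\[
\skpk{p}{R(z) q} \;=\; \skpk{(G-\bar z)R(\bar z)\,p}{R(z)q}
\;=\; \skpk{R(\bar z)p}{(G-z)R(z)q}
\;=\; \skpk{R(\bar z)p}{q}\,,
\]
where the middle step uses $\skpk{Gu}{v}=\skpk{u}{Gv}$ together with the conjugate-linearity of the first slot (so $\skpk{\bar z\,u}{v}=z\skpk{u}{v}$), and the first and last steps use that $R(\bar z)$ and $R(z)$ are two-sided inverses of $G-\bar z$ and $G-z$ on $\mathcal V$. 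You should replace your termwise argument by this direct computation; then (c) follows as you say by comparing coefficients in the $\ep$-expansion $R(z)=-\sum_j\ep^j r_j(z)$.
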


\begin{proof}

(a): Clearly, $R_0(z) $, $G_j$ and $r_j$ are linear operators in $\mathcal{V}$
(i.e. elements of $\mathcal{L}(\mathcal{V})$ (see Remark \ref{remG_k}), thus
$R(z)\in\mathcal{L}(\mathcal{V})$. A short calculation (in the sense of formal power series)
shows that indeed $R(z) = (G-z)^{-1}$.

(b): By (a) and Proposition
\ref{Gsymm}, we can write
\[ \skpk{p}{R(z) q} = \skpk{(G-\bar{z})R(\bar{z}) p}{R(z)q} = \skpk{R(\bar{z})p}{(G-z)R(z)q}
= \skpk{R(\bar{z}) p}{q}\; . \]

(c): This follows directly from the expansion \eqref{R(z)}.
\end{proof}

In the following we will use the resolvent operator $R(z)$ (as a map on $\mathcal{V}$) to
define a spectral projection for $G$ associated to an eigenvalue
of  $G_0$. \\

By \eqref{R(z)}, $R(z)$ is determined on the polynomials and hence
on ${\mathcal V}$ by the action of the operators $r_j(z):{\mathcal
V}\ra {\mathcal V}$ on the Hermite polynomials, which form a basis
in ${\mathcal Y}$ and thus in ${\mathcal V}$.

It follows from Proposition \ref{TayG}, that $G_j$ raises the
degree of each polynomial by $2j$. Thus there exist real numbers
$c_{\alpha\beta}^j$ such that for all $\alpha,\beta\in\N^d,\,
j\in\hNnull$ we have
\begin{equation}\label{Gjhalpha}
G_jh_{\alpha} = \sum_{|\beta| \leq |\alpha| + 2j}c_{\alpha\beta}^j
h_{\beta}\; .
\end{equation}
Combining \eqref{Gjhalpha}, \eqref{rj} and \eqref{Gnullh}, we can
conclude that there exist rational functions
$d_{\alpha\beta}^j(z)$ with poles at most at the elements of the
spectrum of $G_0$ for which
\begin{equation} \label{rjhalpha}
r_j(z)h_{\alpha} = \sum_{|\beta| \leq |\alpha| + 2j}
d_{\alpha\beta}^j(z) h_{\beta}\, .
\end{equation}
Let $E$ be an eigenvalue of $G_0$ with multiplicity $m$ and let
$\Gamma (E)$ be a circle in the complex plane around
$E$, oriented counterclockwise, such that all other eigenvalues of $G_0$ lie outside of it.\\
Since $r_j(z)$ is well defined on
${\mathcal V}$ for each $j\in\frac{\N}{2}$ and depends in a meromorphic way on $z$, we can
define for $p=\sum_{k\geq M}\ep^k p_k\in{\mathcal V}$
\begin{equation}\label{Pih}
\Pi_E p  := \sum_{\natop{k+\ell=j}{\ell\in\hNnull, k\geq M}}\ep^j\frac{1}{2\pi i}
\oint\limits_{\Gamma (E)}r_\ell(z) p_k\,dz \, .
\end{equation}
We denote this operator by
\[ \Pi_E = -\frac{1}{2\pi i}\oint\limits_{\Gamma (E)}(G-z)^{-1}\,dz \, . \]
This is analog to the familiar Riesz projection for operators on a Hilbert space.

\begin{prop}\label{Piprop}
Let $E\in\sigma(G_0)$ with multiplicity $m$.
Then the operator $\Pi_E$ defined in \eqref{Pih} is a symmetric
projection in $\mathcal{V}$ of dimension $m$, which commutes with
$G$.
\end{prop}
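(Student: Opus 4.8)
The plan is to transcribe into the formal-symbol calculus on $\mathcal{V}$ the classical facts about the Riesz projection $-\tfrac{1}{2\pi i}\oint(G-z)^{-1}\,dz$, and then to read off its rank from its leading ($\ep^0$) term.

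First I would check that $\Pi_E$ is a well-defined element of $\mathcal{L}(\mathcal{V})$ commuting with $G$. By Lemma \ref{R(z)lemma}(a) the von Neumann series $R(z)$ is a genuine two-sided inverse of $G-z$ in $\mathcal{L}(\mathcal{V})$, and by \eqref{rjhalpha} its action on the Hermite basis has matrix elements that are rational in $z$ with poles only in $\sigma(G_0)$; so, order by order in $\ep$ and entry by entry, $z\mapsto R(z)$ is meromorphic with poles confined to $\sigma(G_0)$, and since $\Gamma(E)$ encircles only $E$ among those points, $\Pi_E$ is a well-defined element of $\mathcal{L}(\mathcal{V})$. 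Commutation is immediate: $(G-z)R(z)=R(z)(G-z)=\id$ forces $GR(z)=R(z)G$ for $z$ on $\Gamma(E)$, and $G$ may be pulled through the integral.

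For idempotency I would use the algebraic resolvent identity $R(z)R(w)=(z-w)^{-1}\bigl(R(z)-R(w)\bigr)$, valid in $\mathcal{L}(\mathcal{V})$ since $R(z),R(w)$ are the inverses of $G-z,G-w$. Taking nested circles $\Gamma'\subset\Gamma$ about $E$, each enclosing no other point of $\sigma(G_0)$, and applying Cauchy's theorem to the scalar rational matrix elements of $R(z)R(w)$, the standard two-contour computation gives $\Pi_E^2=-\tfrac{1}{2\pi i}\oint_{\Gamma'}R(w)\,dw=\Pi_E$. For symmetry, all the $e_\alpha$ in \eqref{sigmaKj} are real, so $\sigma(G_0)\subset\R$ and $\Gamma(E)$ may be taken to be a circle centred at $E$, hence invariant under $z\mapsto\bar z$; combining Lemma \ref{R(z)lemma}(b), the coefficient-wise conjugation built into \eqref{skpk}, and the change of variable $z\mapsto\bar z$ in the contour integral, one obtains
\[
\skpk{\Pi_E p}{q}=\tfrac{1}{2\pi i}\oint_{\Gamma(E)}\skpk{R(z)p}{q}\,\overline{dz}=-\tfrac{1}{2\pi i}\oint_{\Gamma(E)}\skpk{R(\bar z)p}{q}\,dz=-\tfrac{1}{2\pi i}\oint_{\Gamma(E)}\skpk{p}{R(z)q}\,dz=\skpk{p}{\Pi_E q}.
\]

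For the rank, note that $R(z)=R_0(z)+O(\ep^{1/2})$ in \eqref{R(z)} with $R_0(z)=(G_0-z)^{-1}$, so the $\ep^0$-component of $\Pi_E$ is the genuine Riesz projection $P_E=-\tfrac{1}{2\pi i}\oint_{\Gamma(E)}(G_0-z)^{-1}\,dz$ of $G_0$, whose range is $\Span\{h_\alpha : e_\alpha=E\}$, of dimension $m$ by \eqref{Gnullh}. Setting $f_\alpha:=\Pi_E h_\alpha$ for the $m$ indices $\alpha$ with $e_\alpha=E$, one has $f_\alpha=h_\alpha+O(\ep^{1/2})$, and a leading-order argument shows $\{f_\alpha\}$ is a $\mathcal{K}_{\frac12}$-basis of $\ran\Pi_E$: they are $\mathcal{K}_{\frac12}$-independent because at the lowest $\ep$-order where a vanishing combination could be nontrivial one is left with a $\C$-linear relation among the $h_\alpha$, and they span because any $p\in\ran\Pi_E$ has leading coefficient in $\ran P_E$, so a suitable $\mathcal{K}_{\frac12}$-combination of the $f_\alpha$ matches it and one induces on the (lower-bounded) $\ep$-order, using that $\mathcal{K}_{\frac12}$ is a field. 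Hence $\ran\Pi_E$ is free of rank $m$, i.e.\ $\Pi_E$ has dimension $m$. The step needing most care is the symmetry, where the conjugation in $\skpk{\cdot}{\cdot}$, the conjugation of the contour integral, and the orientation of $\Gamma(E)$ under $z\mapsto\bar z$ must be reconciled; the rest is the formal transcription of the Hilbert-space Riesz-projection calculus.
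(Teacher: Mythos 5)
Your argument is correct and follows the same route as the paper's proof: commutation from $G R(z)=R(z)G$, idempotency via the resolvent identity and two nested contours, symmetry via Lemma \ref{R(z)lemma}(b) together with the conjugation-invariance of $\Gamma(E)$, and rank $m$ from the observation that $\Pi_E h_\alpha = h_\alpha + O(\ep^{1/2})$ for $\alpha\in I_E$, with independence read off at leading $\ep$-order and spanning established by an order-by-order induction over the field $\mathcal{K}_{\frac12}$. The only cosmetic difference is that you run the spanning induction on an arbitrary $p\in\ran\Pi_E$ while the paper runs it on $\Pi_E h_\beta$ for $\beta\notin I_E$; the two formulations are equivalent.
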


\begin{proof}
{\sl Symmetry:}

The symmetry of $\Pi_E$ is a consequence of \eqref{r_jsymm}:
\[ \skpk{p}{\oint_{\Gamma (E)}r_j(z)\, dz q} = -\skpk{\oint_{\Gamma (E)}r_j(z)\, dz p}{q} \, , \]
where the negative sign results from the conjugation of $z$. \\

$\Pi_E^2 = \Pi_E$:\\
Using \eqref{R(z)}, \eqref{Pih} and the resolvent equation, this follows from standard
arguments (see
\cite{erika} or \cite{hesjo} for the computation in the setting of formal power series).

$\rank\Pi_E = m$:\\
We introduce the set
\begin{equation}\label{IE0}
I_{E} := \left\{ \left. \alpha\in\N^d \,\right|\,
G_0 h_\alpha = E h_\alpha \right\} =: \{\alpha^1, \ldots ,
\alpha^m\}
\end{equation}
numbering the $m$ Hermite polynomials with eigenvalue (energy) $E$
for $G_0$. As a consequence of the representation \eqref{R(z)} of $R(z)$
(recall $r_0(z) = R_0(z)$) and of the definition \eqref{Pih} of $\Pi_E$, we
can write for $\alpha\in I_{E}$
\begin{equation} \label{Pihalpha}
\Pi_E h_{\alpha}
=h_{\alpha} + \sum_{j\in\hN}\ep^j p_j
\end{equation}
for some polynomials $p_j\in\C[y]$ of degree less than or equal to
$|\alpha |  + 2j$ (this follows from \eqref{rjhalpha}). Since the
Hermite polynomials form a basis, \eqref{Pihalpha} implies that
the functions $\Pi_E h_{\alpha^k},\, k=1,\ldots m$, are linearly
independent over $\mathcal{K}_{\frac{1}{2}}$. Thus their span has
dimension $m$. It remains to show that this span coincides with
the range of $\Pi_E$, i.e., we have to show that for all
$\beta\in\N_0^d$ there exist
$\mu_{\alpha}\in\mathcal{K}_{\frac{1}{2}},\,\alpha\in I_{E}$, such
that
\begin{equation} \label{Pihbeta2}
\Pi_E h_{\beta} = \sum_{\alpha\in I_{E}}\mu_{\alpha}h_{\alpha} \,
.
\end{equation}
Let $\beta\notin I_{E}$,
then
\begin{equation} \label{Pihbeta}
\Pi_E h_{\beta} =  \sum_{j\in\hN}\ep^j p_j
\end{equation}
for some $p_j\in\C[y]$. Since the Hermite polynomials form a basis
in $\C[y]$, the polynomial $p_{\frac{1}{2}}$ expands to
\begin{equation}\label{peinhalb}
p_{\frac{1}{2}} = \sum_{\alpha\in I_{E}} c_{\alpha}h_{\alpha} +
\sum_{\beta'\notin I_{E}} c_{\beta'}h_{\beta'} \, .
\end{equation}
Applying $\Pi_E$ on both sides of \eqref{Pihbeta} and using
$\Pi_E^2 = \Pi_E$, \eqref{peinhalb} and again \eqref{Pihbeta} for
the second equality, we get
\[
\Pi_E h_{\beta} = \ep^{\frac{1}{2}}\sum_{\alpha\in I_{E}}
c_{\alpha}\Pi_E h_{\alpha} +
  \ep^{\frac{1}{2}}\sum_{\beta'\notin I_{E}} c_{\beta'}\Pi_E h_{\beta'} +
  \sum_{\natop{j\geq 1}
{j\in\hN}} \ep^j \Pi_E p_j
  =  \ep^{\frac{1}{2}}\sum_{\alpha\in I_{E}} c_{\alpha}\Pi_E h_{\alpha} +
   \sum_{\natop{j\geq 1}{j\in\hN}} \ep^j {\tilde p}_j \,
\]
for ${\tilde p}_j\in \C[y]$. Thus by expanding the terms of the
next order we gain the order $\ep^{\frac{1}{2}}$ in the remaining
term and inductively obtain
$\mu_{\alpha}\in\mathcal{K}_{\frac{1}{2}}$
satisfying equation \eqref{Pihbeta2}. The case $\beta\in I_E$ can easily be reduced to this case.\\

$\Pi_E G = G\Pi_E$:\\
This follows from the fact that $G$ commutes with $R(z)$ together
with the definition
\eqref{Pih}.
\end{proof}

The aim of the following construction is to find an orthonormal
basis in $\ran \Pi_E$, such that $G|_{\ran \Pi_E}$ is represented
by a symmetric $m\times m$-Matrix
$M =(M_{ij})$ with $M_{ij}\in\mathcal{K}_{\frac{1}{2}}$.\\
To this end, we set $f_j := \Pi_E
h_{\alpha^j},\, \alpha^j\in I_{E}$. Then equation \eqref{Pihalpha}
and Definition \ref{defskpk} for the sesquilinear form in
$\mathcal{V}$ imply for some $\gamma_k\in\R$
\begin{equation}\label{fifj}
\skpk{f_i}{f_j} = \delta_{ij} + \sum_{k\in\hN}\ep^k\gamma_k\in\mathcal{K}_{\frac{1}{2}}\, ,
\qquad 1\leq i,j\leq m\, ,
\end{equation}
since the Hermite polynomials are orthogonal and the
$g_{\alpha^j}$ are normalized in the $L^2$-norm.
The matrix $F=(F_{ij}):=(\skpk{f_i}{f_j})$ is symmetric,
because the $f_k$ are real functions. Furthermore the elements of the symmetric matrix
$B:= F^{- \frac{1}{2}}$ (given by a binomial series)  are in
$\mathcal{K}_{\frac{1}{2}}$.
Then
\begin{equation}
e:= (e_1, \ldots ,e_m) := (f_1, \ldots ,f_m) B =: f B
\end{equation}
defines an orthonormal  basis $\{ e_1,\ldots e_m\}$ of $\ran
\Pi_E$ (the orthonormalization of \linebreak $\{f_1,\ldots
,f_m\}$).
In this basis, the matrix $M = (M_{ij})$ of $G|_{\ran \Pi_E}$ is
given by
\begin{equation}\label{Mij}
M = e^tGe = Bf^tGfB = BF^GB\, ,
\end{equation}
where $F^G_{kl} := \skpk{f_k}{Gf_l}\in\mathcal{K}_{\frac{1}{2}}$.
Thus $M$ is a finite symmetric matrix with entries in
$\mathcal{K}_{\frac{1}{2}}$. Using Proposition \ref{TayG} and
\ref{Piprop} together with \eqref{Gjhalpha} and \eqref{fifj} and the fact, that
$h_{\alpha^j},\, \alpha^j\in
I_{E}$, are the eigenfunctions of $G_0$ for the eigenvalue $E$, we
can conclude
\begin{equation}
F^G_{ij} = E\delta_{ij} +
\sum_{k\in\hN}\ep^k\mu_k\, ,\quad\text{where}\quad\mu_k\in\R\, .
\end{equation}
It is shown in \cite{erika}, that $\mathcal{K}:=\bigcup_{n\in\N}
\mathcal{K}_{\frac{1}{n}}$ is algebraically closed, thus any
$m\times m$-matrix with entries in $\mathcal{K}$ possesses $m$
eigenvalues in $\mathcal{K}$, counted with their algebraic
multiplicity. By the following theorem, which is proven in the
appendix of \cite{erika} (see also \cite{greklei}), it actually follows that the
eigenvalues
of matrices with entries in the ring $\mathcal{K}_{\frac{1}{n}}$
also lie in $\mathcal{K}_{\frac{1}{n}}$.

\begin{theo}\label{Meigenw}
Let $M$ be a hermitian $m\times m$-matrix with elements in
$\mathcal{K}_{\frac{1}{n}}$ for some $n\in\N$. Then the
eigenvalues $E_1,\ldots E_m$ are in $\mathcal{K}_{\frac{1}{n}}$
with real coefficients, and the highest negative power occurring
in their expansion is bounded by the highest negative power in the
expansions of $M_{ij}$.\\
Furthermore the associated eigenvectors
$u_j\in(\mathcal{K}_{\frac{1}{n}})^m$ can be chosen to be
orthonormal in the natural inner product.
\end{theo}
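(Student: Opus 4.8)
The plan is to reduce to a Hermitian matrix whose entries lie in the valuation ring $\C[[\ep^{1/n}]]\subset\mathcal{K}_{1/n}$, and then to argue by induction on $m$, peeling off blocks by means of a formal Riesz projection attached to each distinct eigenvalue of the constant term $M|_{\ep=0}$. (Note that self-adjoint perturbation theory is exactly what is needed here; for a general family one would only obtain Puiseux series.)

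First I would record the elementary reductions. Since the entries of $M$ have finite principal part, multiplication by a suitable real scalar $\ep^{q/n}$ with $q\in\N$ produces a Hermitian matrix all of whose entries lie in $\C[[\ep^{1/n}]]$; its eigenvalues scale by $\ep^{q/n}$ and its eigenvectors are unchanged, so it suffices to prove the assertions for $M$ with entries in $\C[[\ep^{1/n}]]$ --- showing then that the eigenvalues lie in $\C[[\ep^{1/n}]]$ with real coefficients and the eigenvectors in $(\C[[\ep^{1/n}]])^m$ --- the bound on the highest negative power following by scaling back. For the reality of the coefficients I observe that the coefficients of $\chi(\lambda):=\det(\lambda I-M)$ are, up to sign, sums of principal minors of $M$, i.e.\ determinants of Hermitian submatrices, hence lie in $\R[[\ep^{1/n}]]$; together with the fact (coming out of the induction below) that every eigenvalue is obtained from eigenvalues of Hermitian $1\times 1$ blocks, this forces each eigenvalue-series to be real.

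The heart of the argument is an induction on $m$. If $M$ is scalar-valued, $M=g(\ep^{1/n})I$, then $g\in\R[[\ep^{1/n}]]$ by Hermiticity, $g$ is the eigenvalue and the standard basis gives orthonormal eigenvectors. Otherwise let $s$ be minimal with the $\ep^{s/n}$-coefficient matrix of $M$ non-scalar; then $M=g_{<s}(\ep^{1/n})I+\ep^{s/n}N$ with $g_{<s}\in\R[[\ep^{1/n}]]$ and $N$ Hermitian over $\C[[\ep^{1/n}]]$ such that $N_0:=N|_{\ep=0}$ is non-scalar. The eigenvalues of $M$ are $g_{<s}+\ep^{s/n}$ times those of $N$, and the eigenvectors agree, so it suffices to treat $N$. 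Let $\mu_1,\dots,\mu_r$ with $r\ge 2$ be the distinct eigenvalues of $N_0$, of multiplicities $k_1,\dots,k_r<m$. Exactly as in the construction of $\Pi_E$ earlier in the paper, fix for each $i$ a small circle $\Gamma_i$ about $\mu_i$ separating it from the remaining $\mu_j$; since $\det(\lambda I-N)=\det(\lambda I-N_0)+O(\ep^{1/n})$ has nonvanishing constant term for $\lambda\in\Gamma_i$, the resolvent $(\lambda I-N)^{-1}$ has entries in $\C[[\ep^{1/n}]]$ there, and
\[ P_i:=\frac{1}{2\pi i}\oint_{\Gamma_i}(\lambda I-N)^{-1}\,d\lambda \]
is a well-defined Hermitian idempotent over $\C[[\ep^{1/n}]]$ of rank $k_i$, commuting with $N$, with $\sum_i P_i=I$. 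Hence $\C[[\ep^{1/n}]]^m=\bigoplus_i\ran P_i$ is an orthogonal sum of free modules, $N=\bigoplus_i N_i$ with $N_i:=N|_{\ran P_i}$ Hermitian of size $k_i<m$, and the induction hypothesis applied to the $N_i$ supplies $m$ eigenvalues in $\C[[\ep^{1/n}]]$ with real coefficients together with an orthonormal system of eigenvectors --- eigenvectors from distinct blocks being orthogonal because the $\ran P_i$ are orthogonal, and within a block by induction.

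Finally I would normalize: any eigenvector produced above may be scaled so that its leading term is a nonzero constant vector $v_0$, whence $\langle v,v\rangle=\|v_0\|^2+O(\ep^{1/n})\in\R[[\ep^{1/n}]]$ has an invertible square root in $\R[[\ep^{1/n}]]$ via the binomial series, and $v/\sqrt{\langle v,v\rangle}\in(\C[[\ep^{1/n}]])^m$ is a genuine unit eigenvector. The main obstacle --- and the point where Hermiticity is indispensable --- is precisely the degenerate case $k_i\ge 2$: for a general, non-self-adjoint family the eigenvalues issuing from a multiple eigenvalue of $N_0$ would only be Puiseux series (as for $\bigl(\begin{smallmatrix}0&1\\ \ep&0\end{smallmatrix}\bigr)$), and it is the block decomposition by the self-adjoint Riesz projections $P_i$ into strictly smaller Hermitian matrices, with no further root extraction, that keeps the eigenvalues and eigenvectors inside $\mathcal{K}_{1/n}$.
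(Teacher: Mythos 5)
The paper itself does not prove Theorem~\ref{Meigenw}; it cites the appendix of Klein--Schwarz \cite{erika} (see also Gr\"ater--Klein \cite{greklei}). Your proof is, as far as one can judge without the cited appendix, essentially the standard Rellich-type argument used in those references: strip off the scalar part, peel off blocks with formal Riesz projections attached to the distinct eigenvalues of the leading coefficient, and induct on the matrix size. The reductions (scaling into $\C[[\ep^{1/n}]]$, Hermiticity of $P_i$ via reality of $\mu_i$, freeness of $\ran P_i$ over a local ring, orthogonality of the $\ran P_i$, normalization by the binomial series) are all sound, and your closing remark about why this breaks for non-self-adjoint families (as with $\bigl(\begin{smallmatrix}0&1\\ \ep&0\end{smallmatrix}\bigr)$) is exactly the right point.

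One step is glossed over: to apply the inductive hypothesis you need to represent $N_i:=N|_{\ran P_i}$ as a Hermitian $k_i\times k_i$ matrix with entries in $\C[[\ep^{1/n}]]$, which requires first exhibiting an orthonormal basis of the free $\C[[\ep^{1/n}]]$-module $\ran P_i$. This is not automatic from freeness; one should take a lift of an orthonormal basis of $\ran(P_i|_{\ep=0})$, observe that its Gram matrix is $I+O(\ep^{1/n})$, hence has an inverse square root in the ring by the binomial series, and Gram--Schmidt. You do invoke the binomial-series normalization at the very end, but it is needed already at this stage of the induction, not just for the final eigenvectors. A further small point: the observation that the characteristic polynomial has coefficients in $\R[[\ep^{1/n}]]$ is not by itself enough for reality of the eigenvalues (a real polynomial can have conjugate nonreal roots); the reality genuinely comes from the inductive reduction to $1\times1$ Hermitian blocks, as you correctly note.
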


We can conclude from Theorem \ref{Meigenw} and the special form of
the elements of $M$ defined in \eqref{Mij} that this matrix
possesses $m$ (not necessarily distinct) eigenvalues in
$\mathcal{K}_{\frac{1}{2}}$ of the form
\begin{equation}\label{Ej}
E_j(\ep) = E + \sum_{k\in\hN}\ep^kE_{jk} =
\sum_{k\in\hNnull}\ep^kE_{jk}\, ,\quad j=1, \ldots m
\end{equation}
where $E_{j0} = E$ and the corresponding eigenfunctions are
\begin{equation}\label{psij}
\psi_j(\ep) = \sum_{k\in\hNnull}\ep^k\psi_{jk} \, , \quad\text{where}\quad
\psi_{jk}\in\C[y]\quad\text{with}
\quad \deg \psi_{jk} = \max_{\alpha\in I_E} (|\alpha| + 2k)\, .
\end{equation}
The statement on the degree of $\psi_{jk}$ follows from
\eqref{Pihalpha} and the fact that every eigenfunction can be
written as linear combination
\begin{equation}\label{psilinear}
\psi = \sum_{\alpha\in I_{E}}\lambda_{\alpha}\Pi_E h_{\alpha}
\end{equation}
with coefficients $\lambda_{\alpha}$ without negative powers in
$\sqrt{\ep}$.

Using the parity results in Proposition \ref{TayG} and Remark \ref{remG_k}, we
can prove the next proposition about the absence of half integer
terms in the expansion \eqref{Ej}.

\begin{prop}\label{yhalf}
Let all $\alpha \in I_{E}$ have the same parity (i.e., $|\alpha|$
is either even for all $\alpha \in I_{E}$ or odd for all $\alpha
\in I_{E}$), where $I_{E}$ is defined in \eqref{IE0}. Let $M$
denote the matrix specified in equation \eqref{Mij} and $E_j(\ep)$
its eigenvalues given in (\ref{Ej}). Then $M_{ij}\in\mathcal{K}_1$
and $E_j(\ep)\in\mathcal{K}_1$ for $1\leq i,j\leq m$.
\end{prop}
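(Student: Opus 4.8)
The plan is to reduce everything to a parity bookkeeping on the $\ep$-coefficients of the spectral projection. Write $\sigma\in\{0,1\}$ for the common value of $|\alpha|\bmod 2$ on $I_E$, so that $h_\alpha(-y)=(-1)^\sigma h_\alpha(y)$ for every $\alpha\in I_E$. I will show that under this hypothesis the matrices $F=(\skpk{f_i}{f_j})$ and $F^G=(\skpk{f_i}{Gf_j})$ occurring in \eqref{Mij} have all their entries in $\mathcal{K}_1$, i.e.\ contain no half-integer powers of $\ep$. Granting this, $F=\id+O(\ep)$ with only integer powers of $\ep$, so $B=F^{-1/2}$ (a convergent binomial series over the ring of $m\times m$ matrices with entries in $\mathcal{K}_1$) also has entries in $\mathcal{K}_1$, whence $M=BF^GB$ has entries in $\mathcal{K}_1$; Theorem \ref{Meigenw} applied with $n=1$ to the hermitian matrix $M$ then places its eigenvalues $E_j(\ep)$ of \eqref{Ej} in $\mathcal{K}_1$. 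The crux, and the main obstacle, is controlling the parity of $\Pi_E h_\alpha$ term by term; everything after that is formal manipulation.

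First I would determine the parity of the $\ep$-coefficients of $\Pi_E h_\alpha$ for $\alpha\in I_E$. In the Hermite basis $R_0(z)=(G_0-z)^{-1}$ is diagonal, hence parity-preserving, while by Proposition \ref{TayG} and Remark \ref{remG_k} each $G_r$, $r\in\frac{\N}{2}$, changes the parity of a polynomial by the sign $(-1)^{2r}$. In the Neumann representation \eqref{R(z)}--\eqref{rj} every summand of $r_j(z)h_\alpha$ is obtained from $h_\alpha$ by composing copies of $R_0(z)$ with operators $G_{j_1},\dots,G_{j_k}$ satisfying $j_1+\dots+j_k=j$, so it is a polynomial of parity $(-1)^\sigma\prod_{m=1}^k(-1)^{2j_m}=(-1)^{\sigma+2j}$; this parity does not depend on $z$ and hence survives the contour integral over $\Gamma(E)$. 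Writing $f_i=\Pi_E h_{\alpha^i}=\sum_{j\in\frac{\N}{2}}\ep^j f_{i,j}$ for $\alpha^i\in I_E$, I conclude that each $f_{i,j}\in\C[y]$ is a polynomial of pure parity $(-1)^{\sigma+2j}$.

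Next I would insert this into Definition \ref{defskpk}. By \eqref{omegaj} the polynomial $\omega_\ell$ has pure parity $(-1)^{2\ell}$, and the weight $e^{-\sum_{\nu=1}^d\lambda_\nu y_\nu^2}$ is even. Hence the coefficient of $\ep^m$ in $\skpk{f_i}{f_j}$, namely $\sum_{a+b+\ell=m}\int_{\R^d}\overline{f_{i,a}}\,f_{j,b}\,\omega_\ell\,e^{-\sum_{\nu=1}^d\lambda_\nu y_\nu^2}\,dy$, has an integrand of parity $(-1)^{(\sigma+2a)+(\sigma+2b)+2\ell}=(-1)^{2m}$, since $2\sigma$ is even; for half-integer $m$ this integrand is odd, so every integral over $\R^d$ vanishes and the coefficient is zero. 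Thus $F_{ij}\in\mathcal{K}_1$. The same argument applies to $F^G$: since $Gf_j=\sum_c\ep^c\big(\sum_{r+b=c}G_r f_{j,b}\big)$ and $G_rf_{j,b}$ has parity $(-1)^{2r}(-1)^{\sigma+2b}=(-1)^{\sigma+2c}$, the $\ep^c$-coefficient of $Gf_j$ obeys exactly the same parity rule as that of $f_j$, and the identical vanishing of half-integer terms gives $F^G_{ij}\in\mathcal{K}_1$.

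Finally, by the previous step the expansion \eqref{fifj} of $\skpk{f_i}{f_j}$ beyond its leading term $\delta_{ij}$ involves only integer powers of $\ep$, so $B=F^{-1/2}$ has entries in $\mathcal{K}_1$, and hence so does $M=BF^GB$; this is precisely the assertion $M_{ij}\in\mathcal{K}_1$. Applying Theorem \ref{Meigenw} with $n=1$ to the hermitian matrix $M$ then yields $E_j(\ep)\in\mathcal{K}_1$ for $1\leq j\leq m$. I expect the only genuine work to be the parity analysis of $\Pi_E h_\alpha$ above — verifying that each $\ep$-coefficient is of \emph{one} parity — which rests on $R_0(z)$ being parity-preserving together with the explicit form \eqref{rj} of the Neumann coefficients; the remainder is bookkeeping, and the whole argument is the formal-symbol counterpart of the corresponding absence-of-half-integer-terms step for Schr\"odinger operators in \cite{hesjo} and \cite{erika}.
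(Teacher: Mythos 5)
Your proposal is correct and follows essentially the same route as the paper's proof: both establish that $\Pi_E h_\alpha$ has $\ep^j$-coefficient of pure parity $(-1)^{|\alpha|+2j}$ by tracking parity through $R_0(z)$ (diagonal in the Hermite basis, hence parity-preserving) and the $G_{j_\ell}$ in the Neumann series \eqref{rj}, then use the parity $(-1)^{2\ell}$ of $\omega_\ell$ in Definition \ref{defskpk} to show that the half-integer coefficients of $\skpk{f_\alpha}{f_\beta}$ and $\skpk{f_\alpha}{Gf_\beta}$ are integrals of odd functions, hence vanish, and finally pass from $F, F^G \in \mathcal{K}_1$ to $B=F^{-1/2}$, $M=BF^GB\in\mathcal{K}_1$, and the eigenvalues via Theorem \ref{Meigenw}. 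The only difference is cosmetic — you are slightly more explicit than the paper about the step $F\in\mathcal{K}_1\Rightarrow B\in\mathcal{K}_1$.
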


\begin{proof}
By Theorem \ref{Meigenw} we know that if $M_{ij}\in\mathcal{K}_1$,
the same is true for the eigenvalues $E_j(\ep)$, so it suffices to
prove the proposition for $M_{ij}$.
We will
change notation during this proof to $f_{\alpha}=\Pi_E h_{\alpha}$ and
$F^G_{\alpha \beta}$ for $\alpha, \beta\in I_{E}$. \\
We start by proving that
$\skpk{f_{\alpha}}{f_{\beta}}\in\mathcal{K}_1$. By definition
\eqref{Pih} the coefficients in the power series of $f_{\alpha}$
are given by
\begin{equation}\label{falphaj}
f_{\alpha j} = \frac{1}{2\pi i}\oint_{\Gamma}r_j(z)h_{\alpha}\, dz\, .
\end{equation}
The $r_j(z)$ are determined by $G_{j_\ell}$ and $R_0(z)$ via formula \eqref{rj}, and since
$G_{j_\ell}$
changes the parity of a polynomial in $\C[y]$ by the factor
$(-1)^{2j_\ell},\, j_\ell \in\hN$ (see Remark \ref{remG_k}), we can conclude
that $r_j(z)$ changes the parity by
$(-1)^{2j}$. Using that the parity of $h_{\alpha}$ is
given by $(-1)^{|\alpha|}$, we obtain $(-1)^{|\alpha|+2j}$ as
parity of $f_{\alpha j}$. By Definition \ref{defskpk} we have
\[
\skpk{f_{\alpha}}{f_{\beta}}
 =\sum_{n\in\hZ}\ep^n
   \sum_{\natop{j,k,\ell\in\hZ}{j+k+\ell=n}}\int_{\R^d}f_{\alpha j}(y)f_{\beta k}(y)
   \o_\ell(y)e^{-\sum_{\nu =1}^d\lambda_{\nu}y_{\nu}^2}\, dy \, .
\]
We shall show that for $2n$ odd (and thus for $n$ half-integer),
each summand vanishes. For fixed $j,k,l$ the integral will vanish
if the entire integrand is odd. According to (\ref{omegaj}) the
parity of $\o_\ell$ is $(-1)^{2\ell}$, the scalar product therefore
vanishes if $(|\alpha| + 2j + |\beta| + 2k + 2\ell)$ is odd. Since by
assumption $\alpha$ and $\beta$ have the same parity, $|\alpha| +
|\beta|$ is even. Thus the integral vanishes if $2(j+k+\ell)= 2n$ is odd, which occurs
if $n$ is half-integer. This shows that
$\skpk{f_{\alpha}}{f_{\beta}}\in\mathcal{K}_1$ and the same is
true for $B_{\alpha\beta}$
by definition. \\
It remains to show the same result for $F^G_{\alpha\beta}$ given
by
\[
\skpk{f_{\alpha}}{Gf_{\beta}} = \sum_{n\in\hZ}\ep^n
   \sum_{\natop{j, k, \ell, r\in\hZ}{j+k+\ell+r=n}}\int_{\R^d}f_{\alpha j}(y)G_rf_{\beta k}(y)
   \o_\ell(y)e^{-\sum_{\nu =1}^d\lambda_{\nu}y_{\nu}^2}\, dy \, .
\]
The operator $G_r$ changes the parity by $(-1)^{2r}$ as already
mentioned, so as before the integral vanishes if $j+k+\ell+r=n$ is
half integer.
\end{proof}

We will now return to our original variable $x = \sqrt{\ep}y$.
Substituting it in equation \eqref{psij} and rearranging with
respect to powers in $\sqrt{\ep}$ yields for $d_{jk}:= \deg \psi_{jk}$ and
$N:=\frac{d_{jk}}{2} - k = \max_{\alpha\in I_E}\frac{|\alpha|}{2}$ to
\begin{align*}
\psi_j(y; \ep) &=
\sum_{k\in\hNnull}\ep^k\psi_{jk}\left(\frac{x}{\sqrt{\ep}}\right)=
\sum_{k\in\frac{\N}{2}}\sum_{\natop{\beta\in\N^d}{0\leq |\beta|\leq d_{jk}}}
\ep^{k-\frac{|\beta|}{2}} \rho_{j,k,\beta} x^\beta  \\
&=: \sum_{\natop{\ell\in\hZ}{\ell\geq -N}} \ep^\ell\hat{u}_{j\ell}(x) \, ,
\end{align*}
where (with $\ell = k - \frac{|\beta|}{2}$)
\begin{equation}
\hat{u}_{j\ell}(x) = \sum_{\natop{\beta\in\N^d}{-2\ell \leq |\beta|}}
\rho_{j,\ell + \frac{|\beta |}{2},\beta}
x^\beta
\end{equation}
and we set
\begin{equation}\label{aj}
\hat{u}_j(x; \ep) := \sum_{\natop{\ell\in\hZ}{\ell\geq -N}}
\ep^\ell \hat{u}_{j\ell}(x)\, .
\end{equation}

We denote by $\mathcal{A}$ the set of formal symbols
$\hat{u}_j$ given by a power series as in \eqref{aj} with
arbitrary $N$. Then $\mathcal{A}$ is a vector space over
$\mathcal{K}_{\frac{1}{2}}$, on which
\begin{equation}\label{HaufA}
\left.e^{\frac{\tilde{\varphi} (x)}{\ep}}\hat{H}'_{\ep}e^{-\frac{\tilde{\varphi} (x)}{\ep}}
\right|_{\mathcal{A}}=: H'_{\ep, \mathcal{A}}
\end{equation}
acts as an operator with eigenfunctions $\hat{u}_j$, where
$\hat{H}_{\ep}$ fulfills Hypothesis \ref{hyp1} and
$\tilde{\varphi}$ is constructed in Hypothesis
\ref{tildevarphihyp}.
The following theorem will summarize these results and give a
condition on the absence of half integer terms in the expansion.

\begin{theo}\label{theo45}
Let $\hat{H}_{\ep}$ satisfy Hypothesis \ref{hyp1}, $\hat{H}'_{\ep}$ be
defined in \eqref{hatHstrich},
and $\tilde{\varphi}$ be the real function described in Hypothesis \ref{hypphi}.
Let $E$ be an eigenvalue with multiplicity $m$ of the harmonic
approximation $G_0$ of $\hat{G}_\ep$ given in \eqref{G_0}.
\ben
\item Then the operator $H'_{\ep, \mathcal{A}}$ defined in \eqref{HaufA} has
an orthonormal system of $m$ eigenfunctions $\hat{u}_{j}$ of
the form \eqref{aj} in $\mathcal{A}$, where the lowest order
monomial in $\hat{u}_{j\ell}\in\C[[x]]$ is of degree $\max\{-2\ell,0\}$.\\
The associated eigenvalues are
\begin{equation}\label{theoev}
\ep E_j(\ep) = \ep \left(E + \sum_{k\in\hN}\ep^kE_{jk}\right)\, .
\end{equation}
\item If $|\alpha|$ is even (resp. odd) for all $\alpha\in I_{E}$,
then all half integer (resp. integer) terms in the expansion
\eqref{aj} vanish.
\een
\end{theo}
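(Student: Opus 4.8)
The plan is to read off Theorem \ref{theo45} from the spectral analysis of the formal operator $G$ on $\mathcal V$ carried out in Section \ref{harmapprox}, transported back to the variable $x=\sqrt\ep\,y$. The bridge is an intertwining identity. By Definition \ref{hypphi}, $U_\ep(\tilde{\varphi})$ is the composition of the rescaling $(\tilde S_\ep f)(y):=f(\sqrt\ep\,y)$, multiplication $M_\ep$ by $e^{\tilde{\varphi}(\sqrt\ep\,y)/\ep}$, and the scalar $\ep^{d/4}$; hence conjugating $e^{\tilde{\varphi}(x)/\ep}\hat{H}'_{\ep}e^{-\tilde{\varphi}(x)/\ep}$ by $\tilde S_\ep$ gives $M_\ep\bigl(\tilde S_\ep\hat{H}'_{\ep}\tilde S_\ep^{-1}\bigr)M_\ep^{-1}$, which by \eqref{defGeps} is exactly $\ep\,\hat{G}_{\ep}$. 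Passing to formal symbols, where $\hat{G}_{\ep}$ induces $G$ via \eqref{GinV} and $e^{\tilde{\varphi}/\ep}\hat{H}'_{\ep}e^{-\tilde{\varphi}/\ep}$ induces $H'_{\ep,\mathcal{A}}$ via \eqref{HaufA}, and observing that $\tilde S_\ep$ is precisely the $\mathcal K_{1/2}$--linear correspondence carrying expansions of type \eqref{psij} to expansions of type \eqref{aj} (the substitution-and-rearrangement performed just before \eqref{aj}), this reads
\[ H'_{\ep,\mathcal A}\;=\;\tilde S_\ep^{-1}\,(\ep\,G)\,\tilde S_\ep\qquad\text{on }\ \mathcal A\,. \]
In particular $e^{\tilde{\varphi}/\ep}\hat{H}'_{\ep}e^{-\tilde{\varphi}/\ep}$ genuinely restricts to an operator $\mathcal A\to\mathcal A$ (since $\ep G:\mathcal V\to\mathcal V$ by \eqref{GinV}), and $\skpk{\cdot}{\cdot}$ pulls back under $\tilde S_\ep$ to the sesquilinear form on $\mathcal A$ with respect to which ``orthonormal'' is meant.

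For part (a), apply this to the objects built in Section \ref{harmapprox}. For the eigenvalue $E$ of $G_0$ of multiplicity $m$, Proposition \ref{Piprop} provides the symmetric rank-$m$ projection $\Pi_E$ on $\mathcal V$ commuting with $G$; picking an orthonormal basis $e_1,\dots,e_m$ of $\ran\Pi_E$, the matrix $M$ of $G|_{\ran\Pi_E}$ is finite and symmetric with entries in $\mathcal K_{1/2}$, and by Theorem \ref{Meigenw} it has $m$ eigenvalues $E_j(\ep)\in\mathcal K_{1/2}$ of the form \eqref{Ej} with orthonormal eigenvectors, hence eigenfunctions $\psi_j\in\ran\Pi_E\subset\mathcal V$ with $G\psi_j=E_j(\ep)\psi_j$, $\skpk{\psi_j}{\psi_k}=\delta_{jk}$, and of the shape \eqref{psij}. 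Setting $\hat u_j:=\tilde S_\ep^{-1}\psi_j$ (the reorganization producing \eqref{aj}), the intertwining gives $H'_{\ep,\mathcal A}\hat u_j=\ep E_j(\ep)\hat u_j$, i.e.\ \eqref{theoev}, while $\langle\hat u_j,\hat u_k\rangle=\skpk{\psi_j}{\psi_k}=\delta_{jk}$. Finally, the assertion that $\hat u_{j\ell}\in\C[[x]]$ has no monomial of degree below $\max\{-2\ell,0\}$ is bookkeeping on \eqref{psij}: the $\ep^\ell$--coefficient of $\psi_j(x/\sqrt\ep;\ep)$ is the sum over $k\in\hNnull$ with $k\geq\max\{0,\ell\}$ of the homogeneous degree-$2(k-\ell)$ part of $\psi_{jk}$, and for such $k$ one has $2(k-\ell)\geq\max\{-2\ell,0\}$, while $\deg\psi_{jk}=\max_{\alpha\in I_E}(|\alpha|+2k)$ bounds $\ell$ from below by $N=\max_{\alpha\in I_E}\tfrac{|\alpha|}{2}$.

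Part (b) combines Proposition \ref{yhalf} with a parity count. Suppose all $\alpha\in I_E$ have the common parity $\pi_0:=|\alpha|\bmod2$. The proof of Proposition \ref{yhalf} shows that then $\skpk{f_\alpha}{f_\beta}$, the entries of $B=F^{-1/2}$, and the $F^G_{\alpha\beta}$ all lie in $\mathcal K_1$ (with $f_\alpha:=\Pi_E h_\alpha$); hence $M$ has entries in $\mathcal K_1$, and by Theorem \ref{Meigenw} both $E_j(\ep)\in\mathcal K_1$ and the eigenvectors of $M$ lie in $(\mathcal K_1)^m$, so $\psi_j=\sum_{\alpha\in I_E}c_\alpha f_\alpha$ with every $c_\alpha\in\mathcal K_1$ (cf.\ \eqref{psilinear}). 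That same proof records that the $\ep^t$--coefficient $f_{\alpha t}$ of $f_\alpha$ is a polynomial of parity $(-1)^{|\alpha|+2t}$. As each $c_\alpha$ carries only integer powers of $\ep$, the $\ep^k$--coefficient $\psi_{jk}$ of $\psi_j$ is a sum of terms $f_{\alpha t}$ with $t$ of the same integer/half-integer type as $k$; since all $|\alpha|$ have parity $\pi_0$, the polynomial $\psi_{jk}$ is even precisely when $k$ is an integer if $\pi_0=0$, and precisely when $k$ is a half-integer if $\pi_0=1$. Now, as in part (a), $\hat u_{j\ell}$ gathers the homogeneous degree-$2(k-\ell)$ parts of the $\psi_{jk}$; this degree is even exactly when $k$ and $\ell$ are of the same type, and matching its parity against that of $\psi_{jk}$ forces, in every admissible case, $\ell$ to be an integer when $\pi_0=0$ and a half-integer when $\pi_0=1$. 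Hence all half-integer (resp.\ integer) terms of \eqref{aj} vanish when all $|\alpha|$, $\alpha\in I_E$, are even (resp.\ odd), which is the claim.

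The conceptual machinery — the intertwining $\tilde S_\ep$, the Riesz-type projection $\Pi_E$, and the algebraic closedness underlying Theorem \ref{Meigenw} — is all already available from Sections \ref{kap22} and \ref{harmapprox}. The only genuine work, and the step most exposed to sign and index slips, is the combined bookkeeping in part (b): tracking at once the $\tfrac12\Z$--grading by powers of $\ep$ and the $\N$--grading by polynomial degree and parity, which are coupled through the substitution $y=x/\sqrt\ep$.
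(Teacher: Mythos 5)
Your proposal is correct and follows essentially the same route as the paper: part (a) is read off from the formal spectral analysis of $G$ in Section \ref{harmapprox} (the projection $\Pi_E$, Theorem \ref{Meigenw}, orthonormalization, and the rescaling $y\mapsto x/\sqrt\ep$ leading to \eqref{aj}), and part (b) combines Proposition \ref{yhalf} with exactly the parity-and-degree count the paper performs after expressing $\psi$ via \eqref{psilinear} with $\mathcal K_1$ coefficients. The only difference is cosmetic: you make the intertwining of $H'_{\ep,\mathcal A}$ with $\ep G$ explicit in (a), where the paper simply refers back to the discussion preceding \eqref{aj}, and in (b) the paper tracks the integer indices $j,\ell$ directly to land on $\ep^{j+\ell-|\alpha|/2}$ rather than phrasing it as a parity-matching constraint on $\ell$; both bookkeepings are equivalent.
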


\begin{proof}

(a): This point is already shown in the discussion succeeding
equation \eqref{aj}.

(b): By equation \eqref{psilinear} and Proposition \ref{yhalf}
together with Theorem \ref{Meigenw} we can write any eigenfunction
$\psi$ as a linear combination of $\Pi_E h_{\alpha}$ with
coefficients in $\mathcal{K}_1$, thus we get explicitly
\[  \psi \left(\frac{x}{\sqrt{\ep}}\right) = \sum_{\alpha\in I_{E}}
    \sum_{\natop{j\in\N_0}{k\in\N_0/2}} \ep^{j+k}
   \lambda_{\alpha j} f_{\alpha k}\left(\frac{x}{\sqrt{\ep}}\right) \, .
\]
As discussed below \eqref{falphaj}, the polynomials $f_{\alpha k}$
are of degree $(|\alpha| + 2k)$ in $y$, thus they have the order
$\ep^{-(k+\frac{|\alpha|}{2})}$ and the parity of $|\alpha| +2k$,
since they consist of monomials of degree $|\alpha|+2k -2\ell$ for
$0\leq 2\ell \leq |\alpha|+2k,\, \ell\in\N$. If we combine
the powers in $\ep$ arising in the sum, we get
$\ep^{j+\ell-\frac{|\alpha|}{2}}$, where $j$ and $\ell$ are both
integer. If $|\alpha|$ is even, the whole exponent is integer, if
it is odd the exponent is half integer. So if one of these
assumptions is true for all $\alpha\in I_{E}$, there remain no
half integer respectively integer terms. Since the transition to
$\hat{u}_j$ is just a reordering, this is also true for
$\hat{u}_j$. The assertion for \eqref{theoev} follows from Proposition \ref{yhalf}.
\end{proof}

\section{Proof of Theorem \ref{theoEjaj}}\label{approxeigen}

We shall now construct the quasimodes of Theorem \ref{theoEjaj}.
For $\hat{u}_j$ given by \eqref{aj}, we can use the Theorem of
Borel with
respect to $x$, to find ${\mathscr C}^\infty$-functions
$\tilde{\hat{u}}_{jl}$ possessing $\hat{u}_{jl}$ as Taylor
series at zero. We define a formal
asymptotic series in a neighborhood $\O'_3$ of $0$ by
\[ \tilde{\hat{u}}_j(x; \ep) := \sum_{\natop{l\in\hZ}{l\geq -N}} \ep^l
\tilde{\hat{u}}_{jl}(x) \; . \] Then
\begin{equation}\label{WKBmitb}
e^{\frac{\tilde{\varphi} (x)}{\ep}}(\hat{H}'_{\ep} - \ep E_j(\ep))
e^{-\frac{\tilde{\varphi} (x)}{\ep}} \tilde{\hat{u}}_{j}(x,\ep
)= b_j(x; \ep)\; ,
\end{equation}
where $b_j(x; \ep) = \sum_{\natop{l\in\hZ}{\ell\geq -N}} \ep^\ell
b_{j\ell}(x)$ has the property, that each $b_{j\ell}\in\Ce^\infty(\Omega'_3)$ vanishes to
infinite order at $x=0$. It remains to show that it is possible to
modify the functions $\tilde{\hat{u}}_{j\ell}$ by uniquely
determined functions $c_{j\ell}$ vanishing at zero to
infinite order such that for the resulting functions
$\tilde{u}_{j\ell} := \tilde{\hat{u}}_{j\ell} - c_{j\ell}$, the
formal series
\begin{equation}\label{formsera}
\tilde{u}_j(x; \ep):= \sum_{\natop{\ell\geq -N}{\ell\in \Z/2}}\ep^\ell
\tilde{u}_{j\ell}(x)
\end{equation}
solves for $x\in\O'_3$ the equation
\begin{equation}\label{nullinomega3}
e^{\frac{\tilde{\varphi} (x)}{\ep}}(\hat{H}'_{\ep} - \ep E_j(\ep))
e^{-\frac{\tilde{\varphi} (x)}{\ep}} \tilde{u}_{j}(x,\ep )=  0\;.
\end{equation}
To this end, we have to show that the equation
\begin{equation}\label{diffc}
e^{\frac{\tilde{\varphi} (x)}{\ep}}(\hat{H}'_{\ep} - \ep E_j(\ep))
e^{-\frac{\tilde{\varphi} (x)}{\ep}} c_{j}(x,\ep )=
b_j(x; \ep)\; .
\end{equation}
has a unique formal power solution
$c_j(x; \ep)\sim \sum_{\ell\geq -N} \ep^\ell c_{j\ell} (x)$ with
coefficients $c_{j\ell}\in \mathscr{C}^\infty(\O'_3)$ vanishing
to infinite order at $x=0$.
By the definition of $\hat{T}'_\ep$ and $\hat{V}'_\ep$ in \eqref{hatTstrich} and
\eqref{hatVstrich}
and the assumptions in Hypothesis \ref{hyp1}, we have (setting $E=E_{j0}$)
\begin{multline*}
e^{\frac{\tilde{\varphi}(x)}{\ep}} \left[ \hat{T}'_\ep + \hat{V}'_\ep
- \ep\,\left(E + \sum_{k\in\N^*/2}\ep^k E_{jk}\right)\right]
e^{-\frac{\tilde{\varphi}(x)}{\ep}}
\sum_{\natop{\ell\geq -N}{\ell\in \Z/2}}\ep^\ell c_{j\ell}(x) \\
=\sum_{\natop{\ell\geq -N}{\ell\in \Z/2}}\ep^\ell \left\{
\sum_{\gamma\in\disk}\left[\sum_{k\in\N} \ep^k a_\gamma^{'(k)}(x; \ep)
e^{\frac{1}{\ep}(\tilde{\varphi}(x)-\tilde{\varphi}(x+\gamma))}c_{j\ell}(x+\gamma; \ep)\right] \right.\\
\left. + \sum_{k\in\N /2}\ep^k \left(V'_k(x) - \ep
E_{jk}\right)c_{j\ell}(x; \ep)\right\}\; .
\end{multline*}
To get the different orders in $\ep$ of the kinetic term, we
expand $a'_\gamma$, $\tilde{\varphi}$ and $c_{j\ell}$ at $x$ and set $\eta
:= \frac{\gamma}{\ep}\in\Z^d$. Taylor expansion gives
\begin{equation}\label{entphi}
 \frac{1}{\ep}\left(\tilde{\varphi}(x)-\tilde{\varphi}(x+\ep\eta)\right) =
-\nabla\tilde{\varphi}(x)\cdot\eta - \frac{\ep}{2}
D^2 \tilde{\varphi}|_x[\eta]^2 -
 \frac{\ep^2}{2}\int_0^1 (1-t)^2 D^3\tilde{\varphi}|_{x+t\ep \eta}[\eta]^3\, dt
\end{equation}
and
\begin{equation}\label{entc}
c_{j\ell}(x+\ep\eta) = c_{j\ell}(x) + \ep\eta\cdot \nabla c_{j\ell}(x) +
\ep^2 \int_0^1(1-t) D^2 c_{j\ell}|_{x+t\ep\eta} [\eta]^2\, dt\; .
\end{equation}
Combining \eqref{entphi} with the
expansion of the exponential function at zero gives
\begin{multline}\label{entephi}
e^{\frac{1}{\ep}(\tilde{\varphi}(x)-\tilde{\varphi}(x+\gamma))} =
e^{-\nabla\tilde{\varphi}(x)\cdot\eta}
 \left(
1-\frac{\ep}{2}D^2 \tilde{\varphi}|_x[\eta]^2 + \frac{\ep^2}{4}
\left(D^2 \tilde{\varphi}|_x[\eta]^2 \right)^2\!\!\!+ O\left(\ep^4\right)\right)\times\\
 \times \left(1 - \frac{\ep^2}{2}\int_0^1 (1-t)^2 D^3\tilde{\varphi}|_{x+t\ep \eta}[\eta]^3
 \, dt + O\left(\ep^4\right)\right) \left( 1+O\left(\ep^3\right)\right)\; .
\end{multline}
The lowest order equation in \eqref{diffc} is that of order $-N$. By the eikonal
equation \eqref{eikonal}, the left hand side of it vanishes and the same argument
applies for the $-N+\frac{1}{2}$ order equation of \eqref{diffc}. The first
non-vanishing term arises from the action of the first order part
of the conjugated operator on $c_{j,-N}(x)$, which is given by
\begin{multline}\label{firstorder}
 \left\{\sum_{\gamma\in\disk}e^{-\frac{1}{\ep}\nabla\tilde{\varphi}(x)\cdot\gamma}
 \left[ a_{\gamma}^{'(0)}(x)
 \left( \frac{1}{\ep} \gamma\cdot\nabla - \frac{1}{2\ep} \skp{\gamma}{ D^2 \tilde{\varphi}|_x
 \gamma}\right) +
a_\gamma^{'(1)}(x)\right]  + V_1(x) - E\right\} c_{j,-N}(x) \\
= b_{j,-N+1} \, .
\end{multline}
This equation takes the form
\begin{equation}\label{allform}
\left(\mathcal{P}(x,\partial_x) + f(x)\right)u(x) = v(x)
\end{equation}
for the differential operator
\begin{equation}\label{mathP}
\mathcal{P} (x,\partial_x) := \sum_{\eta\in\Z^d}
\tilde{a}'_{\eta}(x) e^{-\nabla\tilde{\varphi} (x)\cdot\eta}
   \eta\cdot\nabla \; ,
\end{equation}
which is well defined by the exponential decay of $\tilde{a}'_\eta$ (see Hypothesis
\ref{hyp1}(a)(iv) and \eqref{agammaunep}) and by \eqref{tay8}. The next and all higher order
equations in \eqref{diffc} result from the action of the first order part of the
conjugated operator given in \eqref{firstorder} on the respective
highest order part of $c_j$, which for the $k$-th order is
the term $c_{j,k-1}$. Additionally to the first order
equation, a term is produced by the action of higher orders of the
conjugated operator on lower order parts of $c_j$. Since
these lower order terms are already determined by the preceding
transport equations, this additional part can be treated as an
additional inhomogeneity of \eqref{allform}. Thus all transport
equation take the form \eqref{allform}
with $f, v\in{\mathscr C}^\infty\left(\O'_3\right)$ and $v$
vanishing to infinite order at $x=0$ by the construction of the
formal series \eqref{aj}. The differential operator ${\mathcal P}$
defined in \eqref{mathP} is of the form $\skp{Z}{\nabla}$ for the
vector field $Z(x)=(z_1(x),\ldots,z_d(x))$ given by
\begin{equation}\label{z}
z_\nu (x)= \sum_{\eta\in\Z^d} \tilde{a}'_{\eta}(x)e^{-\nabla\tilde{\varphi} (x)\cdot\eta}
\eta_\nu \;  \; .
\end{equation}
Since $a_\gamma^{'(0)}(x) = a_{-\gamma}^{'(0)}(x)$ (Remark \ref{rem1}(d)) we have
\begin{equation}\label{aeta}
\sum_{\eta\in\Z^d} \tilde{a}'_\eta(x) \eta_\nu = 0
\quad\text{for all}\quad \nu=1,\ldots, d\; ,
\end{equation}
thus by \eqref{z} $x=0$ is a singular point of the
vector field $Z$. In order to linearize at zero, we compute
\begin{align}
\partial_{x_\mu}|_0 z_\nu &= \sum_{\eta\in\Z^d} \left[ (\partial_{x_\mu}|_0 \tilde{a}'_{\eta})
 e^{-\nabla\tilde{\varphi} (0)\cdot\eta} \eta_\nu - \tilde{a}'_{\eta}(0)
 e^{-\nabla\tilde{\varphi} (0)\cdot\eta}
 \partial_{x_\mu}|_0(\skp{\nabla\tilde{\varphi}}{\eta}) \eta_\nu \right] \nonumber\\
 &=\partial_{x_\mu}|_0 \sum_{\eta\in\Z^d}\tilde{a}'_{\eta}\eta_\nu -
\sum_{\eta\in\Z^d}\tilde{a}'_{\eta}(0) \lambda_\mu \eta_\mu \eta\nu \, ,\label{diffz}
  \end{align}
where for the second equation we used that for $x\in\O$ the phase function
$\tilde{\varphi}$ is given
by \eqref{varphi} and thus $\nabla\tilde{\varphi} (0) = 0$ and
$\partial_{x_\mu}|_0\skp{\nabla\tilde{\varphi}}{\eta} =
\lambda_\mu\eta_\mu $.
By \eqref{aeta} the first term on the right hand side of \eqref{diffz}
vanishes and therefore
\[ \partial_{x_\mu}|_0 z_\nu = - \sum_{\eta\in\Z^d} \tilde{a}'_{\eta}(0)
\lambda_\mu\eta_\mu\eta_\nu \; .
\]
By \eqref{Bnumu} and \eqref{tnullstrich} we get
\[ - \sum_{\eta\in\Z^d} \tilde{a}'_{\eta}(0) \eta_\mu\eta_\nu \lambda_\mu = \left\{
\begin{array}{cl} 2\lambda_\mu > 0 & \mbox{for}\quad \nu=\mu \\
0 & \mbox{for}\quad\nu\neq\mu \; .\end{array}\right. \]
Therefore
the linearization of $Z$ at $0$ is $Z_0:=(z_{10},\ldots,z_{d0})$
with $z_{\nu0} (x) = 2\lambda_\nu x_\nu$ and the corresponding
differential operator is given by
\[ {\mathcal P}_0 (x,\partial_x) = \sum_{\nu = 1}^d 2\lambda_\nu x_\nu \partial_{x_\nu}  \]
with $\lambda_\nu > 0$ for $\nu = 1,\ldots,d$. By Dimassi-Sj\"ostrand \cite{dima}
(Proposition 3.5),
the differential equation \eqref{allform} has a unique ${\mathscr C}^\infty$-solution in a
sufficiently small star-shaped neighborhood $\O'_3$, vanishing to infinite order at $x=0$.
This gives the required solution of \eqref{diffc}, and thus defines $\tilde{u}_j$ in
\eqref{formsera} solving \eqref{nullinomega3} in $\O'_3$.

Again by a Borel procedure, but now with respect to $\ep$, we can
find a function $u'_j\in{\mathscr C}^\infty\left(\O'_3\times
[0,\ep_0)\right)$ representing the asymptotic sum
$\tilde{u}_j(x; \ep)$ given in \eqref{formsera}, which we denote by
\begin{equation}
 u'_j(x; \ep) \sim \sum_{\natop{l\in\hZ}{l\geq -N}} \ep^l \tilde{u}_{jl}(x) \; .
\end{equation}
In order to get a function, which is defined on $\R^d\times
[0,\ep_0)$, we multiply with a cut-off function
$k\in\Ce^\infty_0(\R^d)$, with $\supp k\subset\O'_3$ and such that
for some $\O_3$ with $\overline{\O_3}\subset \O'_K3$ we have
$k(x)=1$ for $x\in\O_3$. We denote the resulting function
$u_j\in\Ce_0^\infty\left(\R^d\times [0,\ep_0)\right)$ by
\[
u_j(x; \ep) := k(x) u'_j(x; \ep) \sim k(x)
\sum_{\natop{\ell\in\hZ}{\ell\geq -N}} \ep^\ell \tilde{u}_{j\ell}(x)=:
\sum_{\natop{\ell\in\hZ}{\ell\geq -N}} \ep^\ell u_{j\ell}(x)\; ,
\]
where $u_{j\ell} := k \tilde{u}_{j\ell}$.
Analogously we define a real function $E_j(\ep)$ as an asymptotic
sum
\[
E_j(\ep) \sim E + \sum_{k\in\frac{\N^*}{2}} \ep^k E_{jk}\; .
\]
We have therefore proven (a), the main part of the Theorem \ref{theoEjaj}.

The approximate orthonormality \eqref{ortho} follows from the orhonormality of the expansion
$\hat{u}_j$ given in \eqref{aj} proven
in Theorem \ref{theo45}, combined with a standard estimate of Laplace type
($ \int e^{-\frac{x^2}{\ep}} O(x^\infty)\, dx = O(x^\infty)$). The estimate \eqref{ortho2}
for the restricted approximate eigenfunctions
follows from \eqref{ortho} together with Lemma 3.4 in \cite{kleinro2}.

The statement on the absence of half-integer terms in \eqref{Erep} follows from Proposition
\ref{yhalf}, the
statement on the absence of half integer or integer terms respectively in \eqref{arep} is a
direct consequence of Theorem \ref{theo45}.

To make the step from $\hat{H}'_\ep$ acting on
$\Ce_0^\infty\left(\R^d\right)$ to the operator $H_\ep$ acting on
lattice functions ${\mathcal K}\left(\disk\right)$, we use that
$\mathscr{G}_{x_0}$ is invariant under
the action of $\hat{H}_\ep$ as discussed above \eqref{Gxnull}.
Thus the restriction to the lattice commutes with $\hat{H}_\ep$
and applying the restriction operator
$r_{\mathscr{G}_{x_0}}$ to \eqref{eigenwg} yields
\eqref{eigenwgdisk}.

\end{document}